\newtheorem{theorem}{Theorem}[section]
\newtheorem{corollary}[theorem]{Corollary}
\newtheorem{lemma}[theorem]{Lemma}
\newtheorem{proposition}[theorem]{Proposition}
\newcommand{\nop}[1]{}
\newcommand{\mms}{\mathit{mms}}
\newcommand{\alloc}{\mathit{allocate}}
\newcommand{\cA}{\mathcal{A}}
\newcommand{\cB}{\mathcal{B}}
\newcommand{\cC}{\mathcal{C}}
\newcommand{\cH}{\mathcal{H}}
\newcommand{\cZ}{\mathcal{Z}}
\newcommand{\cY}{\mathcal{Y}}
\newcommand{\cX}{\mathcal{X}}
\newcommand{\green}[1]{{\color{olive}{#1}}}
\newcommand{\red}[1]{{\color{red}{#1}}}
\begin{document}
\title[Maximin Share Allocations on Cycles]{Maximin Share Allocations  on Cycles\footnote{A preliminary version of the paper appeared in the proceedings of IJCAI 2018.}}

\author{Zbigniew Lonc}
\address{Warsaw University	of Technology, Poland}
\email{z.lonc@mini.pw.edu.pl}
\author{Miroslaw Truszczynski}
\address{University of Kentucky, USA}
\email{mirek@cs.uky.edu}

\begin{abstract}
The problem of fair division of indivisible goods is a fundamental problem 
of social choice. Recently, the problem was extended to the case when
goods form a graph and the goal is to allocate goods to agents so that each
agent's bundle forms a connected subgraph. For the maximin share fairness
criterion researchers proved that if goods form a tree, allocations
offering each agent a bundle of at least her maximin share value always
exist. Moreover, they can be found in polynomial time. We consider here the
problem of maximin share allocations of goods on a cycle. Despite the
simplicity of the graph, the problem turns out to be significantly harder than
its tree version. We present cases when maximin share allocations of goods
on cycles exist and provide results on allocations guaranteeing each agent a
certain portion of her maximin share. We also study algorithms for computing
maximin share allocations of goods on cycles.
\end{abstract}

\maketitle

\section{Introduction}

Fair allocation of indivisible goods is a fundamental problem of social choice 
\cite{BramsT1996,BouveretCM15}. It assumes a set of elements, referred to as
\emph{goods}, and a collection of agents each with her own utility function on
the sets, or \emph{bundles}, of goods. The utility functions are commonly assumed to be
additive and so it is enough to specify their values on individual goods only.
The objective is to assign to agents disjoint subsets of goods in a way that
meets some fairness criteria. Among the most commonly studied ones are
\emph{proportionality} and \emph{envy-freeness}, adapted to the case of
indivisible goods from the problem of fair allocation of divisible goods
or \emph{cake-cutting} \cite{BramsT1996,Procaccia15}, as well as recently
proposed
\emph{maximin} and \emph{minimax} share \cite{Budish2011,BouveretL16}. For
each of the criteria, it is of interest to identify classes of instances when
fair allocations exist, to establish the complexity of deciding the existence
of fair allocations, and to design algorithms for computing them.

In this paper, we focus on the maximin share criterion \cite{Budish2011}.
It is a relaxation of envy-freeness and proportionality, and has a natural
interpretation in terms of some allocation protocols. In a few years since
it was first proposed it has already received substantial attention.
The criterion
turns out to be highly non-trivial. First, while it is easy to see that
envy-free and proportional allocations do not always exist, it is not at all
clear whether the same is true for the less restrictive maximin criterion.
We now know it is. But it took a few years before the first examples
showing that maximin share allocations are not guaranteed to exist were
found \cite{ProcacciaW14}. Moreover, they turned to be quite intricate.
Further, the complexity of deciding the existence of maximin share allocations
has not yet been determined \cite{BouveretL16}. To get around the difficulty
of constructing maximin allocations, researchers proposed to relax the maximin
share criterion by requiring that the value of each agent's share in an
allocation be at least equal to some positive fraction of the maximin share.
Procaccia and Wang \cite{ProcacciaW14} proved that an allocation guaranteeing
agents at least 2/3 of their maximin share always exists, and that it can be
found in polynomial time if the number of agents is fixed (not part of input).
Both the result and the algorithm are based on deep combinatorial insights.
Building on that work, Amanatidis, Markakis, Nikzad and
Saberi \cite{AmanatidisMNS15} proved that for every constant $\epsilon$,
$0<\epsilon<2/3$, there is a polynomial-time algorithm finding an allocation
guaranteeing to each agent at least $(2/3 -\epsilon)$ of their maximin share
(with the number of agents a part of the input). Ghodsi, Hajiaghayi,
Seddighin, Seddighin and Yami \cite{GhodsiHSSY18} proved that in the results
by Procaccia and Wang and by Amanatidis \textit{et al.}, the constant
2/3 can be replaced with 3/4.

We study maximin share allocations of indivisible goods in the setting proposed
by Bouveret, Cechl\'arov\'a, Elkind, Igarashi and Peters \cite{BouveretCEIP17}. In the original problem, there are no restrictions
on sets of goods that can be allocated to agents. This ignores important
practical constraints that may make some sets highly undesirable. For
instance, goods may be rooms and labs in a building to be allocated to
research groups \cite{BouveretCEIP17}, or plots of land to be consolidated
\cite{KingB1982}. In such cases, legal sets of goods that could be allocated
to an agent might be required to form connected subgraphs in some graph
describing the neighborhood relation among goods (offices spanning segments
of a hall, plots forming contiguous areas of land).

Bouveret \emph{et al.} \cite{BouveretCEIP17}
studied envy-free, proportional
and maximin share allocations for that setting obtaining several interesting
complexity and algorithmic results. Our paper extends their study for the
maximin share criterion. In a striking positive result,
Bouveret \emph{et al.} \cite{BouveretCEIP17}
proved that maximin share allocations of goods on
trees always exist and can be found in polynomial time. In our work we look
beyond trees and show that as soon as the underlying graph has a single cycle,
the picture becomes much more complicated. Our main contributions are as
follows.

\smallskip
\noindent
1. We show that for goods on a cycle the maximin share value for an agent
can be computed in polynomial time. In two cases, when $m\leq 2n$ and
when agents can be grouped into a fixed number of types (agents are of the same
type if they have the same utility function), this allows us to design
polynomial time algorithms for computing maximin share allocations of $m$
goods (on cycles) to $n$ agents or determining that such allocations do not
exist.

\smallskip
\noindent
2. We show that deciding the existence of maximin share allocations of goods
on an arbitrary graph is in the class $\Delta_2^P$. For complete graphs
(the setting equivalent to the original one) this result improves the bound
given by
Bouveret and Lema{\^{\i}}tre \cite{BouveretL16}. %Moreover,
We further
improve on this upper bound for cycles and more generally, unicyclic graphs
by showing that for such graphs the existence of a maximin share allocation
is in NP.

\smallskip
\noindent
3. We obtain approximation results on the existence of allocations of goods
on a cycle that \emph{guarantee} all agents a specified fraction of their
maximin share value. While it is easy to show that for any number of agents
there are allocations guaranteeing each agent at least $1/2$ their maximin
share, improving on the guarantee coefficient of $1/2$ is a non-trivial
problem. We show that it can be improved to $(\sqrt{5}-1)/2$ ($>0.618$).
Further improvements are possible if we limit the number of agents or the
number of types of agents. In particular, we show that for the three-agent
case, there are allocations guaranteeing each agent $5/6$ of their maximin
share; for an arbitrary number of agents of up to three types there
are allocations guaranteeing each agent $3/4$ of their maximin share; and
for any number of agents of $t\geq 4$ types there are allocations
guaranteeing each agent $t/(2t-2)$ of their maximin share. In each case,
these allocations can be found in polynomial time. Moreover, the constants 5/6
and 3/4 for the cases of three agents and any number of agents of up to three types,
respectively, are best possible, that is, they cannot be replaced with any
larger ones.

\section{Definitions and Basic Observations}

A \emph{utility} function on a set $V$ of \emph{goods} (items) is a function
assigning \emph{non-negative} reals (utilities) to goods in $V$.
%\red{such that at least one good is assigned a positive value (we comment
%on this assumption below).}
We extend utility functions to subsets of $V$ by assuming additivity.

Following Bouveret \emph{et al.} \cite{BouveretCEIP17}, we consider the
case when goods form nodes of a certain \emph{connected} graph $G=(V,E)$.
We adopt the assumption of the connectedness of the graph for the
entire
paper and do not mention it explicitly again. In particular, whenever we use
the term \emph{graph} we mean a \emph{connected graph}. We write $V(G)$ and
$E(G)$ for the sets of nodes and edges of $G$ and refer to $G$ as the graph of
goods. Given a graph $G$ of goods, we will often refer to utility functions on
(the set) $V(G)$ as utility functions on (the graph) $G$.

Let $V=V(G)$ be a set of goods. A \emph{$G$-bundle} is a subset of $V$
that induces in $G$ a connected subgraph. A \emph{$(G,n)$-split} is a
sequence $P_1,\ldots, P_n$ of pairwise disjoint $G$-bundles such that
\[
\bigcup_{i=1}^n P_i =V.%
\]
We use the term \emph{split} rather than partition as we allow bundles to
be empty. When $G$ or $n$ are clear from the context we drop them from the
notation and speak about bundles and splits (occasionally, $G$-splits and
$n$-splits).

Let $G$ be a graph of goods, $u$ a utility function on $G$, $q$ a
non-negative real, and $n$ a positive integer. We call a
split \emph{$q$-strong} if every
bundle in the split has value at least $q$ under $u$. The \emph{maximin share}
for $G$, $u$ and $n$, written $\mms^{(n)}(G,u)$, is defined by setting
\[
\mms^{(n)}(G,u) = \max\{q: \mbox{there is a $q$-strong $n$-split of $G$}\}.
\]
An equivalent definition is given by
\[
\mms^{(n)}(G,u) = \max_{P_1,\ldots,P_n} \min_{i=1,\ldots,n} u(P_i),
\]
where the maximum is taken over all $n$-splits $P_1,\ldots, P_n$ of $G$.
A split for which the maximum is attained is a \emph{maximin share split}
or an \emph{mms-split} for $n$ and $u$ or for $n$ and an agent with a utility
function $u$. We often leave $n$ and $G$ implicit when they are clearly
determined by the context. By the definition, for every bundle $P$ in an
mms-split for $n$ and $u$, $u(P) \geq \mms^{(n)}(G,u)$. Similarly as above,
when $G$ and $n$ are clear from the context, we leave them out and write
$\mms(u)$ for $\mms^{(n)}(G,u)$. When considering an agent $i$ with a
utility function $u_i$ we write $\mms(i)$ for $\mms(u_i)$.

Let $G$ be a graph of goods.
%and $N=\{1,\ldots,n\}$ a set of agents, each with
%her own utility function $u$ on $V(G)$.
A \emph{$(G,n)$-allocation} is an assignment of pairwise disjoint $G$-bundles
of goods to $n$ agents $1,2,\ldots,n$ so that all goods are assigned. Clearly,
$(G,n)$-allocations can be represented by $(G,n)$-splits, where we understand
that the $i$th bundle in the split is the bundle assigned to agent $i$.
Let $u_i$ be the utility function of agent $i$, $1\leq i\leq n$.
A $(G,n)$-allocation $P_1,\ldots, P_n$ is a \emph{maximin share} allocation,
or an \emph{mms-allocation}, if for every $i=1,\ldots,n$ we have $u_i(P_i)\geq
\mms^{(n)}(G,u_i)$.

%\red{It is clear that agents with all-0's utility functions can be ignored
%when considering the existence of mms-allocations or designing algorithms
%to compute them. Indeed, each such agent can be assigned an empty bundle of
%goods and the problem can be reduced in this way to the case when each agent
%has a positive utility for at least one of the goods. Therefore, from now on
%\emph{we will only consider non-all-0's utility functions}.}

%\begin{tikzpicture}
%\begin{scope}[every node/.style={circle,thick,draw}]
%\colorlet{Gray}{white!50!black!50}
    %\node (A) at (0,0) {$A$};
    %\node (B) at (0,3) {B};
    %\node (C) at (2.5,4) {C};
    %\node (D) at (2.5,1) {D};
    %\node (E) at (2.5,-3) {E};
    %\node (F) at (5,3) {F} ;
%\end{scope}
%
%\begin{scope}[>={Stealth[black]},
              %every node/.style={fill=white,circle},
              %every edge/.style={draw=red,very thick}]
    %\path [->] (A) edge (B);
    %\path [-] (B) edge node {$3$} (C);
    %\path [->] (A) edge node {$4$} (D);
    %\path [->] (D) edge node {$3$} (C);
    %\path [->] (A) edge node {$3$} (E);
    %\path [->] (D) edge node {$3$} (E);
    %\path [->] (D) edge node {$3$} (F);
    %\path [->] (C) edge node {$5$} (F);
    %\path [->] (E) edge node {$8$} (F);
    %\path [->] (B) edge[bend right=90] node {$1$} (E);
%\end{scope}
%\end{tikzpicture}

To illustrate the concepts introduced above, let us consider the graph
in Figure \ref{fig-ex}. We will call this graph $G$. The graph $G$ defines
the set of goods $\{v_1,\ldots, v_8\}$ and their adjacency structure.
The table in the figure shows three utility
functions $u_1$, $u_2$ and $u_3$ on the set of goods. The values of the
utility function $u_1$ are also shown by the corresponding goods in the
graph (it will facilitate our discussion below).

\begin{figure}[!ht]
\centering
\begin{subfigure}[b]{0.5\textwidth}
\centering
\begin{tikzpicture}
\begin{scope}[every node/.style={circle,thick,draw}]
    \node (A) at (0,1) {$v_1$};
    \node (B) at (1,1) {$v_2$};
    \node (C) at (2,1) {$v_3$};
    \node (D) at (3,1) {$v_4$};
    \node (A1) at (0,0) {$v_8$};
    \node (B1) at (1,0) {$v_7$};
    \node (C1) at (2,0) {$v_6$};
    \node (D1) at (3,0) {$v_5$};
\end{scope}

\begin{scope}%[every node/.style={circle,thick,draw}]
    \node at (0.3,1.5) {$3$};
    \node at (1.3,1.5) {$1$};
    \node at (2.3,1.5) {$1$};
    \node at (3.3,1.5) {$4$};
    \node at (0.3,-0.5) {$4$};
    \node at (1.3,-0.5) {$1$};
    \node at (2.3,-0.5) {$1$};
    \node at (3.3,-0.5) {$3$};
\end{scope}

\begin{scope}[>={Stealth[black]},
              every node/.style={fill=Gray,circle},
              every edge/.style={draw=black,thick}]
              %every edge/.style={draw=red,very thick}]
    \path [-] (A) edge (B);
    \path [-] (B) edge (C);
    \path [-] (C) edge (D);
    \path [-] (D) edge (D1);
    \path [-] (D1) edge (C1);
    \path [-] (C1) edge (B1);
    \path [-] (B1) edge (A1);
    \path [-] (A1) edge (A);
    \path [-] (B1) edge (B);
    \path [-] (C1) edge (C);
\end{scope}
\end{tikzpicture}
\end{subfigure}
\ \\
\ \\
\ \\
\begin{subfigure}[b]{0.5\textwidth}
\centering
\begin{tabular}{|l|c|c|c|c|c|c|c|c|}
  \hline
 & $v_1$ & $v_2$ & $v_3$ & $v_4$ & $v_5$ & $v_6$ & $v_7$ & $v_8$ \\
  \hline
$u_1$ & 3 & 1 & 1 & 4 & 3 & 1 & 1 & 4 \\
  \hline
$u_2$ & 2 & 2 & 0 & 3 & 1 & 3 & 1 & 3 \\
  \hline
$u_3$ & 1 & 3 & 2 & 3 & 0 & 3 & 2 & 3 \\
  \hline
\end{tabular}
\ \\
\ \\
\end{subfigure}
\qquad\
\caption{A graph of goods and three utility functions.}
\label{fig-ex}
\end{figure}

In our example, the set $\{v_1,v_2,v_6,v_7\}$ of goods is a
$G$-bundle (or, simply a bundle, as $G$ is clear). Indeed, the subgraph
of $G$ induced by $\{v_1,v_2,v_6,v_7\}$ is connected. On the other hand,
the set $\{v_1,v_2,v_5,v_6\}$ of goods is not a bundle as the corresponding
induced graph is not connected. Further, the sequence $\{v_1,v_7,v_8\},
\{v_2,v_3\},\{v_4,v_5,v_6\}$ is a $(G,3)$-split (or simply a 3-split).
Indeed, all sets in the split are bundles. It is also easy to see that
the sequence
$\{v_1,v_7,v_8\},\{v_2,v_4\},\{v_3,v_5,v_6\}$ is not a split as the set
$\{v_2,v_4\}$ is not a bundle.

Let us now consider the utility function $u_1$. The total utility
of all goods under $u_1$ is 18. Further, it is easy to see that 3-splits
into bundles of value 6 (under $u_1$) do not exist. To this end, we
note that $v_1$ and $v_8$ cannot be in the same bundle valued at 6 as their
total value is 7. Thus, the bundle containing $v_8$ must be either
$\{v_6,v_7,v_8\}$ or $\{v_2,v_7,v_8\}$. In each case the value of the bundle
containing $v_1$ is not 6. Thus, $\mms^{(3)}(G,u_1) < 6$. On the other hand,
the sequence $\{v_1,v_2,v_3\},\{v_4,v_5\},\{v_6,v_7,v_8\}$
is a 3-split with its bundles having values 5, 7 and 6, respectively. Thus,
$\mms^{(3)}(G,u_1) = 5$.

Reasoning in a similar way one can show that maximin share values for the
two other utility functions are also 5, that is, $\mms^{(3)}(G,u_2) =
\mms^{(3)}(G,u_3) = 5$. Let us now observe that the sequence $\{v_1,v_7,v_8\},
\{v_4,v_5,v_6\},\{v_2,v_3\}$ is a 3-split (indeed, its components are
bundles). Moreover, $u_1(\{v_1,v_7,v_8\}) = 8$, $u_2(\{v_4,v_5,v_6\})=7$
and $u_3(\{v_2,v_3\})= 5$. Thus, this 3-split defines an mms-allocation
of goods on $G$ to three agents with the utility functions $u_1$, $u_2$
and $u_3$. Later, we will also see examples when mms-allocations of goods on
graphs (specifically, cycles) do not exist.

We now are ready to present a few basic results on the problem of the
existence of mms-allocations. Two agents are of the \emph{same type} if
they have the same utility functions. The maximin share allocation problem
is easy when all but one agent are of the same type.

\begin{proposition}\label{prop3}
Let $G=(V,E)$ be a graph of goods. If we have $n$ agents and at most one
of them is of a different type than the others, then an mms-allocation exists.
\end{proposition}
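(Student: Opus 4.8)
The plan is to reduce everything to a single mms-split for the \emph{common} utility function and then hand one of its bundles to the odd agent out. Assume without loss of generality that agents $2,\ldots,n$ all share one utility function $u$ and that agent $1$ has utility function $u_1$; if all $n$ agents have the same utility function, any of its mms-splits is already an mms-allocation, so there is nothing to prove. Fix an mms-split $R_1,\ldots,R_n$ for $n$ and $u$, so that $u(R_i)\ge \mms^{(n)}(G,u)$ for every $i$. Since $G$ is connected each $R_i$ is a $G$-bundle, and any reassignment of these $n$ bundles to the $n$ agents yields a legal $(G,n)$-allocation; connectivity therefore never becomes an issue.

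The only thing to observe is a pigeonhole bound on $u_1$. In any $(G,n)$-split the minimum bundle value is at most the average bundle value, so $\mms^{(n)}(G,u_1)\le u_1(V)/n$, i.e. $u_1(V)\ge n\cdot \mms^{(n)}(G,u_1)$. Applying this to $R_1,\ldots,R_n$, which by definition covers $V$, gives $\sum_{i=1}^n u_1(R_i)=u_1(V)\ge n\cdot\mms^{(n)}(G,u_1)$, so there is an index $j$ with $u_1(R_j)\ge \mms^{(n)}(G,u_1)$. Now assign $R_j$ to agent $1$ and distribute the remaining $n-1$ bundles $\{R_i : i\ne j\}$ arbitrarily among agents $2,\ldots,n$. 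Agent $1$ then receives value $u_1(R_j)\ge \mms^{(n)}(G,u_1)$, and each agent $i\in\{2,\ldots,n\}$ receives some $R_k$ with $u(R_k)\ge \mms^{(n)}(G,u)=\mms^{(n)}(G,u_i)$, so every agent gets at least her maximin share.

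There is essentially no hard step here; the point worth flagging is that one must start from the common agents' mms-split rather than agent $1$'s — this is exactly what makes the leftover bundles automatically clear the threshold $\mms^{(n)}(G,u)$ for the $n-1$ identical agents — and combine this with the averaging bound $u_1(V)\ge n\cdot\mms^{(n)}(G,u_1)$. Note also that the maximin share is always computed with respect to the full agent count $n$, which is why deleting one bundle from an $n$-split still leaves $n-1$ bundles that suffice for the common agents.
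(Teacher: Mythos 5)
Your proof is correct and follows essentially the same route as the paper: take an mms-split for the common utility function, give the odd agent out a bundle whose value for her is at least the average (hence at least her maximin share), and distribute the remaining bundles arbitrarily among the identical agents. The paper phrases the key step as picking the bundle most valuable under the distinct agent's utility and bounding it by $\frac{\sum_{v\in V}u'(v)}{n}\geq \mms(u')$, which is exactly your pigeonhole argument.
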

\noindent

\begin{proof}
The result is obvious for $n=1$. Thus, we will assume $n\geq 2$. Let $u$ be
the utility function of agents $1,\ldots, n-1$ and $u'$ a utility function
of $n$. Let $\Pi$ be an mms-split for $u$, and $P$ a bundle of $\Pi$
most valuable under $u'$. Then, $u'(P)\geq \frac{\sum_{v\in V}u'(v)}{n} \geq
\mms(u')$. Assign $P$ to agent $n$. Next, allocate the remaining parts
of $\Pi$ to agents $1,\ldots,n-1$ in an arbitrary way. Since for each bundle
$Q$ in $\Pi$, $u(Q)\geq \mms(u)$, the resulting allocation is an mms-allocation.
\end{proof}

In particular, this result applies to the case of two agents.

\begin{corollary}\label{cor_2_agents}
Mms-allocations of goods on a graph to two agents always exist.
$\hfill\Box$
\end{corollary}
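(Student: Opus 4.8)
The plan is to obtain this as an immediate specialization of Proposition~\ref{prop3}. With only $n=2$ agents, say with utility functions $u_1$ and $u_2$, the hypothesis of Proposition~\ref{prop3} is automatically met: if $u_1=u_2$ then no agent is of a different type than the rest, and if $u_1\neq u_2$ then exactly one of the two (say agent $2$) is of a different type than agent $1$. Either way, at most one of the two agents differs in type from the others, so Proposition~\ref{prop3} applies and produces an mms-allocation.

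For completeness one can also spell out the underlying construction in this special case. Take an mms-split $\{P,Q\}$ for agent $1$, so that $u_1(P)\geq\mms(u_1)$ and $u_1(Q)\geq\mms(u_1)$. Give agent $2$ whichever of $P,Q$ has the larger value under $u_2$; that bundle has value at least $\tfrac12\bigl(u_2(P)+u_2(Q)\bigr)=\tfrac12\sum_{v\in V}u_2(v)\geq\mms^{(2)}(G,u_2)$. Assign the remaining bundle to agent $1$; its value under $u_1$ is at least $\mms(u_1)$. Hence both agents receive at least their maximin share, and the allocation is an mms-allocation.

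There is essentially no obstacle here: the only point to observe is the trivial fact that a two-agent instance always falls under the ``all but one agent of the same type'' case covered by Proposition~\ref{prop3}, so the corollary follows with no further work.
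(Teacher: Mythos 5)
Your proposal is correct and matches the paper exactly: the corollary is stated there as an immediate specialization of Proposition~\ref{prop3} to $n=2$, with no further argument needed. Your ``for completeness'' construction is just the proof of Proposition~\ref{prop3} instantiated at $n=2$, and the key inequality $u_2(P)\geq\tfrac12\sum_{v\in V}u_2(v)\geq\mms^{(2)}(G,u_2)$ is the same one used there.
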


%\blue{
Let $G=(V,E)$ be a graph of goods. An agent with a utility function $u$
on $G$ (a utility function $u$ on $G$) is \emph{$n$-proportional}
if
\[
mms^{(n)}(G,u)=\frac{\sum_{v\in V} u(v)}{n}.
\]
We often omit $n$ from the notation if it is clear from the context. A set
$\{1,\ldots,n\}$ of agents (a set $\{u_1,\ldots,u_n\}$ of utility functions)
is \emph{proportional} if every agent $i$ (utility function $u_i$) is
$n$-proportional.

%Let $G=(V,E)$ be a graph of goods, and $u_1,\ldots,u_n$ be the utility
%functions of agents $1,\ldots, n$. An agent $i$ is \emph{proportional}
%if
%\[
%mms(i)=\frac{\sum_{v\in V} u_i(v)}{n}.
%\]
%The set of agents $\{1,\ldots,
%n\}$ is proportional if every agent $i$ is proportional. We also use the
%term ``proportional''
%%when speaking about the corresponding
%for utility functions and sets of utility functions.
%\blue{
An agent (a utility function) is \emph{$n$-regular} if it is
$n$-proportional and its maximin share (with respect to $n$) is 1.
%We note that the maximin share of an $n$-regular agent is 0 if and only if
%its utility function is an all-0 function. Finally,
A collection of $n$ agents (utility functions) is \emph{regular} if every
agent in the collection is $n$-regular. It is clear that for every agent in
an $n$-element regular collection of agents, the total value of all goods
for that agent is $n$. %}

%\red{A collection of $n$ agents (utility functions) is \emph{regular}
%if every agent in the collection is proportional and the total utility of
%all goods under every function equals $n$ (the number
%of agents). It is clear that for every regular collection of agents,
%the maximin share values for each agent are equal to 1.}

Let $c$ be a positive real. A bundle $P\subseteq V$ is
\emph{$c$-sufficient} for an agent $i$ if $u_i(P)\geq c\cdot \mms(i)$. An
allocation $\Pi=P_1,\ldots,P_n$ is \emph{$c$-sufficient} if for every
$i=1,\ldots,n$, $P_i$ is $c$-sufficient for $i$. Clearly, a $1$-sufficient
allocation is an mms-allocation.
The next result shows that when studying the existence of $c$-sufficient
allocations (and so, in particular, mms-allocations)
one can restrict considerations to the case when all agents are regular.
%it is enough to focus on situations when the collection of agents is
%proportional.
We use it to prove Theorem \ref{th1} later on in this section but its full
power becomes clear in Section \ref{approx}.

\begin{proposition}\label{prop2a}
Let $G$ be a graph of goods and $c$ a positive real such that for every
regular collection of $n$ agents (for every regular collection
of $n$ agents of at most $t$-types) a $c$-sufficient allocation exists.
Then a $c$-sufficient allocation exists for every collection of arbitrary
$n$ agents (for every collection of arbitrary $n$ agents of at
most $t$ types).
\end{proposition}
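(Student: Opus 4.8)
The plan is to transform an arbitrary collection $u_1,\dots,u_n$ of agents into a \emph{regular} one $u_1',\dots,u_n'$ in such a way that (i) every $c$-sufficient allocation for $u_1',\dots,u_n'$ is also $c$-sufficient for $u_1,\dots,u_n$, and (ii) the number of types does not increase; the conclusion then follows immediately from the hypothesis applied to $u_1',\dots,u_n'$. First I would dispose of \emph{degenerate} agents, those with $\mms(i)=0$: for such an agent the requirement $u_i(P)\ge 0=c\cdot\mms(i)$ is satisfied by every bundle, so if all agents are degenerate, the split giving all of $V(G)$ (which is a bundle, since $G$ is connected) to agent $1$ and $\emptyset$ to the rest is already $c$-sufficient and there is nothing to prove. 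Otherwise fix a non-degenerate agent $i_0$, and plan to assign to every degenerate agent the regularized utility of $u_{i_0}$ constructed below; since those agents' constraints are vacuous this is harmless.

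The core step, carried out for each non-degenerate agent $i$, is to build an $n$-regular utility function $u_i'$ with $u_i'(v)\le u_i(v)/\mms(i)$ for every good $v$. Rescaling, $w_i:=u_i/\mms(i)$ has $\mms^{(n)}(G,w_i)=1$; but since the maximin share is at most the average value, $\sum_v w_i(v)\ge n$, so $w_i$ need not be proportional yet. Fix, canonically (say, the lexicographically least), an mms-split $Q_1,\dots,Q_n$ of $G$ for $w_i$; since it is an mms-split with value $1$, each $s_j:=w_i(Q_j)$ satisfies $s_j\ge 1$, and I set $u_i'(v):=w_i(v)/s_j$ for the unique $j$ with $v\in Q_j$. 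Then each $Q_j$ has $u_i'$-value exactly $1$, so $\sum_v u_i'(v)=n$ and $Q_1,\dots,Q_n$ witnesses $\mms^{(n)}(G,u_i')\ge 1$; combined with the average-value bound $\mms^{(n)}(G,u_i')\le \frac1n\sum_v u_i'(v)=1$, this gives $\mms^{(n)}(G,u_i')=1$ together with proportionality, i.e. $u_i'$ is $n$-regular. Moreover $u_i'(v)=w_i(v)/s_j\le w_i(v)=u_i(v)/\mms(i)$ because $s_j\ge 1$.

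Now I assemble the argument. The collection $u_1',\dots,u_n'$ is regular, and because $u_i\mapsto u_i'$ for non-degenerate $i$, and the assignment of every degenerate agent to $u_{i_0}'$, are functions of $u_i$ alone, agents with equal utility functions get equal transformed ones; hence the transformed collection has no more types than the original, in particular at most $t$ if the original did. By hypothesis there is a $c$-sufficient allocation $P_1,\dots,P_n$ for $u_1',\dots,u_n'$, so $u_i'(P_i)\ge c\cdot\mms(u_i')=c$ for all $i$. For degenerate $i$, $u_i(P_i)\ge 0=c\,\mms(i)$; for non-degenerate $i$, by additivity and the pointwise bound, $u_i(P_i)=\mms(i)\sum_{v\in P_i}w_i(v)\ge\mms(i)\sum_{v\in P_i}u_i'(v)=\mms(i)\,u_i'(P_i)\ge c\,\mms(i)$. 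Thus $P_1,\dots,P_n$ is $c$-sufficient for $u_1,\dots,u_n$.

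The step I expect to be the crux is the core construction of $u_i'$: merely rescaling pins the maximin share at $1$ but leaves the total value possibly above $n$, so the agent fails to be proportional, while shrinking values carelessly risks dropping the maximin share below $1$. Contracting values uniformly inside each bundle of a fixed mms-split is what makes the three requirements---total equal to $n$, maximin share equal to $1$, and the pointwise domination $u_i'\le u_i/\mms(i)$ that transfers $c$-sufficiency back---hold simultaneously. The treatment of degenerate agents and the bookkeeping on the number of types are routine once the canonical choices have been fixed.
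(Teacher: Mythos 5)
Your proof is correct and follows essentially the same route as the paper's: replace the utility of each zero-maximin agent by that of a fixed non-degenerate one, shrink values inside the bundles of a chosen mms-split so the function becomes proportional while the maximin share and the pointwise domination are preserved, normalize to make the collection regular, and transfer the $c$-sufficient allocation back via the pointwise bound. The only (cosmetic) difference is that you contract uniformly by the factor $1/s_j$ within each bundle, whereas the paper decreases the values of some elements unspecifiedly; both achieve the same effect.
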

\begin{proof}
Let $N=\{1,\ldots,n\}$ be the set of agents and $\{u_1,\ldots,u_n\}$
the set of their utilities.
%We start by computing $\mms^{(n)}(u_i)$ for every $i\in N$.
If for every
$i\in N$, $\mms^{(n)}(u_i) = 0$, then the assertion holds. Any split of
$G$ into $n$ bundles is a $c$-sufficient allocation.

Thus, let us assume that for at least one agent, say $k$, $\mms^{(n)}(u_k)
>0$. For each agent $i\in N$ such that $\mms^{(n)}(u_i)=0$, we change her
utility function to $u_k$. We denote the new set of utilities by $v''_1,
\ldots,v''_n$. It is clear that if $P_1,\ldots,P_n$ is a $c$-sufficient
allocation with respect to the utility functions $v_i''$, then it is also a
$c$-sufficient allocation with respect to the original utilities $u_i$.
It is then enough to prove that a $c$-sufficient allocation with respect
to the functions $v_i''$ exists.

For every agent $i\in N$, we denote by $\Pi_i$ any mms-split into $n$
bundles with respect to the utility function $v''_i$. For each bundle of
$\Pi_i$ valued
more than $\mms^{(n)}(v''_i)$ we decrease the value of some elements in that
bundle to bring its value down to $\mms^{(n)}(v''_i)$. In this way, we produce a
new utility function, say $v_i'$. Since for every item $x$, $v_i'(x) \leq
v''_i(x)$, $\mms^{(n)}(v'_i) \leq \mms^{(n)}(v''_i)$. On the other hand, by
the construction, for every bundle $P\in\Pi_i$, $v'_i(P)=\mms^{(n)}(v''_i)$.
Thus, each utility function $v'_i$ is $n$-proportional and $\mms^{(n)}(v'_i)
= \mms^{(n)}(v''_i)>0$.

Next, for each agent $i\in N$ we set $k_i=n/(\sum_{y\in V(G)}v'_i(y))$
and, for every $x\in V(G)$, we define
\[
v_i(x)= k_i v'_i(x),
\]
Clearly, each utility function $v_i$, $i\in N$, is $n$-regular. Moreover,
it follows directly from the construction that if the original utility
functions are of at most $t$ types, then the resulting utility functions are
of at most $t$ types, too.

Let $\Pi=P_1,\ldots, P_n$ be a $c$-sufficient allocation for $v_1,
\ldots,v_n$ (its existence is guaranteed by the assumption). Therefore,
for every agent $i\in N$, $v_i(P_i) \geq c\cdot\mms^{(n)}(v_i)$.
Since $\mms^{(n)}(v_i)= k_i\mms^{(n)}(v'_i)$ and $v_i(P_i)=k_iv'_i(P_i)$,
it follows that $v'_i(P_i) \geq c\cdot\mms^{(n)}(v'_i)$. Next, we recall that
for every $x\in V(G)$, $v''_i(x)\geq v'_i(x)$. Consequently, for every bundle
$P$, $v''_i(P)\geq v'_i(P)$. Since for every $i\in N$, $\mms^{(n)}(v'_i) =
\mms^{(n)}(v''_i)$, it follows that
\[
v''_i(P_i) \geq v'_i(P_i) \geq c\cdot\mms^{(n)}(v'_i) = c\cdot\mms^{(n)}(v''_i).
\]
Thus, $\Pi$ is a $c$-sufficient allocation of goods to agents in $N$ with
respect to the utility functions $v''_i$ and the result follows.
\end{proof}

Corollary \ref{cor_2_agents} states that for two agents and for any connected
graph of goods an mms-allocation exists. On the other hand,
Bouveret \emph{et al.} \cite{BouveretCEIP17}
gave an example of nonexistence of an mms-allocation
for a cycle and four agents. In fact, as we show in Figure \ref{fig_no_mms_for_3}, even for three agents it may be that mms-allocations of goods on a cycle do
not exist.
In that figure, $v_1,\ldots,v_9$ denote consecutive nodes of a cycle and
the numbers in each row represent the utility functions. Observe that
the maximin shares for the agents 1 and 2 are $5$. For the agent 3, it is $6$.
Moreover, no two consecutive nodes have the total value satisfying any of
the agents. Therefore, if an mms-allocation existed, it would have to be a
split into three bundles of three consecutive nodes each. It is simple to
check that none of the three possible partitions of this type is an
mms-allocation.

\begin{figure}[ht]
{\small
\centerline{
\begin{tabular}{|l|c|c|c|c|c|c|c|c|c|}
  \hline
 & $v_1$ & $v_2$ & $v_3$ & $v_4$ & $v_5$ & $v_6$ & $v_7$ & $v_8$ & $v_9$\\
  \hline
agent 1 & 0 & 3 & 1 & 3 & 1 & 3 & 0 & 2 &  2 \\
  \hline
agent 2 & 2 & 2 & 0 & 3 & 1 & 3 & 1 & 3 & 0 \\
  \hline
agent 3 & 1 & 3 & 2 & 3 & 0 & 3 & 2 & 3 & 1 \\
  \hline
\end{tabular}}
\caption{An example of nonexistence of an mms-allocation for a cycle with $9$ nodes and $3$ agents.}
\label{fig_no_mms_for_3}
}
\end{figure}

On the other hand, for three agents and at most 8 goods on a cycle
mms-allocations always exist.

\begin{theorem}\label{th1}
For three agents and at most $8$ goods on a cycle, mms-allocations always exist.
\end{theorem}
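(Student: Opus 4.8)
The plan is to normalize the instance using Proposition~\ref{prop2a} and then exploit the fact that a cycle on at most $8$ nodes is too short to escape the positive result available for trees. By Proposition~\ref{prop2a} (with $n=3$, $c=1$) it suffices to treat a \emph{regular} collection of three agents on the cycle $C_m$, $m\le 8$; so assume $\mms^{(3)}(C_m,u_i)=1$ and $\sum_{v}u_i(v)=3$ for $i=1,2,3$. Since each maximin share is positive, every bundle of an mms-split of agent~$i$ is nonempty and hence, being connected in a cycle, an arc; as the three bundles have value at least~$1$ and sum to~$3$, each has value exactly~$1$. Thus each agent partitions $C_m$ into three arcs of value~$1$, and in particular $u_i(v)\le 1$ for every node~$v$.

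\emph{Passing to a path.} For an edge $e$ write $\Pi_e=C_m-e$, a path. A partition of $\Pi_e$ into at most three consecutive sub-arcs is exactly a $(C_m,3)$-split in which $e$ is cut, and conversely. Call $e$ \emph{$i$-splittable} if $\mms^{(3)}(\Pi_e,u_i)=1$; since the total value on $\Pi_e$ is still $3$, this is equivalent to $\Pi_e$ admitting a partition into three unit-value arcs, i.e.\ to some mms-split of agent~$i$ on $C_m$ having $e$ as a cut edge. If a single edge $e$ is $i$-splittable for $i=1,2,3$, then $\mms^{(3)}(\Pi_e,u_i)=1$ for all~$i$, and since $\Pi_e$ is a tree, the theorem of Bouveret~\emph{et al.}~\cite{BouveretCEIP17} yields an mms-allocation $P_1,P_2,P_3$ of $\Pi_e$ with $u_i(P_i)\ge 1$; its bundles are sub-arcs of $\Pi_e$, so it is a $(C_m,3)$-allocation, and it is an mms-allocation of $C_m$ because $\mms^{(3)}(C_m,u_i)=1$. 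Hence we may assume that no edge is $i$-splittable for all three agents simultaneously.

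\emph{Reducing the configuration.} Fix an arbitrary mms-split for each agent; its three cut edges form a $3$-element set $T_i$. Since $m\le 8<9$, the sets $T_1,T_2,T_3$ cannot be pairwise disjoint, so two agents---say $1$ and $2$---share a cut edge $e$. Then both agents~$1$ and~$2$ have a unit $3$-partition of $\Pi_e$, while by the previous paragraph we may assume $e$ is not $3$-splittable, so $\mms^{(3)}(\Pi_e,u_3)<1$. What remains is to produce, for such a configuration, a $(C_m,3)$-allocation giving each agent value at least~$1$: either by cutting this particular path $\Pi_e$ into three arcs assigned suitably to the agents, or by switching to another edge shared by two agents, or by using an arc that crosses some agent's cut edges.

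\emph{The main obstacle.} This last step is the crux. The argument is a finite case analysis driven by the prefix sums of $u_1,u_2,u_3$ along the short path $\Pi_e$: one shows that if every candidate split failed, each agent's three unit arcs would all have to contain at least three nodes, whence $m\ge 3+3+3=9$, contradicting $m\le 8$. This is also exactly why the bound is tight: for $m=9$ the rigid configuration with three three-node arcs per agent is realizable, and the instance of Figure~\ref{fig_no_mms_for_3} is such an instance (up to the rescaling of Proposition~\ref{prop2a}) with no mms-allocation.
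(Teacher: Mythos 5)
Your setup matches the paper's: you regularize via Proposition~\ref{prop2a}, note that each agent's mms-split consists of three nonempty unit-value arcs, and use the pigeonhole count $3\cdot 3=9>8$ to find an edge that is a cut edge of the mms-splits of two agents, say $1$ and $2$. Up to that point the argument is sound and essentially identical to the paper's.

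But the proof is not finished. The paragraph you label ``The main obstacle'' is exactly the step the theorem needs, and you do not carry it out: you only list possible strategies (``either by cutting this particular path\ldots or by switching to another edge\ldots or by using an arc that crosses\ldots'') and then assert, without argument, that a failure of every candidate split would force each of the nine unit arcs to contain at least three nodes, giving $m\ge 9$. That claim is not self-evident and no mechanism for proving it is given; as written it is a conjecture, not a proof. The paper closes the gap quite differently and concretely: the shared cut edge yields a containment $A\subseteq B$ between a bundle $A$ of agent $1$'s split and a bundle $B$ of agent $2$'s split; the goods of $B-A$ are absorbed into the other two bundles of agent $2$'s split to produce a $2$-split $B',B''$ of the path $C-A$ with $u_2(B'),u_2(B'')\ge 1$; then one branches on whether $u_3(A)\ge 1$. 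If so, agent $3$ takes $A$, and since $u_1(C-A)=2$ one of $B',B''$ has value at least $1$ for agent $1$, with agent $2$ taking the other. If not, agent $1$ takes $A$, and since $u_3(C-A)>2$ one of $B',B''$ has value at least $1$ for agent $3$. You would need to supply this (or some equally explicit) construction for your proof to be complete; without it there is a genuine gap at the crux of the theorem.
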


\begin{proof}
Let $C$ be a cycle with at most $8$ nodes and let $N=\{1,2,3\}$ be the set
of agents. By Proposition \ref{prop2a}, we assume without loss of generality
that the agents are 3-regular. In particular, they are proportional
and $mms(i)=1$, for every $i\in N$.

Let us consider mms-splits for agents 1, 2 and 3, each into three
bundles. By regularity, all bundles in those splits are non-empty. Thus,
each split determines three edges that connect adjacent bundles of the split.
Since $C$ has 8 edges and there are three splits, there is an edge in $C$
that connects adjacent bundles in two different splits, say for agents 1 and 2.
It follows that for some bundles $A$ and $B$ of the mms-splits for agents
1 and 2, respectively, $A\subseteq B$ or $B\subseteq A$.

Let us assume without loss of generality that $A\subseteq B$. Clearly, the
elements of the set $B-A$ can be added to one of the two other parts of the
mms-split for the agent 2 to form a $2$-split of the path $C-A$, say $B',
B''$, with both parts of value at least $1$ for the agent $2$.

We now construct an mms-allocation as follows. If the set $A$ has value at
least 1 for the agent 3, then she receives it. Clearly, the value of $C-A$
for the agent 1 is equal to $2$. Thus, one of the parts $B',B''$ has value
at least 1 to agent 1. We allocate this part to agent 1. Finally, the agent
$2$ receives the part that remains.

If $A$ does not satisfy the agent $3$, then the agent $1$ gets this part.
The value of $C-A$ for the agent $3$, is larger than $2$. Thus, we allocate
the parts $B',B''$ to the agents $3$ and $2$ in the same way as to the
agents $1$ and $2$ in the previous case.
\end{proof}

We conclude this section with a comment on the notation. Formally, whenever
we talk about allocations of goods on a subgraph of a graph we consider, we
should use restrictions of the utility functions $u_i$ to the set of nodes
of the subgraph. Continuing to refer to $u_i$'s is simpler and does not
introduce ambiguity. Therefore, we adopt this convention in the paper.

\section{Complexity and Algorithms}

We assume familiarity with basic concepts of computational complexity and, in
particular, with the classes NP, $\Delta_2^P$ and $\Sigma_2^P$. We refer to
the book by Papadimitriou \cite{papa1993} for details.
%To make our discussion formal
We start with %some
general comments on how
instances of fair division of indivisible goods on graphs are given.
First, we assume any standard representation of graphs. As our primary
objective is to understand when problems we consider have polynomial solutions
and not most efficient algorithms to solve them, the details of how graphs
are represented are not critical. Moreover, many of our results concern
particular graphs such as cycles, when explicit representations of nodes and
edges are not even needed. Finally, we assume that all utilities and other
parameters that may be part of the input are rational, the least restrictive
assumption for studies of algorithms and complexity.
\nop{Second, we assume that all utility functions
are represented as sequences of non-negative integers. This does not affect
the generality of our results. The case when utility functions have
non-negative rational values, which is the least restrictive assumption for
studies of algorithms and complexity, can be reduced to the integer one
by multiplying all utility values by the least common multiple of their
denominators. It is routine to show that this process can be implemented to
run in polynomial time in the size of the original instance (which accounts,
in particular, for the total size of binary representations of all integers
that appear as numerators or denominators of rational numbers representing
values of the utility functions).
}

We now formally define several problems related to maximin share allocations
of indivisible goods on graphs and derive bounds on their complexity.

\smallskip
\noindent
{\sc Mms-Values-G}: Given a graph $G$ on $m$ goods, a utility function $u$,
that is, a sequence $u=(u_1, \ldots, u_m)$ of non-negative rational numbers,
an integer $n>1$ and a rational number $k\geq 0$, decide whether
$\mms^{(n)}(G,u)\geq k$.

\smallskip
The problem was studied in the original case (with $G$ being a complete graph)
by Bouveret and Lema{\^{\i}}tre \cite{BouveretL16}. Their complexity result
and its proof extend to the general case. The hardness argument uses a
reduction from the well-known NP-complete problem {\sc Partition}
\cite{GareyJ79}:

\smallskip
\noindent
{\sc Partition}: Given a set $U$ of $m$ non-negative
integers, decide whether there is a set $X\subseteq U$ such that
$\sum_{u\in X} u = \sum_{u\in U\setminus X} u$.

\begin{proposition}
\label{prop-mms-val}
The problem {\sc Mms-Values-G} is NP-complete.
\end{proposition}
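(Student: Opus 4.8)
The plan is to prove the two directions of NP-completeness separately: membership in NP by a guess-and-check argument, and NP-hardness by a polynomial-time reduction from {\sc Partition}, exploiting that complete graphs form a subfamily of the admissible inputs to {\sc Mms-Values-G} (so it suffices to reprove, in this cleaner formulation, the argument of Bouveret and Lema{\^{\i}}tre). For membership in NP, the certificate is an $n$-split $P_1,\ldots,P_n$ of $G$. Verification consists of checking that the $P_i$ are pairwise disjoint with union $V(G)$, that each $P_i$ induces a connected subgraph of $G$ (a standard graph search, polynomial in the size of $G$), and that $u(P_i)\ge k$ for every $i$ (a sum of rationals, polynomial in the input size). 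By the definition of $\mms^{(n)}$, such a split exists if and only if $\mms^{(n)}(G,u)\ge k$: an mms-split for $n$ and $u$ supplies the witness in one direction, and any witnessing split is a $q$-strong $n$-split with $q\ge k$ in the other. Hence {\sc Mms-Values-G}$\,\in\,$NP.

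For NP-hardness I would reduce from {\sc Partition}. Given an instance $U=\{a_1,\ldots,a_m\}$ with $S=\sum_{i=1}^{m}a_i$, first dispose of the trivial case: if $S$ is odd, output a fixed no-instance of {\sc Mms-Values-G}. Otherwise output the instance consisting of $G=K_m$ (the complete graph on goods $v_1,\ldots,v_m$), the utility function $u$ with $u(v_i)=a_i$, the integer $n=2$, and the rational $k=S/2$; this is clearly computable in polynomial time, since $K_m$ has $\binom{m}{2}$ edges. Because every vertex subset of $K_m$ induces a connected subgraph, a solution $X\subseteq U$ of {\sc Partition} yields a $2$-split $(X,U\setminus X)$ whose two parts both have value $S/2=k$, so $\mms^{(2)}(K_m,u)\ge k$. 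Conversely, if $\mms^{(2)}(K_m,u)\ge k$, pick a $k$-strong $2$-split $P_1,P_2$; from $u(P_1)+u(P_2)=S=2k$ together with $u(P_1),u(P_2)\ge k$ we get $u(P_1)=u(P_2)=S/2$, so $P_1$ is a solution to {\sc Partition}. This establishes the correctness of the reduction.

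I do not expect a genuine obstacle here: the construction is essentially the one used for the original (complete-graph) setting. The only points that need a line of care are the parity preprocessing that makes $S/2$ an achievable target, the observation that in $K_m$ the connectedness requirement on bundles is vacuous (so arbitrary, and even empty, bundles are admissible, which is what lets the {\sc Partition}/$2$-split correspondence go through verbatim), and the standard remark that since $G$ is part of the input of {\sc Mms-Values-G}, hardness already on the subfamily of complete graphs yields NP-hardness of {\sc Mms-Values-G} in full generality.
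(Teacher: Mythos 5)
Your proof is correct and follows essentially the same route as the paper: NP membership by guessing and verifying an $n$-split, and NP-hardness via a reduction from {\sc Partition} to the complete-graph, $n=2$ case. The only cosmetic difference is that the paper sets $k=\lceil(\sum_i u_i)/2\rceil$ to absorb the odd-sum case automatically, whereas you handle parity by a separate preprocessing step mapping odd-sum instances to a fixed no-instance; both are valid.
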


\begin{proof}
The problem is clearly in NP. Indeed, to solve the problem, we guess an
$n$-split and verify that each of its parts induces in $G$ a connected
subgraph that has value at least $k$ (under $u$). The problem is NP-hard
even in the case when $G$ is a complete graph, $n=2$ and the utility function
is integer valued. We can show this by a reduction from {\sc
Partition}. Indeed, an instance of {\sc Partition} consisting of a set of
non-negative integers $\{u_1,\ldots,u_m\}$ yields an
instance to {\sc Mms-Values-G} with the complete graph of goods, $n=2$,
and $k=\lceil(\sum_{i=1}^m u_i)/2\rceil$. It is easy to observe that
an instance of {\sc Partition} is a YES instance if and only if the
corresponding instance of {\sc Mms-Values-G} is a YES instance. Thus,
NP-hardness follows.
\end{proof}

Another related problem concerns the existence of an allocation meeting
given bounds on the values of its individual bundles.

\smallskip
\noindent
{\sc Alloc-G}: Given a graph $G$ on $m$ goods, $n$ utility functions
$u_1,\ldots,u_n$ (each $u_i$ is a sequence of length $m$ consisting of
non-negative rational numbers) and $n$ rational numbers
$q_1,\ldots, q_n$, decide whether an allocation $A_1,\ldots, A_n$ for
$G$ exists such that for every $i=1, \ldots,n$, $u_i(A_i)\geq q_i$.

\begin{proposition}
The problem {\sc Alloc-G} is NP-complete.
\end{proposition}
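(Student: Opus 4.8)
The plan is to treat membership and hardness separately, as is standard for an NP-completeness claim. For membership in NP, I would guess an allocation $A_1,\ldots,A_n$, i.e.\ a partition of the $m$ goods into $n$ (possibly empty) blocks, and then verify in polynomial time that each $A_i$ induces a connected subgraph of $G$ and that $u_i(A_i)\geq q_i$. Both checks are routine: connectivity of an induced subgraph is decided by a graph search, and $u_i(A_i)$ is obtained by summing at most $m$ rationals. Hence {\sc Alloc-G}$\,\in\,$NP.

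For NP-hardness I would reduce from {\sc Mms-Values-G}, which is NP-complete by Proposition \ref{prop-mms-val}. Given an instance $(G,u,n,k)$ of {\sc Mms-Values-G}, I would output the instance of {\sc Alloc-G} with the same graph $G$, with $u_1=\cdots=u_n=u$, and with $q_1=\cdots=q_n=k$. This transformation is clearly computable in polynomial time. For correctness, note that an allocation $A_1,\ldots,A_n$ for $G$ with $u_i(A_i)\geq q_i$ for every $i$ is precisely a $(G,n)$-split in which every bundle has value at least $k$ under $u$, that is, a $k$-strong $n$-split. By the definition of $\mms^{(n)}(G,u)$, such a split exists if and only if $\mms^{(n)}(G,u)\geq k$. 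Thus the constructed instance is a YES instance of {\sc Alloc-G} exactly when $(G,u,n,k)$ is a YES instance of {\sc Mms-Values-G}.

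Alternatively, one could reduce directly from {\sc Partition} in the same way as in the proof of Proposition \ref{prop-mms-val}: take the complete graph on the $m$ items, $n=2$, both utility functions equal to the {\sc Partition} weights, and $q_1=q_2=\lceil(\sum_i u_i)/2\rceil$. Using a complete graph makes the connectivity requirement vacuous, so the analysis reduces to that of balanced two-way partitions.

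Since there is essentially no combinatorial content here, I do not expect a genuine obstacle; the only point that needs care is to line up the definitions precisely, namely to observe that ``an allocation meeting the bounds $q_i=k$ for all $i$'' is the same thing as ``a $k$-strong $n$-split'', so that the hardness of deciding maximin share values transfers verbatim to {\sc Alloc-G}.
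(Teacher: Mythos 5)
Your proposal is correct and follows essentially the same route as the paper: the paper also establishes hardness by observing that {\sc Alloc-G} with all utilities equal and all thresholds equal to $k$ subsumes {\sc Mms-Values-G} (and hence the {\sc Partition}-based instances), while membership in NP is immediate. Your only addition is to spell out the reduction from the general {\sc Mms-Values-G} problem rather than just its complete-graph, two-agent special case, which changes nothing of substance.
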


\begin{proof}
The membership in NP is evident. The problem is NP-hard even if $G$ is a
complete graph, $n=2$, $q_1=q_2$, and $u_1=u_2$. Indeed, it becomes then
the {\sc Mms-Values-G} problem with $G$ being a complete graph, $n=2$,
$k=q_1$ ($=q_2$), and $u=u_1$ ($=u_2$). Thus, the proof of hardness
for Proposition \ref{prop-mms-val} can be applied here, too.
\end{proof}

We use these observations to show an upper bound on the complexity of the
problem of the existence of an mms-allocation in the graph setting. We
formally define the problem as follows.

\smallskip
\noindent
{\sc Mms-Alloc-G}: Given a graph $G$ of goods and $n$ utility functions
$u_1,\ldots, u_n$ (each $u_i$ is a sequence of length $m$ of non-negative
rational numbers), decide whether an mms-allocation for $G$ and $u_1,
\ldots, u_n$ exists.

\begin{theorem}\label{delta}
The problem {\sc Mms-ALLOC-G} is in the class $\Delta_2^P$.
\end{theorem}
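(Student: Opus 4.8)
The plan is to solve {\sc Mms-Alloc-G} in polynomial time with the help of an oracle for a problem in NP; since $\Delta_2^P = P^{NP}$, this establishes the claim. The algorithm runs in two phases: first it computes the maximin share value $\mms^{(n)}(G,u_i)$ for every agent $i$, and then it decides whether a single allocation can meet all of these target values at once.

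For the first phase I would invoke the oracle for {\sc Mms-Values-G}, which is in NP by Proposition \ref{prop-mms-val}. For a fixed agent $i$, the quantity $\mms^{(n)}(G,u_i)$ is the utility of some bundle, hence a sum of a subset of the numbers $u_i(v)$, $v\in V(G)$. After clearing denominators --- multiplying every utility by the least common multiple of the denominators occurring in the input, a step that increases the instance size only polynomially --- we may assume all utilities are non-negative integers, so that $\mms^{(n)}(G,u_i)$ is an integer in the range $0,1,\ldots,\sum_{v\in V(G)} u_i(v)$. This range has size at most $2^p$ for some polynomial $p$ in the input size, so a binary search that at each step asks the oracle the question ``$\mms^{(n)}(G,u_i)\geq k$?'' determines $\mms^{(n)}(G,u_i)$ exactly using $O(p)$ oracle calls. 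Doing this for every $i$ costs polynomially many oracle calls in total and yields the targets $q_i = \mms^{(n)}(G,u_i)$.

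For the second phase, note that an allocation $A_1,\ldots,A_n$ is an mms-allocation precisely when $u_i(A_i)\geq q_i$ for every $i$. Deciding whether such an allocation exists is exactly an instance of {\sc Alloc-G}, which was shown above to be in NP (indeed NP-complete). One further oracle call therefore decides the existence of an mms-allocation, and the algorithm returns its answer. Every step performed outside the oracle calls --- clearing denominators, running the binary searches, and assembling the {\sc Alloc-G} instance --- is plainly polynomial, so the overall procedure witnesses membership of {\sc Mms-Alloc-G} in $P^{NP} = \Delta_2^P$.

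The only genuinely delicate point is the bookkeeping that keeps the binary search polynomial: one should verify that clearing denominators really is a polynomial-time preprocessing step and that the integer utilities it produces --- and hence the search interval for each $\mms^{(n)}(G,u_i)$ --- have bit-length bounded by a polynomial in the original input size. Once that is checked, the rest of the argument is routine. Enumerating the (at most $2^{|V(G)|}$) possible subset-sum values that a bundle can take is of course not an option; binary search over the value interval is the clean way to pin down $\mms^{(n)}(G,u_i)$ with only polynomially many NP queries.
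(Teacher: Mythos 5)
Your proposal is correct and follows essentially the same route as the paper: rescale the rational utilities to integers (the paper uses the product of denominators where you use the lcm, an immaterial difference), determine each $\mms^{(n)}(G,u_i)$ by binary search over the value range $[0,\sum_v u_i(v)]$ using polynomially many calls to an NP oracle for {\sc Mms-Values-G}, and then make one final oracle call to {\sc Alloc-G} with the computed targets. The delicate bookkeeping point you flag --- that the rescaled integers, and hence the number of binary-search steps, stay polynomial in the input size --- is exactly the part the paper spells out in detail, so nothing is missing.
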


\begin{proof}
The key step in the proof is the observation that we can rescale each
utility function so that all its values are integers. To this end, given
a utility function $u_i$, $1\leq i\leq n$, we compute the product, say $k_i$,
of all denominators of the rational numbers used by $u_i$ as its values.
Next, we multiply all values of $u_i$ by $k_i$ to produce a new utility
function $u'_i$. By the construction, all values of $u'_i$ are indeed
integers.

Let us denote by $I$ the original input instance to the problem and by
$I'$ the instance obtained by rescaling the utility functions in the way
described above. We note that the products $k_i$ and the rescaled utility
functions $u'_i$ can be computed in time $O(nm^2M^2)$, where $M$ is the number
of digits in the binary representation of the largest integer appearing as
the numerator or the denominator in rational numbers specifying the original
utility functions $u_i$. Since the size, say $S$, of the input instance
satisfies $S=\Omega(m+n+M)$, $I'$ can be computed in polynomial time in $S$.

Finally, we note that $I$ is a YES instance of the problem
{\sc Mms-ALLOC-G} if and only if $I'$ is a YES instance of the problem.
In fact, an allocation $\Pi$ is an mms-allocation for $I$ if and only if
$\Pi$ is an mms-allocation for $I'$.

Let us now consider an auxiliary problem where, given a graph $G$ on $m$
nodes and a sequence  $s=s_1,\ldots,s_m$ of $m$ non-negative integers (that
is, a utility function), the goal is to compute $\mms^{(n)}(G,s)$. We will
design for this problem an algorithm with an NP-oracle. We will then show that
our algorithm runs in polynomial time in the size of the representation of $s$,
where we consider each call to an oracle to be a single step taking a constant
amount of time.

To this end, we observe that $\mms^{(n)}(G,s) \leq (\sum_{j=1}^m s_j)/n$.
Thus, we can compute $\mms^{(n)}(G,s)$ by binary search on the range
$[0..\lfloor (\sum_{j=1}^m s_j)/n\rfloor]$ of possible values for
$mms^{(n)}(G,s)$. Each step starts with a range, say $[p..r]$, that contains
$\mms^{(n)}(G,s)$. We narrow down this range by making a call
to an oracle for the problem {\sc Mms-Values-G}, which we proved to be
NP-complete above. The call to the oracle is made on the instance consisting
of $G$, $s$ and
%\sout{$k=\lceil (p+k)/2\rceil$}
%\red
{$k=\lceil (p+r)/2\rceil$}. Depending on the oracle output, it
results in a new smaller range --- either
%\sout{$[0..\lceil (p+k)/2\rceil-1]$ or $[\lceil (p+k)/2\rceil .. r]$}
%\red
{$[p..k-1]$ or $[k .. r]$}. The process stops when the range is reduced
to just one element and that element is returned as $\mms^{(n)}(G,s)$.

The number of range-narrowing steps is given by $\log_2(\lfloor (\sum_{j=1}^m
s_j)/n \rfloor +1)$. Clearly,
\[
\log_2(\lfloor (\sum_{j=1}^m s_j)/n \rfloor +1) \leq \log_2(\sum_{j=1}^m s_j +1)
\leq \sum_{j=1}^m \log_2(s_j+1).
\]
Since the number of bits needed to represent a non-negative integer $x$ is
given by $\max(1, \lceil\log_2(x+1)\rceil$, it follows that the size, say $S'$,
of the representation of a problem instance satisfies $S'\geq \sum_{j=1}^m
\max(1, \lceil\log_2(s_j+1)\rceil$.
Thus, the number of range-narrowing steps is bounded by $S'$.
Consequently, the oracle algorithm we described runs indeed in polynomial
time in $S'$.

It follows that the problem {\sc Mms-Alloc-G} can be solved for an instance
$I$ by a procedure consisting of these three steps:
\begin{enumerate}
\item Compute the instance $I'$.
\item Compute the values $q_i = mms^{(n)}(G,u'_i)$, where $u'_1,\ldots,u'_n$
are the rescaled utility functions computed in step (1). This is done
using the oracle algorithm described above. It can be applied as all
utilities in $I'$ are integer-valued.
\item Decide whether there is an allocation $A_1,\ldots, A_m$ for $G$
such that $u'_i(A_i) \geq q_i$, for every $i=1,\ldots, m$, by invoking once
an NP-oracle for the problem {\sc Alloc-G}.
\end{enumerate}
We recall that $I'$ can be computed in polynomial time in $S$ (the size of
the original instance $I$). Based on that and on our discussion of the
auxiliary problem, step (2) of our algorithm also runs in polynomial time
in $S$, counting each oracle call as taking a unit amount of time. Finally,
the last step takes a single call to an oracle. Thus, the entire algorithm
runs in polynomial time in $S$, where we count each oracle call as taking
the unit amount of time. Since the oracles
used by the algorithm are for NP-complete problems, the problem
{\sc Mms-Alloc-G} is in $\Delta_2^P$.
\end{proof}

The upper bound $\Delta_2^P$ established by Theorem \ref{delta} applies also
in the case when $G$ is assumed implicitly, for instance, when it is a path,
a cycle or a complete graph represented by its set of nodes (but not edges).
It is because the number of oracle calls is bounded by the size of the
representation of the utility functions only. In the case of complete graphs,
Theorem \ref{delta} yields an improvement
on the bound $\Sigma_2^P$ obtained by Bouveret and Lema{\^{\i}}tre
\cite{BouveretL16}. We do not know if the upper bound of
$\Delta_2^P$ can be improved in general and, in particular, whether it can be
improved for complete graphs. On the other hand, we do know
that it can be
improved for trees. Bouveret \emph{et al.}
\cite{BouveretCEIP17} proved the following two results.

\begin{theorem}[Bouveret \emph{et al.} \cite{BouveretCEIP17}]
\label{bouv-vals}
There is a polynomial time algorithm that computes $\mms^{(n)}(T,u)$ and
a corresponding mms-split given a tree of goods $T$, a non-negative rational
utility function $u$ on the nodes (goods) of $T$, and an ingeter $n\geq 1$.
\end{theorem}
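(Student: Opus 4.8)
The plan is to reduce the optimization to polynomially many calls of a \emph{decision} procedure --- ``is $\mms^{(n)}(T,u)\geq q$?'' --- which, unlike the general {\sc Mms-Values-G} problem, can be solved in polynomial time on trees by a bottom-up greedy peeling, and then to recover a witnessing split from one final call. First I would normalize the input exactly as in the proof of Theorem~\ref{delta}: multiply all utilities by the product of the denominators occurring in $u$, producing an equivalent instance with integer utilities and of size polynomial in the original one (the maximin share of the rescaled instance is that of the original one times the same factor, so it suffices to treat the integer case). For $n=1$ the answer is simply $\mms^{(1)}(T,u)=u(V(T))$ with the split $V(T)$ itself, so assume $n\geq 2$; then $\mms^{(n)}(T,u)$ is a non-negative integer, and since the bundles of an mms-split partition $V$ and each has value at least $\mms^{(n)}(T,u)$, we have $\mms^{(n)}(T,u)\leq\lfloor(\sum_{v}u(v))/n\rfloor$.

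The core subroutine is this. For a fixed integer $q>0$ let $f(q)$ be the maximum number of parts in a partition of $V(T)$ into $T$-bundles each of value at least $q$. I claim $\mms^{(n)}(T,u)\geq q$ if and only if $f(q)\geq n$. The ``only if'' direction is immediate, since a $q$-strong $n$-split is such a partition. For ``if'', suppose we have $k\geq n$ bundles of value $\geq q$ covering $V$ (if an optimal $f(q)$-partition leaves a connected remainder of value $<q$, absorb it into any adjacent bundle, which keeps that bundle $\geq q$); then repeatedly merging two bundles that are adjacent in $T$ --- the union stays connected and values only grow --- brings the count down to exactly $n$, yielding a $q$-strong $n$-split. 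To compute $f(q)$, root $T$ arbitrarily and traverse it in post-order, maintaining for the node $v$ being processed the weight $w(v)$ of the ``open'' component containing $v$: set $w(v):=u(v)$; for each already-processed child $c$, if $w(c)\geq q$ then close off the subtree component at $c$ as a finished bundle (incrementing the count), and otherwise add $w(c)$ to $w(v)$; after the root is processed, increment the count once more if the root's open weight is $\geq q$. The returned count is $f(q)$, and the routine uses $O(m)$ arithmetic operations on integers of polynomial size.

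With the subroutine in hand, I would binary-search $q$ over the integer range $[0,\lfloor(\sum_v u(v))/n\rfloor]$: the decision procedure answers ``yes'' trivially at $q=0$, $f$ is non-increasing in $q$, so the search correctly converges to $q^\ast=\mms^{(n)}(T,u)$ using $O(\log\sum_v u(v))$, hence polynomially many, calls --- this is precisely the oracle-driven search in the proof of Theorem~\ref{delta}, with the NP oracle replaced by the polynomial-time greedy above. Finally, to output an mms-split, I run the greedy once more at threshold $q^\ast$, absorb any leftover component of value $<q^\ast$ into a tree-adjacent finished bundle, and merge tree-adjacent bundles down to exactly $n$ as above; every bundle then has value $\geq q^\ast=\mms^{(n)}(T,u)$. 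All steps run in polynomial time in the input size.

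The main obstacle is the correctness of the greedy subroutine: that eagerly closing a component the moment its accumulated weight reaches $q$ computes $f(q)$ exactly. I would prove it by an exchange argument, processing the tree from the leaves upward: given any partition of $V(T)$ achieving $f(q)$, rewrite it so that every cut the greedy makes above a node $c$ is also made, by re-routing the cut --- moving the excess weight sitting above $c$ out of the piece it currently occupies --- and checking that this never disconnects a remaining piece nor drops any piece below value $q$, while preserving an inductive hypothesis about the already-processed subtrees. The only other point needing care, the bookkeeping around empty bundles and the possibly low-value leftover root component, is dispatched entirely by the absorb/merge observation made in the second paragraph.
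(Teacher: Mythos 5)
The paper does not actually prove this statement: Theorem \ref{bouv-vals} is imported from Bouveret \emph{et al.} \cite{BouveretCEIP17} and stated without proof, so there is no in-paper argument to compare yours against. Judged on its own, your proof is correct and is essentially the standard route (and, to my knowledge, close to the one in the cited paper): reduce the optimization to a monotone decision problem, solve the decision problem on trees by the classic bottom-up eager peeling (the Perl--Schach-style ``maximum number of connected parts of weight at least $q$'' subroutine), and binary-search the integer answer exactly as in the paper's own proof of Theorem \ref{delta}. Your reduction between ``$k$ disjoint bundles of value $\geq q$'' and ``a $q$-strong $n$-split'' via absorption of uncovered components and merging of tree-adjacent bundles is sound, since the quotient graph of a partition of a connected graph into connected parts is connected and utilities are non-negative. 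The one place where real work remains is the correctness of the eager peeling, which you only sketch; the clean version of the exchange argument is: let $c$ be the lowest vertex at which the greedy closes a component, so that the closed component is all of $T_c$ and every proper child subtree of $c$ has value $<q$; then every bundle of value $\geq q$ that meets $T_c$ must contain $c$ (a bundle avoiding $c$ would sit inside a single child subtree), so an optimal family loses at most one bundle when you replace it by $T_c$ and recurse on the tree $T\setminus T_c$, on which the greedy behaves identically. Two trivial loose ends you should state explicitly: when $m<n$ (or more generally when fewer than $n$ bundles are produced at $q^\ast=0$) the split must be padded with empty bundles, which is permitted by the paper's definition of a split; and the decision subroutine is only ever invoked with $q\geq 1$, so defining $f$ for positive $q$ only is harmless.
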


\begin{theorem}[Bouveret \emph{et al.} \cite{BouveretCEIP17}]
\label{bouv-alloc}
For every tree $T$ of goods and every set of $n$ agents with non-negative
rational utility functions on the goods in $T$, an mms-allocation exists.
Moreover, there is a polynomial-time algorithm to find it.
\end{theorem}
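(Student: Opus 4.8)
\emph{Proof proposal.} The plan is to prove existence by induction on the number $n$ of agents, peeling off one agent together with one subtree at each step, and to make the induction go through by choosing \emph{which} agent and subtree to remove with care. For $n=1$ the single agent gets all of $T$. For the inductive step, root $T$ at an arbitrary node $r$; for a node $v$ let $T_v$ be the rooted subtree at $v$, and note $T-T_v$ is again a connected subtree. For each agent $i$ the set $S_i=\{v: u_i(T_v)\geq \mms^{(n)}(T,u_i)\}$ is nonempty (it contains $r$, since $n\cdot\mms^{(n)}(T,u_i)$ is at most the total value) and upward closed. Among all agents $i$ and all \emph{minimal} elements $v$ of $S_i$, I would pick a pair $(i,v)$ with $v$ of maximum depth, assign $T_v$ to agent $i$ (this bundle has value at least $\mms^{(n)}(T,u_i)$, exactly what $i$ needs), and recurse on $T':=T-T_v$ with the remaining $n-1$ agents. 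By the inductive hypothesis it then suffices to show that $\mms^{(n-1)}(T',u_j)\geq \mms^{(n)}(T,u_j)$ for every remaining agent $j$.

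The heart of the matter is this last inequality, and the depth-maximal choice of $v$ is exactly what delivers it: no strict descendant of $v$ can lie in any $S_j$ (else $S_j$ would have a minimal element deeper than $v$). Fix $j$ and an mms-split $P_1,\dots,P_n$ of $T$ for $u_j$. Every part meeting $T_v$ must contain $v$: a connected set with a node inside and a node outside $T_v$ passes through $v$, and a connected subset of $T_v$ avoiding $v$ would be contained in $T_w$ for a strict descendant $w$ of $v$, forcing $w\in S_j$, which is impossible. Since the parts are disjoint, exactly one part $P_k$ meets $T_v$, and in fact $T_v\subseteq P_k$. Deleting $T_v$ leaves the parts $P_\ell$ ($\ell\neq k$), all contained in $T'$ with value $\geq \mms^{(n)}(T,u_j)$, together with the (possibly low-value) connected leftover $P_k\cap T'$; merging that leftover into an adjacent part --- possible because $T'$ is a connected tree carrying at least two parts, or the leftover is already empty when $P_k=T_v$ --- yields an $(n-1)$-split of $T'$ all of whose parts have value $\geq \mms^{(n)}(T,u_j)$, as required. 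The degenerate case $v=r$ (i.e. $T'=\emptyset$) forces $\mms^{(n)}(T,u_j)=0$ for every other agent $j$, so those agents are content with empty bundles, and since splits are allowed to contain empty bundles this causes no difficulty.

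For the algorithmic claim, I would first compute $\mms^{(n)}(T,u_i)$ for each agent using Theorem \ref{bouv-vals}; this also makes membership in each $S_i$ decidable in linear time (sum $u_i$ over a subtree and compare). Each round then needs only a linear-time scan of $T$ to locate the deepest minimal witness, deletes a nonempty subtree, and recurses on a strictly smaller tree with one fewer agent, so the whole procedure finishes in polynomially many steps. I expect the main obstacle, and the step that must be argued most carefully, to be the key inequality $\mms^{(n-1)}(T',u_j)\geq \mms^{(n)}(T,u_j)$ --- concretely, the claim that the peeled subtree $T_v$ is swallowed by a single part of \emph{every} agent's mms-split, which is precisely where the depth-maximal choice of $v$ and the connectedness of bundles in a tree are both indispensable.
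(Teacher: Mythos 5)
The paper does not prove this statement: Theorem \ref{bouv-alloc} is quoted as a black box from Bouveret \emph{et al.}\ \cite{BouveretCEIP17}, so there is no in-paper proof to compare against. Judged on its own, your argument is correct and is essentially the standard peeling proof for trees: root the tree, give some agent a depth-maximal minimal ``satisfying'' subtree $T_v$, and recurse. The two load-bearing points are both handled properly. First, the depth-maximal choice of $v$ really does guarantee that no strict descendant of $v$ lies in any $S_j$ (any such descendant would sit above a minimal element of $S_j$ of strictly greater depth), and this is exactly what forces every bundle of every remaining agent's mms-split that meets $T_v$ to contain $v$, hence $T_v\subseteq P_k$ for a single part $P_k$. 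Second, the surgery on the split --- keeping the $n-1$ parts $P_\ell$ ($\ell\neq k$) and absorbing the connected leftover $P_k\cap T'$ into a neighboring nonempty part --- does yield a $c$-strong $(n-1)$-split of $T'$ with $c=\mms^{(n)}(T,u_j)$, which is the inequality the induction needs. A few spots are stated loosely but are easily completed: the merging step should say that if the leftover is nonempty and not all of $T'$, connectivity of $T'$ supplies an edge to some \emph{nonempty} part $P_\ell$ (if all other parts are empty, or if $v=r$, then $\mms^{(n)}(T,u_j)=0$ and everything is trivial); and when some agent has maximin share $0$ the chosen $v$ is automatically a deepest leaf, which your argument covers vacuously. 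The algorithmic claim also stands, since membership in each $S_i$ reduces to subtree sums once the maximin values are computed via Theorem \ref{bouv-vals}, and each round removes a nonempty subtree and one agent.
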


The results we presented suggest a question of the relationship between
the complexity of the {\sc Mms-Alloc-G} problem and the properties of the
underlying graph, as it becomes more complex than trees. The first step
towards understanding how the complexity grows is to analyze the case of
cycles and unicyclic graphs.

%in the
%case of trees, the bound can be improved down to the class P. In fact, all
%three problems discussed here have polynomial time solutions for trees
%\cite{BouveretCEIP17}. Further, as we show below, for cycles the problem
%{\sc Mms-Alloc} is in NP (whether it is NP-complete is open). This points
%to an interesting problem of finding the boundaries of change in the complexity
%of the {\sc Mms-Alloc} problem as the underlying graph becomes more complex.

First, we show that as in the case of trees (cf. \cite{BouveretCEIP17}),
the maximin share values and mms-splits for the case when goods form a
cycle (or a unicyclic graph) can
be computed in polynomial time.

\begin{theorem}\label{values}
There is a polynomial time algorithm for computing $\mms^{(n)}(U,u)$
and a corresponding mms-split, where $U$ is a unicyclic graph and
$u$ is a rational-valued utility function.
\end{theorem}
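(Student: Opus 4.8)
The plan is to reduce the cycle case to the tree case, for which Bouveret \emph{et al.} (Theorems \ref{bouv-vals} and \ref{bouv-alloc}) already supply a polynomial-time algorithm. Let $U$ be a unicyclic graph with (unique) cycle $C$, and let $u$ be a rational utility function on $V(U)$. The key observation is that any $G$-bundle $P$ in a split of $U$ either contains the whole cycle $C$ or misses at least one edge of $C$; and a split of $U$ into $n$ bundles has at most $n$ bundles total, so at most one of them can contain all of $C$ (in fact if one bundle contains all of $C$, the remaining bundles live in a forest). More useful: in \emph{any} $n$-split, since the bundles are vertex-disjoint connected sets, at most $n$ of the edges of $C$ are "cut" (have endpoints in different bundles or in a bundle not containing that edge). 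Hence if we knew which edge $e$ of $C$ is not used inside any bundle, deleting $e$ turns $U$ into a tree $U-e$, and every $n$-split of $U$ that does not use $e$ internally is exactly an $n$-split of the tree $U-e$, with the same bundle values. The subtlety is the case where some bundle contains the entire cycle; but then that bundle, being connected and containing $C$, is a union of $C$ with some of the pendant trees, and removing it leaves a forest to be split among the remaining $n-1$ agents/parts — a situation we can also handle by the tree algorithm (run it on each component, or equivalently treat the forest as a tree by adding a zero-value auxiliary root, after guessing how many parts go to each component).

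Concretely, I would proceed as follows. First, enumerate the edges $e_1,\dots,e_\ell$ of the cycle $C$ (there are at most $|V(U)|$ of them). For each $e_j$, form the tree $T_j = U - e_j$ and compute $\mms^{(n)}(T_j,u)$ together with an mms-split using Theorem \ref{bouv-vals}; call this value $q_j$. Every $n$-split of $U$ in which $e_j$ is not internal to any bundle is an $n$-split of $T_j$, so $\mms^{(n)}(U,u) \ge \max_j q_j$. Conversely, take any $n$-split $\Pi$ of $U$ achieving $\mms^{(n)}(U,u)$. If no bundle of $\Pi$ contains all of $C$, then some cycle edge $e_j$ is not internal to any bundle (a single bundle cannot contain all cycle edges without containing all cycle vertices, i.e. all of $C$), so $\Pi$ is an $n$-split of $T_j$ and $\mms^{(n)}(U,u) = \min_i u(\Pi_i) \le q_j$. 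This already shows $\mms^{(n)}(U,u) = \max_j q_j$ \emph{unless} every optimal split has a bundle swallowing the whole cycle.

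To handle the remaining case, I would add a second family of candidate values. For each way of choosing a connected subgraph $B$ of $U$ that contains all of $C$ — equivalently, a subset $S$ of the pendant trees hanging off $C$, together with, for each chosen pendant tree, a connected "prefix" of it rooted at its attachment point — the value $u(B)$ is one bundle, and the complement $U - B$ is a forest that must be split into $n-1$ parts. This seems to blow up, but it does not: because we only need the \emph{value} $\mms$, and by a standard argument the optimal such $B$ together with the optimal split of the rest can be found by a dynamic program along the cycle and up the pendant trees — essentially the same DP that underlies the tree algorithm of Bouveret \emph{et al.}, augmented with a "this component must contain the cycle" flag and a counter for how many of the $n$ parts have been used so far. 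Running this DP (polynomial in $|V(U)|$ and $n$) yields a value $q_0$, and then $\mms^{(n)}(U,u) = \max(q_0, q_1,\dots,q_\ell)$, with the corresponding mms-split recovered by tracing back the DP or the tree algorithm that attained the maximum.

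The main obstacle is the bundle-containing-the-whole-cycle case: the naive "delete one cycle edge" reduction silently ignores it, and getting the dynamic program for $q_0$ right — correctly threading the count of used parts through the cycle traversal while ensuring exactly one part is the cycle-containing one — is where the real work lies. Everything else (the edge enumeration, invoking Theorems \ref{bouv-vals}–\ref{bouv-alloc}, and checking polynomiality via the bit-size bound already used in the proof of Theorem \ref{delta}) is routine.
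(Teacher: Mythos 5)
Your decomposition is exactly the paper's: every $n$-split of $U$ either has a bundle containing the whole cycle $C$, or leaves some cycle edge $e$ with endpoints in different bundles, in which case it is a split of the tree $U-e$; so $\mms^{(n)}(U,u)=\max\bigl(\max_{e\in C}\mms^{(n)}(U-e,u),\ \mms_C\bigr)$, and the edge-deletion half is handled by $|C|$ calls to the tree algorithm of Theorem \ref{bouv-vals}. Your argument for that half is correct, including the observation that if every cycle edge were internal to some bundle then, since bundles are disjoint and consecutive cycle edges sharing an endpoint must then lie in the same bundle, one bundle would contain all of $C$.

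The gap is in the second half, the quantity you call $q_0$. You reduce it to optimizing over all connected subgraphs $B\supseteq C$ together with an $(n-1)$-split of the forest $U-B$, and then assert that ``a standard argument'' and a dynamic program ``essentially the same as'' the tree DP, augmented with a flag and a part counter, computes this in polynomial time --- while yourself conceding that getting this DP right ``is where the real work lies.'' As written, this is not a proof: no states, transitions, or correctness argument are given, and threading the part count around the cycle and into the pendant trees is precisely the nontrivial part. The paper closes this case with no new algorithm at all: contract $C$ to a single supernode $c$ of utility $\sum_{y\in C}u(y)$, obtaining a tree $U_C$; since every bundle of every $n$-split of $U_C$ that contains $c$ corresponds exactly to a bundle of $U$ containing $C$ (and some bundle must contain $c$), one has $\mms_C=\mms^{(n)}(U_C,u')$, computable by one more black-box call to Theorem \ref{bouv-vals}. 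Replacing your sketched DP with this contraction would complete your argument; the rescaling-to-integers and bit-size bookkeeping you defer to the proof of Theorem \ref{delta} is fine.
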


\begin{proof}
Using the rescaling technique discussed in the proof of Theorem
\ref{delta}, we can reduce in polynomial time the general problem to
the problem when all utilities are integers. To this end, we compute
the product, say $k$, of all denominators of the rational numbers that
are values of goods under $u$ (if all values are integers, we assume the
denominator is 1; this applies, in particular, to the case when all utilities
are 0). We then define a new utility function by multiplying the values of
$u$ by $k$. Once the maximin share value for the rescaled utility function
is computed, it is then used as the numerator of the maximin share for $u$
and $k$ is used as the denominator. Thus, to complete the proof it suffices
to describe a method of computing $\mms^{(n)}(U,u)$ and an mms-split under
the assumption that all values of $u$ are integers.

Let $C$ be the unique cycle of $U$. Every $U$-split has a bundle that
contains $C$ or is a split of the graph $U-e$ for some edge $e\in C$.
Thus,
\[
\mms^{(n)}(U,u)= \max(\max_{e\in C} \mms^{(n)}(U-e,u), \mms_C),
\]
where $\mms_C$ stands for
\[
\max_{P_1,\ldots,P_n} \min_{i=1,\ldots,n} u(P_i),
\]
with the maximum taken over all splits $P_1,\ldots, P_n$ that have a bundle
containing $C$.
To compute $\mms_C$, we proceed as follows. We construct a tree $U_C$
by contracting $C$ in $U$ to a single ``supernode,'' say $c$. Thus, the
nodes of $U_C$ are all nodes of $U-C$ and the supernode $c$. We define a
utility function $u'$ on the nodes of $U_C$ by setting
\begin{equation*}
u'(x) =
\begin{cases}
u(x) &\text{if $x$ is a node of $U-C$ (not the supernode)}\\
\sum_{y\in C} u(y) &\text{if $x=c$ (is a super node).}
\end{cases}
\end{equation*}
Clearly, $\mms_C = \mms^{(n)}(U_C,u')$. We note that $U_C$ and $u'$ can
be computed in polynomial time. Moreover, $U_C$ and all graphs $U-e$, $e\in C$
are trees. We now apply the algorithm by Bouveret et al. \cite{BouveretCEIP17}
(cf. Theorem \ref{bouv-vals}) to compute $\mms_C$ and $\mms^{(n)}(U-e,u)$,
for all edges $e\in C$, as well as the corresponding mms-splits. This takes
polynomial time. We then select the largest value among them and return it
together with its mms-split (in the case, the largest value is $\mms_C$, in
the bundle containing the superode, we replace it with the nodes of $C$).
\end{proof}

It is now a matter of routine to show that the {\sc Mms-Alloc-G} problem
for cycles and, more generally, for unicyclic graphs is in NP.

\begin{corollary}\label{uni-NP}
The problem {\sc Mms-Alloc-G} for unicyclic graphs is in NP.
\end{corollary}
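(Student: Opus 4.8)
The plan is to verify the two conditions for membership in NP: a polynomial-size certificate and a polynomial-time verification procedure. The natural certificate for a YES instance of \textsc{Mms-Alloc-G} on a unicyclic graph $U$ with agents $1,\ldots,n$ is simply an mms-allocation $A_1,\ldots,A_n$ itself, described by listing, for each good, the index of the agent to whom it is assigned. This has size polynomial in the input. The verifier must then check two things: (i) that $A_1,\ldots,A_n$ is a legal allocation, i.e., the $A_i$ are pairwise disjoint, cover $V(U)$, and each induces a connected subgraph of $U$; and (ii) that $u_i(A_i)\geq \mms^{(n)}(U,u_i)$ for every $i$. Part (i) is a routine graph-connectivity and set-cover check, clearly doable in polynomial time.

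The only nontrivial point is (ii): the verifier needs the numbers $\mms^{(n)}(U,u_i)$ in order to compare them against $u_i(A_i)$, and a priori computing a maximin share value looks like it could require an oracle (that is exactly the source of the $\Delta_2^P$ bound in Theorem \ref{delta}). The key observation is that for unicyclic graphs this is not an issue: by Theorem \ref{values}, each $\mms^{(n)}(U,u_i)$ can be computed \emph{in polynomial time}, with no oracle. So the verifier, after receiving the candidate allocation, first runs the algorithm of Theorem \ref{values} once for each agent $i$ to obtain $q_i=\mms^{(n)}(U,u_i)$, then checks $u_i(A_i)\geq q_i$ for all $i$ using additivity of the utility functions. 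All of this runs in polynomial time in the size of the input.

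Thus a nondeterministic polynomial-time machine can decide \textsc{Mms-Alloc-G} on unicyclic graphs: guess the allocation, then deterministically run the verification just described, accepting iff both (i) and (ii) hold; this machine accepts iff an mms-allocation exists, which is exactly membership in NP. (One small bookkeeping point, handled as in the proof of Theorem \ref{delta}: if the utilities are given as rationals rather than integers, rescale each $u_i$ by clearing denominators first; this is a polynomial-time preprocessing step that does not change the set of mms-allocations.) There is essentially no obstacle here — the real work was already done in Theorem \ref{values}; the corollary is the observation that a polynomial-time mms-value computation collapses the $\Delta_2^P$ argument of Theorem \ref{delta} down to NP for this graph class.
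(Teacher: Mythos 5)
Your proposal is correct and follows essentially the same route as the paper: guess the allocation as the certificate, use the polynomial-time algorithm of Theorem \ref{values} to compute each $\mms^{(n)}(U,u_i)$ deterministically, and verify the allocation against these values. The extra details you supply (explicitly checking that the guessed bundles form a valid connected split, and the remark on clearing denominators) are sound refinements of the same argument.
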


\begin{proof}
Let $U$ be a unicyclic graph and $u_1,\ldots, u_n$ rational-valued
utility functions defined on the nodes of $U$. The following non-deterministic
polynomial-time algorithm decides the problem {\sc Mms-Alloc-G}: (1) guess an
allocation $P_1,\ldots,P_n$, (2) compute the values $\mms^{(n)}(U,u_i)$,
$i=1,\ldots, n$, and (3) verify that $u_i(P_i) \geq \mms^{(n)}(U,u_i)$, $i=1,
\ldots,n$. By Theorem \ref{values}, the step (2) can be accomplished in
polynomial time. Thus, the entire algorithm runs in polynomial time.
\end{proof}

We do not have an argument for NP-hardness of the problem {\sc Mms-Alloc-G}.
Thus, we do not know whether the bound obtained in Corollary \ref{uni-NP} is
tight in general. We do know, however, that in some special cases it can be
improved. We will now discuss these results.

Our first group of results is concerned with the case when the number $m$
of goods is small with respect to the number $n$ of agents, specifically,
when $m\leq 2n$. Our results rely on the following simple observation.

\begin{proposition}\label{prop_single}
Let $C$ be a cycle, $u$ a utility function on $C$, and $n\geq2$ an integer.
For every node $x$ of $C$, $mms^{(n-1)}(C-x,u) \geq mms^{(n)}(C,u)$.
\end{proposition}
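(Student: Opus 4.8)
The plan is to start from an $\mms$-split of the cycle into $n$ bundles and, after deleting $x$, surgically merge the (at most two) fragments of the bundle that contained $x$ into neighbouring bundles, obtaining an $(n-1)$-split of the path $C-x$ whose bundles are no less valuable than the original $n$ bundles. So fix an $\mms$-split $P_1,\dots,P_n$ of $C$ for $n$ and $u$, so that $u(P_i)\ge \mms^{(n)}(C,u)$ for every $i$, and assume without loss of generality that $x\in P_n$. Two degenerate situations are immediate: if $P_n=V(C)$, then $P_1,\dots,P_{n-1}$ are all empty and hence (using $n\ge 2$) $\mms^{(n)}(C,u)=0$, so the inequality is trivial; and if $P_n=\{x\}$, then $P_1,\dots,P_{n-1}$ are $(C-x)$-bundles (a connected subgraph of $C$ missing $x$ is a sub-path of $C$, hence still connected in the path $C-x$) that already cover $V(C-x)=V(C)\setminus\{x\}$ and each have value at least $\mms^{(n)}(C,u)$.

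In the remaining case $\{x\}\subsetneq P_n\subsetneq V(C)$, observe that $P_n$, being a connected subgraph of the cycle through the degree-two vertex $x$, is an arc; deleting $x$ breaks it into at most two arcs $L$ and $R$, the portions on the two sides of $x$, and since the two neighbours of $x$ become the endpoints of the path $C-x$, the arcs $L$ and $R$ occupy the two ends of $C-x$. Each of $P_1,\dots,P_{n-1}$ is an arc of $C$ avoiding $x$, hence a sub-path of $C-x$, and together they cover precisely the ``middle'' segment $V(C-x)\setminus(L\cup R)$; if that segment is empty then $P_1,\dots,P_{n-1}$ are all empty, forcing $\mms^{(n)}(C,u)=0$ again, so assume it is nonempty. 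There is then a unique bundle among $P_1,\dots,P_{n-1}$ containing the endpoint of the middle segment adjacent to $R$; replace it by its union with $R$, which is again a sub-path of $C-x$, and do the symmetric replacement with $L$. The resulting $n-1$ sets are pairwise disjoint sub-paths of $C-x$ whose union is $(L\cup R)\cup(\text{middle})=V(C-x)$, i.e.\ a $(C-x,n-1)$-split.

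To finish, note that each bundle of this new split either equals some $P_i$ with $i\le n-1$ or is obtained from such a $P_i$ by adding goods, so by non-negativity and additivity of $u$ its value is at least $u(P_i)\ge \mms^{(n)}(C,u)$; hence $C-x$ has an $\mms^{(n)}(C,u)$-strong $(n-1)$-split, which yields $\mms^{(n-1)}(C-x,u)\ge \mms^{(n)}(C,u)$. I expect the only points requiring care to be the bookkeeping of the degenerate cases --- $P_n$ equal to the whole cycle, $P_n=\{x\}$, or $P_n\setminus\{x\}$ exhausting the path --- and the verification that the merges really yield contiguous arcs of $C-x$; everything else is immediate.
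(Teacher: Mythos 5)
Your proof is correct and takes essentially the same approach as the paper's: start from an mms-split of $C$, let the bundle containing $x$ shed its (at most two) fragments into the neighbouring bundles along the cycle, and discard the resulting singleton $\{x\}$, yielding an $(n-1)$-split of $C-x$ in which no bundle loses value. The only difference is that you treat the degenerate cases ($P_n=V(C)$, $P_n=\{x\}$, empty neighbours) explicitly, which the paper's shorter argument glosses over.
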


\begin{proof} Let $\Pi$ be an mms-split of $C$ for $u$ and $n$.
Let $P$ be the bundle in $\Pi$ containing $x$, and let $P'$ and $P''$
be the two bundles in $\Pi$ inducing in $C$ segments neighboring the one
induced by $P$, respectively preceding it and succeeding it when traversing
$C$ clockwise ($P'=P''$ if $n=2$).
We move all goods in $P$ that precede $x$ to $P'$ and those that succeed $x$
to $P''$. In this way each bundle still spans a connected segment in $C$.
Next, we remove the ``new'' $P$ (at this point, $P$ consists of $x$ only).
The result is a split $\Pi'$ of $C-x$ into $n-1$ bundles, in which every
bundle has value at least $\mms^{(n)}(C,u)$. Thus, $\mms^{(n-1)}(C-x,u)
\geq\mms^{(n)}(C,u)$.
\end{proof}

We now are ready to show that, when there are $m$ goods on a cycle and
$n$ agents, and $m<2n$, then mms-allocations are guaranteed to exist.
Moreover, we also show that this result is sharp --- having exactly $m=2n$
goods allows for situations when mms-allocations
do not exist.

\begin{theorem}\label{m<2n}
If $m < 2n$, then an mms-allocation of $m$ goods on a cycle $C$ to
$n$ agents exists and it can be computed in polynomial time.
\end{theorem}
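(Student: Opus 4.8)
The plan is to peel off a single good for a single agent, reducing the cycle to a path, and then invoke the known tree result. First I would dispose of the degenerate case: if $\mms^{(n)}(C,u_i)=0$ for every agent $i$, then any $(C,n)$-split is an mms-allocation (each agent gets value at least $0$), so from now on assume that some agent — call her agent $1$ after reindexing — has $\mms^{(n)}(C,u_1)>0$. By Theorem \ref{values} (a cycle is unicyclic) we can compute in polynomial time the value $\mms^{(n)}(C,u_1)$ together with a corresponding mms-split $\Pi=(P_1,\ldots,P_n)$ of $C$ for $u_1$ and $n$. Every bundle of $\Pi$ has $u_1$-value at least $\mms^{(n)}(C,u_1)>0$, hence is non-empty. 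If no bundle of $\Pi$ were a singleton, then every one of the $n$ bundles would span at least two nodes of $C$, giving $m\geq 2n$ and contradicting the hypothesis $m<2n$; therefore some bundle of $\Pi$ is a singleton $\{x\}$, and $u_1(\{x\})\geq \mms^{(n)}(C,u_1)$. This is the only place the assumption $m<2n$ is used.

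Next I would assign the bundle $\{x\}$ to agent $1$ and handle the remaining $n-1$ agents on $C-x$. Since $x$ is a single node, $C-x$ is a path, hence a tree, so by Theorem \ref{bouv-alloc} there is a polynomial-time computable mms-allocation $Q_2,\ldots,Q_n$ of the goods of $C-x$ to agents $2,\ldots,n$; that is, $u_i(Q_i)\geq \mms^{(n-1)}(C-x,u_i)$ for every $i\in\{2,\ldots,n\}$. By Proposition \ref{prop_single}, $\mms^{(n-1)}(C-x,u_i)\geq \mms^{(n)}(C,u_i)$, so $u_i(Q_i)\geq \mms^{(n)}(C,u_i)$ for each such $i$, while $u_1(\{x\})\geq \mms^{(n)}(C,u_1)$ by construction. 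Hence the allocation $\{x\},Q_2,\ldots,Q_n$ is an mms-allocation of the goods on $C$ to the $n$ agents. It is produced by a constant number of polynomial-time steps: computing the value and an mms-split for agent $1$ via Theorem \ref{values}, scanning $\Pi$ for a singleton, and running the tree algorithm of Theorem \ref{bouv-alloc} once on $C-x$.

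I do not anticipate a genuine obstacle in this argument; the two points that need care are (i) that an mms-split of agent $1$ necessarily contains a singleton bundle — which is exactly the pigeonhole step that consumes the hypothesis $m<2n$ — and (ii) that the mms threshold does not increase against us when passing from the cycle with $n$ agents to the path with $n-1$ agents, which is precisely Proposition \ref{prop_single}. (An alternative opening would be to apply Proposition \ref{prop2a} to assume all agents are regular, so that non-emptiness of bundles in mms-splits is automatic; this is not necessary, and the direct argument above already delivers the polynomial-time bound on the original instance without tracking the rescaling of utilities.)
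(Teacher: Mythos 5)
Your proposal is correct and follows essentially the same route as the paper's proof: locate a singleton bundle in one agent's mms-split (the pigeonhole use of $m<2n$), assign it to that agent, and allocate the remaining goods on the path $C-x$ via Theorem \ref{bouv-alloc}, with Proposition \ref{prop_single} guaranteeing the thresholds do not rise. Your explicit treatment of the all-zero-mms case and the choice of an agent with positive maximin share is a slightly more careful justification of the singleton step than the paper gives, but it is not a different argument.
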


\begin{proof}
Let us consider agents $1,\ldots, n$ with utility functions $u_1,\ldots, u_n$.
Let $\Pi$ be an mms-split for the agent $n$. Clearly, for every bundle
$P\in \Pi$, $u_n(P) \geq \mms^{(n)}(C,u_n)$. Since $m<2n$, at least one
bundle in $\Pi$ consists of only one element, say $x$. It follows that $u_n(x)
\geq \mms^{(n)}(C,u_n)$. Since $C-x$ is a path (and so, a tree), Theorem
\ref{bouv-alloc} implies that there is an allocation giving each agent $j$,
$1\leq j\leq n-1$, a bundle valued at least $\mms^{(n-1)}(C-x,u_j)$. By Proposition
\ref{prop_single}, each such bundle is valued at least $\mms^{(n)}(C,u_j)$.
Thus, this allocation extended by the bundle $\{x\}$, allocated to the agent
$n$, forms an mms-allocation of the goods on $C$ to $n$ agents.

For an algorithm, we (1) compute the maximin share $\mms^{(n)}(C,u_n)$;
(2) select an item $x$ such that $u_n(x) \geq \mms^{(n)}(C,u_n)$ (as
argued above, such an $x$ exists); (3) construct an mms-allocation of
goods on the path $C-x$ to agents $1,\ldots, n-1$ (possible as $C-x$ is
a tree); and (4) extend the allocation constructed in (3) by giving
$\{x\}$ to $n$.

By our argument above, the allocation constructed in step (4) is an
mms-allocation. Steps (1) and (3) can be accomplished in polynomial time by
Theorems \ref{values} and \ref{bouv-alloc}, respectively. The same obviously holds for steps (2) and (4). Thus, the algorithm we described runs in
polynomial time.
\end{proof}

This result is sharp. If $m=2n$ and $n>3$ then an mms-allocation of $m$
goods on a cycle to $n$ agents may not exist as shown by the example in
Figure \ref{fig4}.

\begin{figure*}[ht!]
{\small
\centerline{
\begin{tabular}{|l|c|c|c|c|c|c|c|c|c|c|c|c|}
  \hline
 & $v_1$ & $v_2$ & $v_3$ & $v_4$ & $v_5$ & $v_6$ & $\ldots$ & $v_{2n-1}$ & $v_{2n}$\\
  \hline
agents $1,\ldots,n-2$  & $n$ & $1$ & $n-1$ & $2$ & $n-2$ & $3$ & $\ldots$ & $1$ & $n$ \\
  \hline
agents $n-1,n$  & $n$ & $n$ & $1$ & $n-1$ & $2$ & $n-2$ & $\ldots$ & $n-1$ & $1$  \\
  \hline
\end{tabular}}
\caption{An example showing that the result in Theorem \ref{m<2n}
is sharp.}
\label{fig4}
}
\end{figure*}

Indeed, all maximin share values are $n+1$. Thus, every bundle in any
mms-allocation would have to contain at least two elements. Since $m=2n$,
every bundle would have to consist of exactly two elements. There are only
two splits of a cycle with $m=2n$ nodes into $n$ bundles of two consecutive
goods. None
of them is an mms-allocation. The assumption that $n > 3$ is essential. We
have seen earlier that if $n=2$ then mms-allocations exist for every number
$m$ of goods. Thus, they exist for $m=2n=4$. If $n=3$, mms-allocations of
$m$ goods on a cycle exist for every $m\leq 8$ (cf. Theorem \ref{th1}). In
particular, they exist if $m=2n=6$.

As our example shows, if $m=2n$, there are cases when mms-allocations do
not exist. However, whether an mms-allocation exists in the case when $m=2n$
can be decided in polynomial time and, if so, an mms-allocation can be computed
efficiently.

\begin{corollary}\label{m=2n}
There is a polynomial time algorithm deciding the existence of an
mms-allocation of $2n$ goods on a cycle to $n$ agents, and computing one,
if one exists.
\end{corollary}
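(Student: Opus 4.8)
The plan is to reduce the existence question to a constant number of subproblems, each solvable in polynomial time with tools already at hand. First, using Theorem~\ref{values} I would compute in polynomial time the values $q_i:=\mms^{(n)}(C,u_i)$ for $i=1,\dots,n$; by definition an mms-allocation is exactly a split $P_1,\dots,P_n$ of $C$ with $u_i(P_i)\ge q_i$ for all $i$. Since $C$ has $m=2n$ goods and a split uses $n$ bundles, every such split is of one of two kinds: \emph{type A}, where some bundle has at most one good; or \emph{type B}, where every bundle has exactly two goods (if all $n$ bundles had size at least two with total $2n$, all sizes would have to equal two).

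For type A I would observe that a type-A mms-allocation exists if and only if there is a pair $(j,x)$, with $x\in V(C)$, such that $u_j(\{x\})\ge q_j$. Indeed, in a type-A mms-allocation the agent $j$ holding the small bundle $B$ satisfies $u_j(\{x\})\ge u_j(B)\ge q_j$ for any $x$ with $B\subseteq\{x\}$; conversely, given such a pair $(j,x)$, assign $\{x\}$ to $j$ and note that $C-x$ is a path, hence a tree, so by Theorem~\ref{bouv-alloc} the remaining $n-1$ agents admit an mms-allocation of $C-x$, which by Proposition~\ref{prop_single} gives each such agent $i$ a bundle of value at least $\mms^{(n-1)}(C-x,u_i)\ge\mms^{(n)}(C,u_i)=q_i$; appending $\{x\}$ produces an mms-allocation of $C$. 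So type A is settled by checking the $O(n^2)$ pairs $(j,x)$ for the inequality $u_j(\{x\})\ge q_j$, and invoking the polynomial-time tree algorithm of Theorem~\ref{bouv-alloc} once a good pair is found. Note that this also disposes of allocations with an empty bundle: such a bundle forces $q_j=0$, and then every pair $(j,x)$ passes the test.

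For type B I would use the fact that a cycle on nodes $v_1,\dots,v_{2n}$ (in cyclic order) has exactly two splits into $n$ bundles of two consecutive goods: the one whose bundles are $\{v_{2k-1},v_{2k}\}$, $k=1,\dots,n$, and the one whose bundles are $\{v_{2k},v_{2k+1}\}$, $k=1,\dots,n$ (indices modulo $2n$). For each of these two splits I would build the bipartite graph joining agent $i$ to a bundle $Q$ of the split whenever $u_i(Q)\ge q_i$, and test for a perfect matching; a perfect matching is precisely a type-B mms-allocation using that split, and the test is polynomial. The algorithm answers YES, together with the witnessing allocation, if the pair-search or one of the two matchings succeeds, and NO otherwise; the NO answer is correct because any mms-allocation is of type A or of type B, and in each case the procedure above detects it.

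The main obstacle, I expect, is pinning down the type A / type B dichotomy and the type-A equivalence precisely — in particular, checking that in type A nothing is lost by restricting attention to handing some agent a singleton and then appealing to the tree result of Theorem~\ref{bouv-alloc} together with Proposition~\ref{prop_single}. Once that is nailed down, the polynomial running time is immediate from Theorem~\ref{values}, Theorem~\ref{bouv-alloc}, and the polynomiality of bipartite matching, and the whole argument is short.
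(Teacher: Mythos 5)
Your proposal is correct and follows essentially the same route as the paper: compute the maximin shares via Theorem~\ref{values}, handle the case of a singleton good worth at least some agent's share exactly as in Theorem~\ref{m<2n}, and otherwise reduce to the two splits of the cycle into consecutive pairs. Your only addition is to make explicit, via bipartite matching, how the bundles of each pair-split are assigned to agents, a step the paper leaves implicit in the phrase ``one can check whether any of them is an mms-allocation in polynomial time.''
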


\begin{proof} Let $C$ be the cycle and $u_1,\ldots,u_n$ the
rational-valued utility
functions of the agents $1,\ldots, n$. We first compute the values
$\mms^{(n)}(C,u_i)$, $i=1,\ldots, n$ (cf. Theorem \ref{values}). If
for some item $x$ and agent $i$, $u_i(x)\geq \mms^{(n)}(C,u_i)$, then
reasoning as in the proof of Theorem \ref{m<2n} one can show that an mms-allocation for
$C$ exists, and that it can be found in polynomial time.
Otherwise, if there is an mms-allocation of goods on $C$ to $n$ agents,
every bundle in this allocation consists of two consecutive goods. There
are only two candidates for such allocations and one can check whether any
of them is an mms-allocation in polynomial time (as the values
$\mms^{(n)}(C,u_i)$ are known).
\end{proof}

The next result of this section concerns the case of $n$ agents of $t$ types,
where $t$ is fixed and is not a part of the input. Before we present our result
and prove it, we discuss how inputs and outputs are represented.

As we already noted earlier, instances to an mms-allocation problem on a
cycle are specified by a non-negative integer $m$ representing the number
of goods, and $n$ $m$-element sequences of non-negative rational
numbers, each sequence representing a utility function. We note that $m$
implicitly defines the goods as $v_1,v_2,\ldots,v_m$, as well as the cycle
as having
these goods appear on it in this order. Further, we adopt a natural
convention that for every $i=1,
\ldots, m$, the $i$th element in every utility sequence provides the utility
value for the good $i$.
%Clearly, the size of the input Under this representation, say $S$,
%satisfies $S=\Omega(nm)$.
This representation consists of $nm+1$ rational numbers (one of them, $m$,
an integer).\footnote{We note that we
do not include the number of agents, $n$, in the input; if we need it, we
can compute it by counting in the input the sequences representing the
utility functions.}

In the case when agents can be grouped into $t$ types, where $t$ is fixed
and not part of input, the case we are now considering, input instances can
be represented more concisely by a non-negative integer $m$, $t$ sequences
of $m$ non-negative rational numbers ($t$ utility functions) and $t$
non-negative integers $s_1,\ldots, s_t$, where each $s_r$ represents the
number of agents of \emph{type $r$} (that is, having $u_r$ as their utility
function). Thus, an instance consists of $tm=\Theta(m)$ rational numbers
represented by pairs of integers, and $t+1 =\Theta(1)$ integers.
We will use $M$ to denote the length of the binary representation of the
largest of the integers appearing in this representation, and $S$ to denote
the total size of the binary representations of the integers in the instance.
In particular, we have that  $S=\Omega(m+M)$.

We observe that if $n>m$, then the maximin share for each agent is 0 and,
consequently, every allocation of $m$ goods to $n$ agents is an
mms-allocation. Since $t$ is fixed, computing $n=s_1+\ldots+s_t$ takes
time $O(M)$ and, consequently, also $O(S)$. Thus, this case can be recognized
in time $O(S)$ and
then handled directly. Namely, we return a sequence of $t$ 0's as the maximin
shares for agents of types $1,2\ldots, t$ and, if an mms-allocation is
required, we allocate all goods to agent 1 and empty bundles of goods to
all other agents. We do not generate any explicit representation for this
allocation. It is implicitly identified by the all-0's output of the
maximin shares. This is to avoid having to output explicitly an $n$-element
sequence of bundles, as $n$, the length of this sequence, may be exponential
in the size of input.
It follows that the key case is then the case $n\leq m$.

\begin{theorem}\label{two-types}
For every integer $t\geq 1$, there is a polynomial time algorithm deciding
existence of an mms-allocation of $m$ goods on a cycle to $n$ agents of $t$
types (and computing one, if one exists).
\end{theorem}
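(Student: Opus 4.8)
The plan is to reduce the problem, via Proposition \ref{prop2a}, to the case of $n$ agents of at most $t$ \emph{regular} types, and then to observe that with $t$ fixed there are only polynomially many ``combinatorially distinct'' candidate allocations that need to be considered. First I would dispose of the easy cases: if $n>m$ every allocation is an mms-allocation (as noted in the discussion preceding the theorem), and if $m<2n$ we are done by Theorem \ref{m<2n}; so I may assume $2n\le m\le\text{(something linear in }m)$ and in particular $n\le m/2$. Using Theorem \ref{values} I compute, in polynomial time, the values $\mms^{(n)}(C,u_r)$ for each of the $t$ type-utilities $u_r$; these are the thresholds $q_r$ that a bundle given to an agent of type $r$ must meet. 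The decision problem thus becomes: is there a split of the cycle $C$ into $n$ connected bundles $B_1,\dots,B_n$, together with an assignment of a type to each bundle respecting the supply counts $s_1,\dots,s_t$, such that every bundle assigned type $r$ has $u_r$-value at least $q_r$?

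The key structural point is that, because there are only $t$ types, we do not need to track which individual agent gets which bundle, only how many bundles of each type are ``cut out'' of the cycle. I would set up a dynamic program that sweeps around the cycle. Fix a starting point by first guessing one edge $e$ of $C$ that separates two consecutive bundles (there are $m$ choices, and some such edge exists since in a regular instance all bundles are nonempty); cutting $e$ turns the problem into one of splitting the \emph{path} $C-e = v_1v_2\cdots v_m$ into $n$ consecutive intervals. Now a left-to-right dynamic program on the path with state $(j, (a_1,\dots,a_t))$ — meaning ``the prefix $v_1\cdots v_j$ has been partitioned into intervals, $a_r$ of which have been validly assigned to type $r$'' — decides reachability of the state $(m,(s_1,\dots,s_t))$. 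Transitions append one more interval $v_{j+1}\cdots v_{j'}$ and, for each type $r$ with $a_r<s_r$, check in $O(1)$ (after $O(m)$ prefix-sum precomputation per type) whether $u_r(v_{j+1}\cdots v_{j'})\ge q_r$; if so, move to $(j',(a_1,\dots,a_r+1,\dots,a_t))$. The number of states is $O\!\big(m\cdot\prod_r (s_r+1)\big)=O(m\cdot n^t)=O(m^{t+1})$, which is polynomial since $t$ is a constant, and each state has $O(mt)$ outgoing transitions; running this for each of the $m$ guesses for $e$ keeps the whole algorithm polynomial. An mms-allocation exists iff some guess of $e$ makes the target state reachable, and a witnessing split (hence allocation) is recovered by back-pointers; by Proposition \ref{prop2a} this allocation, re-expressed for the original (non-regular) utilities, is still an mms-allocation.

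A couple of technical points need care. First, the rescaling in Proposition \ref{prop2a} is used only to justify \emph{existence}; the algorithm itself works directly with the original rational utilities, computing the genuine $\mms^{(n)}(C,u_r)$ by Theorem \ref{values} and testing bundle values against them, so no precision is lost. Second, one must confirm that the DP really captures all splits of the cycle: every split of $C$ into $n$ nonempty connected bundles has at least one ``boundary edge,'' and conditioning on which edge that is exactly corresponds to the $m$ outer iterations, while allowing empty trailing intervals (or simply noting that in the regular reduced instance all bundles are nonempty) handles bookkeeping at the ends; if we prefer to keep empty bundles in the model we can pad by allowing ``append an empty interval'' transitions that increment any $a_r$ freely, which does not change the polynomial bound. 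Finally, the output convention for the $n$-element allocation is the one already fixed in the text: we emit the $t$ maximin-share values, and an explicit allocation only when demanded, being careful — as in the paragraph preceding the theorem — not to write out an $n$-length object when $n$ may be exponential in the input size; here, however, we are in the case $n\le m$, so an explicit listing of the $n$ bundles is of size polynomial in $m$ and causes no difficulty. The main obstacle, and the part that needs the most careful writing, is bounding the DP cost honestly: one must argue that $\prod_r(s_r+1)\le (n+1)^t$ and that $n\le m$ in the relevant case, so that $m^{t+1}$ (with the extra factor $m$ from guessing $e$) is genuinely polynomial in the succinct input size $S=\Omega(m+M)$, and that each transition's validity check is $O(1)$ after linear-time preprocessing of prefix sums for the $t$ utility functions.
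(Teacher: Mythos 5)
Your proposal is correct and follows essentially the same route as the paper: compute the per-type thresholds $q_r=\mms^{(n)}(C,u_r)$ via Theorem \ref{values}, reduce to a path by cutting the cycle at a guessed boundary edge, and run a dynamic program over type-count vectors, which is polynomial because $t$ is fixed. The only differences are cosmetic --- the paper's DP stores the minimal prefix length $T(h_1,\dots,h_t)$ rather than reachability of states $(j,(a_1,\dots,a_t))$, and it needs neither Proposition \ref{prop2a} nor Theorem \ref{m<2n} --- though you should be careful that an ``append an empty interval'' transition for type $r$ is legal only when $q_r=0$, not ``freely.''
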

\begin{proof}
Let us consider an instance to the problem given by an integer $m$ (the
number of goods), $t$ sequences $u_1,\ldots, u_t$, each consisting of $m$
non-negative rational numbers ($t$ utility functions), and $t$
non-negative integers $s_1,\ldots, s_t$ (the numbers of agents of type
$1,\ldots, t$, respectively).
We start by computing $n=s_1+s_2+\ldots+s_t$, which can be accomplished
in $O(S)$ time, as $t$ is fixed (we recall that we write $S$ for the size
of the input instance). The case $n>m$ has been discussed above.
Thus, in what follows we assume that $n\leq m$. An important consequence of
this assumption is that for every $r=1,\ldots, t$, $s_r\leq m$.

%Second, it allows us to represent
%allocations explicitly, as sequences of $n$ bundles. Since goods are
%identified by integers $1,\ldots, m$, bundles are disjoint and empty bundles
%can be represented in space of size $\Theta(1)$, it follows that the size of
%(an explicit representation of an allocation is $O((m+n)M)=O(m M)$.
%Since $S$, the size of input, satisfies $S=\Omega(m+M)$, allocations can be
%represented by sequences of size polynomial in $S$.

To decide whether there is an mms-allocation and, if so, to compute it, we
compute the values $q_r=\mms^{(n)}(C,u_r)$, $r=1,\ldots, t$.
%where $n =s_1+\ldots+s_t$.
Since $t$ is fixed, Theorem \ref{values} implies that all these values
can be computed in time bounded by a polynomial in $|u|$, where $|u|$
represents the largest size of the representation of an input utility function
and, consequently, also by a polynomial in $S$.

We now observe that an allocation $\Pi$ of goods on $C$ to agents of $t$ types
given by $u_1,\ldots, u_t$ is an mms-allocation on $C$ if and only if for some
edge $e$ of $C$, $\Pi$ is an allocation of goods on the path $C-e$ to these
agents such that for every agent of type $r$, the bundle $P\in \Pi$ allocated
to that agent satisfies $u_r(P)\geq q_r$.

Thus, to prove the assertion, it is enough to show that the following
problem has a polynomial-time solution: Given a \emph{path} $F$ of $m$
goods $v_1,v_2, \ldots,v_m$, appearing in this order on $F$, $t$
rational-valued utility functions $u_1,\ldots, u_t$, $t$ non-negative
integers $s_1,\ldots, s_t$ such that $n=s_1+\ldots+s_t\leq m$, and $t$
non-negative integers $q_1, \ldots, q_t$, find an allocation $\Pi$ of goods
on $F$ to $s_1+\ldots+s_t$ agents such that for every agent of type $r$,
the bundle $P$ assigned to that agent satisfies $u_r(P)\geq q_r$.

%We note that we can represent each allocation explicitly as a sequence
%of $n$ bundles. Indeed, the bundles are disjoint, empty bundles can be
%represented in space $\Teta(1)$.  Indeed, by our assumption
%We first note that without loss of generality we may assume that for every
%$r=1,\ldots, t$, $q_t>0$. Indeed, removing from the original instance, say $S$,
%of our problem all $s_r$ agents of type $r$ whenever $q_r=0$ (eliminating from
%the specification of the input sequence all integers $s_r$ such that $q_r=0$,
%and all integers $q_r=0$) yields an instance, say $S'$, of our problem with all
%values $q_r$ positive. It is easy to see that $S$ has a solution if and only
%if $S'$ has a solution, and that any solution to $S'$ is also a solution
%to $S$ under the representation of allocations that we adopted. We note that
%the reduction runs in constant time (as $t$ is fixed).

%Thus, from now on we will assume that for every $r=1,\ldots, t$, $q_r>0$. If
%$n=s_1+\ldots+s_t > m$, then there is no solution (in every allocation,
%some agent gets an empty bundle). It remains then to handle the case when
%$n \leq m$. Importantly, when $n\leq m$, we also have $s_r\leq m$, for every
%$r=1,\ldots, t$.

To describe our method, we define $\cH$ to be the set of all sequences
$(h_1,\ldots, h_t)$ such that for every $r=1,\ldots,t$, $0\leq h_r \leq s_r$.
Clearly, the number of sequences in $\cH$ is bounded by $(s_1+1)\cdot\ldots
\cdot (s_t+1)\leq (m+1)^t=O(m^t)=O(S^t)$. If $H$ denotes a sequence
$(h_1,\ldots, h_t)\in \cH$ and $h_r> 0$, where $1\leq r\leq t$, then we write
$H_r^{-}$ for the sequence $(h_1,\ldots,h_{r-1},h_r-1,h_{r+1},\ldots,h_t)$
which, we note, also belongs to $\cH$.

For a sequence $(h_1,\ldots, h_t)\in \cH$, we define $T(h_1,\ldots, h_t)$
to be the smallest $j$ such that there is an allocation of goods $v_1,\ldots,
v_j$ to $h_1+\ldots+h_t$ agents, exactly $h_r$ of them of type $r$, $1\leq
r\leq t$, so that each agent of type $r$ obtains a bundle worth at least
$q_r$. If such an allocation does not exist, $T(h_1,\ldots,h_t)$ is set to
$\infty$.

Next, for every $j=1,\ldots, m$ and every $r=1,\ldots,t$, we define $k(j,r)$
to be the minimum $k\geq j$ such that $u_r(\{v_{j+1},\ldots, v_k\})\geq q_r$.
We set $k(j,r)=\infty$ if such a $k$ does not exist. Clearly, all values
$k(j,r)$ can be computed in time $O(tmM)=O(m M)=O(S^2)$.

Let $(h_1,\ldots,h_t)\in \cH$. We will show that $T(h_1,\ldots,h_t)$ can
be efficiently computed. Clearly, $T(0,\ldots, 0)=0$ (there are no agents
to get any bundles and so, even the empty path suffices). Thus, let us
consider $H=(h_1,\ldots,h_t)\in\cH$ and let us assume that $h_1+\ldots+h_t
>0$. Let us define $I = \{r: h_r>0,\ 1\leq r\leq t\}$. It is easy to see that
\begin{equation}
\label{eq:dp}
%T(H) = min_{r\in I}(T(H_r^{-}) + k(T(H_r^{-}),r)).
T(H) = min\{k(T(H_r^{-}),r)\colon r\in I\}.
\end{equation}
Assuming that all values $k(T(H_r^{-}),r)$, for $r\in I$, are known,
$T(H)$ can be computed in time $O(t)=O(1)$. It follows that with the initial
value of $T(0,\ldots, 0)=0$, considering sequences $H\in\cH$ in any order
consistent with non-decreasing sums of their elements and using the formula
(\ref{eq:dp}) to compute $T(H)$, one can compute $T(s_1,\ldots, s_r)$ in
polynomial time, in fact, in time $O(S^{\max(2,t)})$ (we recall that
all values $k(j,r)$ can be computed in time $O(S^2)$; also, the number of
entries in $T$ is $O(S^t)$ and each entry can be computed in $O(1)$ time).

If $T(s_1,\ldots, s_t) =\infty$, then there is no split $\Pi$ of $F$ such
that for every agent of type $r$, its bundle $P\in \Pi$ satisfies $u_r(P)
\geq q_r$. Otherwise,
%such a split exists (and can be constructed in polynomial time in the standard
%way for dynamic programming algorithms if in each application of the formula
%(\ref{eq:dp}) we record an index $r\in I$ that minimizes the expression
%$k(T(H_r^{-}),r)$).
%
%Since the number of elements is $\cH$ is bounded by $O(S^t)$, the entire
%table can be computed in polynomial time in $S$. Moreover, if
$T(s_1,\ldots, s_t)\neq \infty$ and a split solving the problem exists.
Moreover, it can be computed in
the standard way for dynamic programming algorithms. To this end, each time
the formula (\ref{eq:dp}) is applied we have to record an index $r\in I$ that
minimizes the expression $k(T(H_r^{-}),r)$). This information
allows us to construct the split in polynomial time in $S$.

%The number of entries $T(H)$ that we must compute in order to derive
%$T(s_1,\ldots,s_r)$ is given by $(s_1+1)(s_2+1)\ldots(s_t+1)$. We recall
%that for every $r$, $1\leq r\leq t$, we have $s_i\leq m$. Thus, the number
%of entires $T(H)$ we must compute is given by $\Theta(m^t)=O(S^t)$.

As all algorithmic tasks we presented can be accomplished in time bounded
by a polynomial in S, the assertion follows.
\end{proof}

%We we will now show that if the original instance is given in terms
%of rational utlities and a rational $c$ then this reduction can be computed
%in polynomial time. Indeed, we can rescale by the least common multiple
%of the denominators of all utility values. This yields an instance with
%all values being integer. Using Theorem 3.6, we reduce this case to the
%situation when all utility functions are proportional. Next, we rescale
%each utility function by dividing its values by (thetotal/n). All phases
%are polynomial in the size of the original representation.

We close this section with yet another corollary to Theorem \ref{values}.
It concerns a possibility of regularizing utility functions on unicyclic
graphs in polynomial time by converting a method used in the proof of
Proposition \ref{prop2a} into an algorithm. Let us
call a collection $\{u_1,\ldots,u_n\}$ \emph{trivial} if for every $i$,
$\mms^{(n)}(u_i)=0$. We will call all other collections non-trivial.

\begin{corollary}
\label{regular}
There is a polynomial time algorithm that, given a unicyclic graph $U$
of goods, and a non-trivial collection of rational-valued utility functions
$u_i$, $1\leq i\leq n$, on $U$, produces a rational-valued regular collection
of utility functions $u'_i$ on $U$ such that for every $c$, if a split $\Pi$
is a $c$-sufficient allocation with respect to $u'_i$'s
then it is a $c$-sufficient allocation with respect to $u_i$'s.
If the original utility functions are of at most $t$ types,
then the resulting utility functions are of at most $t$ types.
\end{corollary}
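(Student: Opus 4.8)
The plan is to turn the existence argument in the proof of Proposition \ref{prop2a} into an algorithm; the one non-obvious ingredient is that every mms-split used in that argument can be produced in polynomial time when the graph is unicyclic, which is exactly what Theorem \ref{values} provides. So I would first run the algorithm of Theorem \ref{values} once for each agent $i$, obtaining $\mms^{(n)}(U,u_i)$ together with a witnessing mms-split $\Pi_i$, in total time polynomial in $S$ (there are $n\le S$ calls, each polynomial). Since the collection is non-trivial, at least one agent $k$ has $\mms^{(n)}(U,u_k)>0$, and $\Pi_k$ is in hand. As in the proofs of Theorems \ref{delta} and \ref{values}, one may, if convenient, first clear denominators to make all $u_i$ integer-valued (recording the scaling factors to undo at the end); but Theorem \ref{values} accepts rational inputs directly, so this is optional and I would simply keep careful track of bit-lengths.

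Next I would carry out, algorithmically, the three transformations from the proof of Proposition \ref{prop2a}, each in polynomial time. (i) For every agent $i$ with $\mms^{(n)}(U,u_i)=0$, replace $u_i$ by $u_k$ and set $\Pi_i:=\Pi_k$; this is a copy operation, and it collapses all such agents into the single type of $u_k$, so the number of types cannot grow. Write $v''_i$ for the resulting functions. (ii) For every $i$ and every bundle $P\in\Pi_i$ with $v''_i(P)>\mms^{(n)}(U,v''_i)$, decrease the values of suitably chosen goods of $P$ (made canonical, e.g.\ processed in order of increasing index and reduced greedily, so that values stay non-negative) until the bundle value equals $\mms^{(n)}(U,v''_i)$; this yields $v'_i$ with $v'_i\le v''_i$ pointwise, hence $\mms^{(n)}(U,v'_i)\le \mms^{(n)}(U,v''_i)$, while $v'_i(P)=\mms^{(n)}(U,v''_i)$ on every $P\in\Pi_i$, so $v'_i$ is $n$-proportional with $\mms^{(n)}(U,v'_i)=\mms^{(n)}(U,v''_i)>0$. (iii) Set $k_i:=n/\sum_{y\in V(U)}v'_i(y)$ (a positive rational of polynomial bit-size, since $v'_i$ is non-trivial) and put $u'_i(x):=k_iv'_i(x)$; then each $u'_i$ is $n$-regular. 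If the initial clearing of denominators was performed, undoing it at this point only rescales the (already recorded) denominators and does not affect regularity.

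The correctness claim — that any $c$-sufficient allocation for $u'_1,\ldots,u'_n$ is a $c$-sufficient allocation for $u_1,\ldots,u_n$, simultaneously for all $c$ — is precisely the chain of inequalities $u_i(P_i)=v''_i(P_i)\ge v'_i(P_i)\ge c\cdot\mms^{(n)}(U,v'_i)=c\cdot\mms^{(n)}(U,v''_i)$ established in the proof of Proposition \ref{prop2a} (using $u_i\ge v''_i\ge v'_i$ and $k_iv'_i=u'_i$), so nothing new is required there. For type preservation, note that once steps (i)--(iii) are fixed to depend only on an agent's utility function — a deterministic mms-split from Theorem \ref{values}, the canonical greedy reduction in (ii), and the same arithmetic — equal inputs map to equal outputs, and since the trivial agents all merge into type $u_k$, the number of distinct utility functions among the $u'_i$ is at most that among the $u_i$, i.e.\ at most $t$.

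I expect the only genuine obstacle to be bookkeeping rather than mathematics: one must check that the rationals produced in steps (ii) and (iii) (and after any undoing of the initial scaling) keep numerators and denominators of size polynomial in $S$, and that the overall running time — dominated by the $n$ invocations of the algorithm of Theorem \ref{values} — is polynomial in $S$. All the conceptual content is already carried by Proposition \ref{prop2a} and Theorem \ref{values}.
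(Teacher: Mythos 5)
Your proposal is correct and follows the same overall pipeline as the paper's proof: rescale, compute the maximin shares, merge the zero-share agents into a type with positive share, make each utility function proportional, and normalize so that each sums to $n$. The one place you diverge is the implementation of the \emph{make proportional} step. The paper implements it by a good-by-good binary search: for each good $x$ it finds the smallest value $a$ with $\mms^{(n)}(U,v_i^{x\rightarrow a})=q_i$, and it iterates until no value can be lowered any further, which forces proportionality without ever using an explicit mms-split. You instead take the witnessing mms-split $\Pi_i$ that Theorem \ref{values} already returns and reduce values inside each over-valued bundle until its worth is exactly $q_i$ --- which is literally the construction from the existence proof of Proposition \ref{prop2a}, made canonical so that equal utility functions are transformed identically (a point you rightly flag as what makes type preservation go through). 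Both implementations are polynomial and correct; yours is more direct, since it exploits the split output of Theorem \ref{values} rather than only its value output, and it sidesteps the termination/iteration bookkeeping of the repeated binary-search reductions. One small imprecision: for an agent $i$ with $\mms^{(n)}(U,u_i)=0$ the first equality $u_i(P_i)=v''_i(P_i)$ in your correctness chain fails, because $v''_i=u_k\neq u_i$ for such agents; but their requirement $u_i(P_i)\geq c\cdot\mms^{(n)}(U,u_i)=0$ is vacuous, which is exactly how Proposition \ref{prop2a} disposes of that case, so nothing substantive is lost.
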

\begin{proof}
The algorithm follows the method we used in the proof of Proposition
\ref{prop2a}. It consists of the following key steps.
\begin{enumerate}
\item Rescale the utilities $u_i$ as described in the proof of Theorem
\ref{delta} to produce an
equivalent collection of integer-valued utilities $w_i$ (by equivalent
we mean determining the same splits as mms-allocations).
\item Compute the maximin shares $q_i=\mms^{(n)}(U,w_i)$, $1\leq i\leq n$.
\item Select one utility function $w_k$ such that $q_k>0$. For each
agent $i$ such that $q_i=0$, replace $w_i$ with $w_k$; denote the new
utility functions $v_i$, $1\leq i\leq n$.
\item For a good $x$ and a utility function $v_i$, use the binary search
method, similar to that used in the proof of Theorem \ref{delta}, to find
the smallest $a$ such
that $\mms^{(n)}(U,v_i^{x\rightarrow a}) = q_i$; replace $v_i$ with
$v_i^{x\rightarrow a}$; repeat for all goods as long as it is possible
to decrease a value of some utility function on some good.\footnote{Here,
$v_i^{x\rightarrow a}$ stands for the utility function obtained from $v_i$
by setting its value on $x$ to $a$ and keeping all other values as they are
in $v_i$.}
\item Rescale the computed utility functions so that the values of each of
them sum up to $n$. Call the resulting functions $u_i'$.
\end{enumerate}
The correctness of the method, both for the main statement and under the
restriction to utility functions of at most $t$ types, follows from the
argument used to prove
Proposition \ref{prop2a}. For the running time, we note that step (1)
runs in polynomial time (we argued this
in the proof of Theorem \ref{delta}). Step (2) runs in polynomial time (Theorem \ref{values}).
Step (3) runs in time $O(nm)$. Next, we note that the argument we used
in the proof of Theorem \ref{delta} to estimate the running time of the
``range narrowing'' binary search and the fact that the maximin shares can
be computed in polynomial time (Theorem \ref{values}) together imply that
step (4) runs in polynomial time. Step (5) consists of $O(nm)$ additions and
multiplications involving utility values produced in step (4) and the
integer $n$. Thus, it also runs in polynomial time in the size of the
original instance.
\end{proof}

\section{Approximate Maximin Share Allocations On a Cycle}
\label{approx}

We start with a simple observation that $\frac{1}{2}$-sufficient allocations always
exist and, moreover, can be found in polynomial time.

\begin{proposition}\label{one_half}
There is a polynomial time algorithm which, for any number of agents and
any number of goods on a cycle, constructs a $\frac{1}{2}$-sufficient
allocation.
\end{proposition}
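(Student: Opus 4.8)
The plan is to reduce to the regular case and then exploit a maximin-share split of one agent as a "backbone." By Proposition \ref{prop2a} it suffices to produce a $\frac{1}{2}$-sufficient allocation whenever all $n$ agents are $n$-regular; then $\mms(i)=1$ and the total value of all goods under each $u_i$ is $n$. (One must also handle the trivial case where all $\mms$ values are $0$ separately, as in the proof of Proposition \ref{prop2a}; there any split works.) So assume regularity and fix an mms-split $\Pi = Q_1,\ldots,Q_n$ for agent $n$; every bundle satisfies $u_n(Q_j)\ge 1$, and the bundles are nonempty segments partitioning the cycle $C$.

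The key idea is to pair up the bundles of $\Pi$ into consecutive blocks. Since the $Q_j$ are consecutive segments around the cycle, group them into $\lfloor n/2\rfloor$ blocks, each a union of two consecutive $Q_j$'s (if $n$ is odd, one block is a single $Q_j$), and then split each block back into the two original $Q_j$'s when we want finer pieces. The intended allocation: first handle agent $n$, then recurse on a path. More precisely, I would argue as follows. Take any node $x$ and let $P$ be the bundle of $\Pi$ containing $x$; assign $P$ to agent $n$ — it is $1$-sufficient, hence certainly $\frac12$-sufficient, for $n$. Now $C-x$ is a path. The remaining $n-1$ agents must each get a $\frac12$-sufficient bundle out of $C - P$ (which is also a path, in fact a sub-path of $C-x$). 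Here the crucial point: for each remaining agent $i$, $u_i(C-P) = n - u_i(P) \ge n-1$ (we may choose $P$ to be a single-element bundle only if one exists, but in general $u_i(P)$ could be large; so instead choose $x$, hence $P$, so that $u_i(P)$ is small for all $i$ — or abandon this and use the path algorithm directly).

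The cleanest route, which I expect to be the actual proof: reduce to a path and use a greedy sweep. After assigning agent $n$ the bundle of its mms-split containing an arbitrary chosen good, we are left with allocating a path $F$ to $n-1$ agents, where each agent $i<n$ values $F$ at $u_i(F)\ge n-1$. We now want a $\frac12$-sufficient allocation on a path; but here $\mms$ values are with respect to $n$, not $n-1$, so each agent $i$ only needs a bundle of value $\ge \frac12$. Sweep the path from one end: maintaining the set of unassigned agents, repeatedly cut off the shortest prefix worth $\ge\frac12$ to some unassigned agent and give it to that agent. The main obstacle — and the step to get right — is showing this greedy sweep never gets stuck: that when $k$ agents remain and a suffix $F'$ of the path is unassigned, some remaining agent $i$ has $u_i(F')\ge \frac12$, so the sweep can continue; and that the last agent receives a bundle of value $\ge \frac12$ as well. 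This should follow from a counting argument: each assigned bundle has value (to its recipient) just over $\frac12$ by minimality of the prefix cut, but we must bound the value these prefixes have to the \emph{other} agents, which is where the factor $\frac12$ (as opposed to something better) comes in and why improving past $\frac12$ is genuinely harder. Finally, polynomial time is clear: computing the mms-split for agent $n$ is polynomial by Theorem \ref{values}, and the greedy sweep over the path touches each good $O(n)$ times.
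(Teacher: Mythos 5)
There is a genuine gap in your argument, and it sits exactly where you flagged uncertainty. After giving agent $n$ a bundle $P$ of her own mms-split, you assert that every remaining agent $i$ satisfies $u_i(C-P)=n-u_i(P)\geq n-1$. This is false: the bound $u_i(P)\leq 1$ holds only for agent $n$, whose split it is. For another (regular) agent $i$ the value $u_i(P)$ can be close to $n$ --- take $u_i$ concentrated on a short arc that happens to lie inside $P$ --- so $u_i(C-P)$ can be less than $\frac{1}{2}$ and your greedy sweep is stuck before it starts. The parenthetical fix (choose $P$ small for all $i$) does not work either: averaging only guarantees that some bundle of agent $n$'s split is worth at most $1$ to a \emph{given} other agent, and there need not be a single bundle that is simultaneously small for all $n-1$ of them. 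Independently of this, the claim that the sweep never gets stuck is precisely the hard combinatorial content and is left as ``should follow from a counting argument''; the natural such argument needs each cut prefix to be worth less than $1$ to every still-unassigned agent (which already requires the unhandled assumption that no single good is worth $\frac{1}{2}$ or more to anyone), and even then needs $u_i(F)$ close to $n$, which fails by the first point. Machinery of this greedy kind does appear in the paper, but only later, via the algorithm $\alloc$ and Theorem \ref{algorithm}, and there the counting only closes because of the auxiliary set $S$ of agents admitting $c$-strong $(n-1)$-splits of $P-Q$.

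The paper's actual proof is far shorter and avoids all of this. Remove an arbitrary edge $e$ of the cycle to obtain a path. For any agent $i$, her mms-split of $C$ into $n$ bundles has at most one bundle broken by the removal of $e$, and one of the two resulting pieces retains at least half of that bundle's value to $i$; reattaching the other piece to its neighbouring bundle yields an $n$-split of the path in which every part is worth at least $\frac{1}{2}\mms^{(n)}(C,u_i)$ to $i$, so $\mms^{(n)}(P,u_i)\geq\frac{1}{2}\mms^{(n)}(C,u_i)$. One then simply runs the polynomial-time mms-allocation algorithm for trees (Theorem \ref{bouv-alloc}) on the path. No regularization, no distinguished agent, and no sweep is needed.
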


\begin{proof} We remove any edge $e$ from the cycle $C$ and get a path,
say $P$. The maximin share for each agent on $P$ is at least half of the
original maximin share for $C$. Indeed, let us consider an arbitrary agent
$i$ and her mms-split of $C$. By removing $e$, no more than one part of this
split may break into two pieces. The value for the agent $i$ of one of these
pieces is at least $\frac{1}{2}mms(i)$. We adjoin the other piece, if it is
present, to its neighboring part in the mms-split of $C$. In this way, we
obtain a split of $P$. Clearly, in this split of $P$ every piece is worth
to $i$ at least $\frac{1}{2}mms(i)$, so the claim follows.

Thus, a $\frac{1}{2}$-sufficient allocation can be found by (1) removing
any edge from the cycle, (2) applying to the resulting path the algorithm
by Bouveret \emph{et al.} \cite{BouveretCEIP17} that constructs
in polynomial time an mms-allocation for trees (cf. Theorem \ref{bouv-alloc}).
\end{proof}

To find a better guarantee than $\frac{1}{2}$ turns out to be non-trivial.
We will show that it can be improved to $(\sqrt{5}-1)/2 \approx 0.61803...$,
and that further improvements are possible if we restrict the number of agent
types.

%In arguments that follow, we typically restrict attention to regular
%sets of agents.
%By Proposition \ref{prop2a} and Corollary \ref{regular}, this assumption
%does not limit the generality of our results.}

\nop{The fact
that for unicyclic graphs we can compute maximin-share values in polynomial
time (cf.~Theorem \ref{values}) allows us to do more namely, to lift
polynomial-time algorithms constructing $c$-sufficient allocations in the
case of proportional utility functions to the case of arbitrary ones. Given
(arbitrary) utility functions $u_1,\ldots, u_n$ on goods in a unicyclic graph $U$,
we consider them in turn starting with $u_1$. When considering $u_i$, we pick
any node $x$ in $U$ and perform binary search on the range $[0..u_i(x)]$ to
find the smallest value $v$ such that $\mms^{(n)}(U,u_i^v)=\mms^{(n)}(U,u_i)$,
where $u_i^v$ is obtained from $u_i$ by replacing $u_i(x)$ with $v$. We then
replace $u_i$ with $u_i'$. We repeat the process for all nodes of $U$. Let us
denote the resulting utility function by $\hat{u}_i$ and let us consider an
mms-split $(P_1,\ldots,P_n)$ for $\hat{u}_i$. By construction, for every
$j=1,\ldots, n$, $\hat{u}_i(P_j) = \mms^{(n)}(U,u_i)$. Thus, $\hat{u}_i$ is
proportional. Moreover, $\mms^{(n)}(U,\hat{u}_i)= \mms^{(n)}(U,u_i)$.

The utility functions $\hat{u}_i$, $i=1,\ldots,n$, we construct in this way
are proportional. Moreover, if $(P_1,\ldots, P_n)$ is a \red{$c$-sufficient} mms-allocation
for $U$ with respect to the utility functions $\hat{u}_i$, $i=1,\ldots,n$,
then it is a \red{$c$-sufficient} mms-allocation for $U$ with respect to utility functions $u_i$,
$i=1,\ldots, n$.

Importantly, Theorem \ref{values} implies that the construction takes
polynomial time! Thus, polynomial-time algorithms constructing a $c$-sufficient
allocations in the proportional case give rise to polynomial-time algorithms
for the general case. These observations apply, in particular, to
mms-allocations as they are $c$-sufficient allocations with $c=1$. We use
these observations throughout this section.}

Let %\sout{$\frac{1}{2}<c\leq 1$}
$c$ be a positive real and let $N=\{1,\ldots,n\}$, where $n\geq 2$, be
a set of agents with arbitrary utility functions $u_1,\ldots, u_n$ on a set
of goods on a path, say $P$.
%\red{In this section we will use the
%term \emph{easy} when referring to an agent in $N$ (her utility function)
%with the maximin share 0.}
Figure \ref{fig2} shows an algorithm $\alloc$ that assigns to some
(possibly all) agents in $N$ bundles they value at $c$ or more.

\begin{figure}[ht]
\begin{tabbing}
\quad\=\quad\=\quad\=\quad\=\quad\=\quad\=\quad\=\quad\=\quad\=\quad\=
\quad\=\quad\=\\
$\alloc(N,P,Q,c)$\\
\>\>\% $P$ is a path; we fix its direction so that prefixes of $P$ are well
           defined\\
\>\>\% $N=\{1,\ldots,n\}$, $n\geq 2$, is a set of agents, each with
a utility function on $P$\\
\>\>\% $c$ is a positive real\\
\>\>\% $Q$ is a prefix of $P$ valued at least $c$ by at least one agent
in $N$\\
\>\ \\
1\>\>$S:=\{i\in N$: there is an $(n-1)$-split of $P-Q$ that is $c$-strong for $i\}$;\\
2\>\>$R:=N-S$;\\
3\>\>$j:=1$;\\
4\>\>{\bf while} $j\leq n$ and $P$ has value at least $c$ for some $i\in S\cup R$ {\bf do}\\
5\>\>\>$Q_j:=$ the shortest prefix of $P$ worth at least $c$ for some agent
               $i\in R\cup S$\\
6\>\>\>\textbf{if} $Q_j$ is worth at least $c$ to an agent $i\in R$
                            \textbf{then}\\
7\>\>\>\>assign $Q_j$ to $i$;\\
8\>\>\>\>$R:= R - \{i\}$\\
9\>\>\>\textbf{else}\\
10\>\>\>\>assign $Q_j$ to an agent $i\in S$ such that $Q_j$ is worth at least
    $c$ for $i$;\\
11\>\>\>\>$S:= S - \{i\}$;\\
12\>\>\>$P:=P-Q_j$;\\
13\>\>\>$j:=j+1$
%\red{14}\>\>\red{Assign an empty bundle to every agent $i\in R$ such that
%$\mms^{(n)}(i)=0$}\\
%\>\>\>\red{and remove these agents from $R$;}
\end{tabbing}
\caption{Algorithm $allocate$}
\label{fig2}
\vspace*{-0.15in}
\end{figure}

We will use this algorithm in our theoretical considerations on the
existence of $c$-sufficient allocations. Under the restriction to
rational-valued utilities and a rational $c$, the algorithm runs in
polynomial time. We will use it to argue the existence of polynomial-time
algorithms for constructing $c$-sufficient allocations.

First, let us discuss the algorithm $\alloc$ informally. The algorithm
starts by defining a set $S$ to consist of all agents $i$ in $N$ for whom
an $(n-1)$-split of $P-Q$ that is $c$-strong for $i$ can be found. The
algorithm then sets $R=N-S$.
We note that $N=S\cup R$, $S\cap R=\emptyset$ and that it may be that one of
the sets $S$ and $R$ is empty.

Next, the algorithm sets $j$ to 1 and proceeds to the loop (4--13). Each
time the body of the loop is executed, it assigns a bundle to an
``unassigned'' agent. Throughout the execution of the loop, $P$ denotes the
path consisting of unallocated goods, and $R\cup S$ contains all agents that
are as yet unassigned.
In each iteration $j$, the algorithm attempts to allocate a bundle
to an ``unassigned'' agent. At the start of that iteration $j-1$ agents have
received bundles selected as prefixes of the paths being considered in earlier
iterations. If $j>n$, then the loop terminates and all agents are assigned
bundles. If $j\leq n$, some agents are unassigned, $R\cup S$ contains all
unassigned agents and $P$ is a path on unallocated goods. If the value of
$P$ for each unassigned agent is less than $c$, no further assignments are
possible and the loop terminates. Otherwise $P$ has value at least $c$ for
at least one unassigned agent and an assignment can be made. The bundle for
the assignment, denoted by $Q_j$ is chosen as a \emph{shortest prefix} of
$P$ that has value at least $c$ for some unassigned agent. Selecting $Q_j$
as a prefix ensures that the remaining goods form a path. Selecting for $Q_j$
a shortest prefix that has value $c$ or more for some unassigned agent is
essential for a key property of the algorithm, which we will discuss later.
Once the bundle $Q_j$ is constructed, it is assigned. By construction,
there are unassigned agents that value $Q_j$ at $c$ or more. We select one
such agent, say $i$. We first check if such an agent $i$ can be found in
$R$ and if so, assign $Q_j$ to $i$.  Only if no agent in $R$ values $Q_j$
at $c$ or more, we select $i$ from $S$ (this selection is possible as at
least one unassigned agent values $Q_j$ at $c$ or more), and assign $Q_j$
to that $i$.
%\red{In the last step of the algorithm, every agent $i\in R$
%(hence, still unassigned) and such that $\mms^{(n)}(i)=0$ is allocated an
%empty bundle.}

The following proposition gives the key property of the algorithm $\alloc$.

\begin{proposition}\label{prop_S}
Let $P$ be a path of goods and $N=\{1,\ldots, n\}$, where $n\geq 2$, a set
of agents, each with a utility function on $P$, and $c$ a positive real.
Further, let $Q$ be a prefix of $P$ valued at least $c$ by at least one agent in $N$, and $S$ the set
computed by the algorithm $\alloc(N,P,Q,c)$ in line (1). When the algorithm
$\alloc(N,P,Q,c)$ terminates, $S=\emptyset$, that is, all agents included in
$S$ in line (1) are assigned bundles they value at $c$ or more.
\end{proposition}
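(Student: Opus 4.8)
The plan is to argue by contradiction: suppose some agent $i$ placed in $S$ in line (1) is still unassigned when $\alloc(N,P,Q,c)$ terminates. Since $i\in S$, there is an $(n-1)$-split of $P-Q$ that is $c$-strong for $i$. The loop can only stop with $i$ unassigned if at that moment the remaining path $P$ (call it $P^*$) is worth less than $c$ to every unassigned agent, in particular to $i$. Let me track how much of the original path has been consumed by the bundles $Q_1,\ldots,Q_{j-1}$ handed out before termination; note $j-1<n$ since $i$ (and hence at least one agent) is unassigned. The key is to compare these consumed bundles against the prefix $Q$ together with the parts of $i$'s own $(n-1)$-split of $P-Q$.

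First I would exploit the ``shortest prefix'' rule in line (5). Because each $Q_\ell$ is the \emph{shortest} prefix of the then-current path worth $\ge c$ to some unassigned agent, and because at every stage $i$ is unassigned, $Q_\ell$ cannot properly contain a prefix (of that current path) worth $\ge c$ to $i$; equivalently, $Q_\ell$ minus its last good is worth $<c$ to $i$. The aim is to show inductively that after $\ell$ iterations the consumed initial segment $Q_1\cup\cdots\cup Q_\ell$ is ``covered'' by $Q$ together with the first $\ell$ bundles of $i$'s guaranteed $(n-1)$-split of $P-Q$, in the sense that the consumed segment reaches no farther along $P$ than the right endpoint of the $\ell$-th bundle of that split. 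The base case uses that $Q$ itself is worth $\ge c$ to at least one agent, so the very first $Q_1$ is no longer than $Q$. For the inductive step, if the consumed segment ended inside the $\ell$-th block of $i$'s split, then the portion of $P$ from there to the end of the $(\ell+1)$-st block is worth $\ge c$ to $i$ (it contains a full block of a $c$-strong split), so a prefix worth $\ge c$ to $i$ exists within that stretch; minimality of $Q_{\ell+1}$ then forces $Q_{\ell+1}$ to stop no later than the end of block $\ell+1$.

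Carrying this induction through $j-1\le n-1$ steps shows that at termination the consumed segment reaches at most the end of the $(j-1)$-st block of $i$'s $(n-1)$-split of $P-Q$, hence the leftover path $P^*$ contains all blocks numbered $j,\ldots,n-1$ of that split — and since $j-1\le n-1$ there is at least one such block, each worth $\ge c$ to $i$. Therefore $P^*$ is worth $\ge c$ to $i$, so $i$ is an unassigned agent for whom the remaining path still has value $\ge c$; this contradicts the loop-termination condition in line (4). Hence no agent of $S$ can remain unassigned, i.e. $S=\emptyset$ at termination.

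The main obstacle I anticipate is making the bookkeeping of ``how far along $P$ the consumed prefix reaches'' fully rigorous, especially handling the interaction between the initial prefix $Q$ (which is guaranteed worth $\ge c$ to \emph{some} agent but not necessarily to $i$) and the blocks of $i$'s $(n-1)$-split of $P-Q$, and dealing with the boundary bookkeeping when a $Q_\ell$ ends exactly at a block boundary or when some block of $i$'s split is empty. One must also be careful that it is genuinely $i$'s split of $P-Q$ that is used, so the indexing of blocks is against $P-Q$ while the indexing of the $Q_\ell$'s is against $P$; aligning these two with the single prefix $Q$ as the offset is the delicate point, but once the right invariant is stated the verification is routine.
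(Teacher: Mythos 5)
Your proposal is correct and follows essentially the same route as the paper: both argue by contradiction using the $(n-1)$-split of $P-Q$ guaranteed by $i\in S$, the fact that $Q_1$ lies within $Q$, and the shortest-prefix rule to show the consumed bundles can never overrun the next block of $i$'s split, so a full block (worth at least $c$ to $i$) survives and contradicts the termination condition; the paper merely phrases this as a minimal-counterexample argument (the first index $k$ with $(Q_1\cup\cdots\cup Q_{k-1})\cap D_k\neq\emptyset$) where you phrase it as a forward induction. The only blemish is a harmless off-by-one in your stated invariant (your own base case gives that after $\ell$ iterations the consumed segment reaches at most the end of block $\ell-1$, which is what makes the final count of surviving blocks work out).
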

\begin{proof}
We will denote by $P_0$ the original path $P$ and by $S_0$ the set $S$ as
computed in line (1). Let us assume that when the algorithm $\alloc(N,P_0,Q,c)$
terminates, $S\neq \emptyset$. Let $i\in S$. Since after line (1) the algorithm
never includes elements in $S$, it follows that $i\in S_0$.

Let us consider the value, say $j_t$ of the variable $j$ when the loop
(4--13) terminates. Clearly, the body of the loop was executed $j_t-1$
times and $j_t-1$ agents are assigned bundles $Q_1,\ldots, Q_{j_t-1}$.
By the conditions on the input parameters, the body of the loop executes
for $j=1$ and defines a bundle $Q_1$. This bundle is a prefix of $Q$ (because
of how we select prefixes $Q_j$).
Since $i\in S_0$, $P_0-Q$ has an $(n-1)$-split that is $c$-strong for $i$.
Since $Q_1$ is a prefix of $Q$, $P_0-Q_1$ has an $(n-1)$-split that is
$c$-strong for $i$. Let us denote this split by $D_2,\ldots, D_n$. Since
$i$ has not been assigned a bundle in iteration 1 (in any iteration, in fact)
and $D_2$ has value at least $c$ for $i$, at the beginning of iteration 2 we
have that $i\in S\cup R$ and the value of the path $P$ for $i$ is at least
$c$. Thus, the body of the loop executes for the second time. It follows that
$3\leq j_t$. Further, since $i$ is not assigned a bundle,
$j_t-1\leq n-1$.  Thus, $3\leq j_t\leq n$.

Let us assume that $(Q_1\cup \ldots\cup Q_{j_t-1})\cap D_{j_t}=\emptyset$. It
follows that in the iteration of the loop when $j=j_t$, $D_{j_t} \subseteq P$.
Thus, $P$ has value at least $c$ for $i$. Consequently, the body of the loop
would execute for $j=j_t$, a contradiction. Therefore, we have $(Q_1\cup
\ldots\cup Q_{j_t-1}) \cap D_{j_t}\neq \emptyset$.

Let $k$ be the smallest integer such that $3\leq k\leq j_t$ and
$(Q_1\cup\ldots\cup Q_{k-1})\cap D_{k}\neq \emptyset$. From our observation
above it follows that $k$ is well defined. Moreover, $(Q_1\cup\ldots \cup
Q_{k-2})\cap D_{k-1} =\emptyset$. Let $P$ be the path of unallocated goods
when $j=k-1$, that is, $P=P_0-(Q_1\cup\ldots\cup Q_{k-2})$. It follows that
$D_{k-1}$ is a subpath of $P$. Let $D$ be the shortest prefix of $P$ containing
$D_{k-1}$. It follows that $D$ is a strictly shorter prefix of $P$ than
$Q_{k-1}$ and $D$ has value at least $c$ to $i$. This is a contradiction with
the algorithm selecting $Q_{k-1}$ when $j=k-1$.
\end{proof}

%Extending the notation we introduced for splits, we call an allocation
%\emph{$q$-strong} if it assigns to each agent a bundle worth at least $q$
%to that agent. %We will prove the following theorem.
Extending the notation we used for splits, we call an allocation
\emph{$q$-strong} if it assigns to each agent a bundle worth at least
$q$ to that agent. We note that for regular sets of agents, $q$-strong and
$q$-sufficient allocations coincide.
%We will prove the following theorem.

\begin{theorem}\label{algorithm}
Let $P$ be a path of goods, $N=\{1,\ldots, n\}$, $n\geq 2$, a \emph{regular}
set of agents,
%each with a utility function on $P$ such that the total value of goods
%in $P$ equals $n$,
and $c$ a real such that $c\leq 1$. Further, let $Q$ be
a prefix of $P$ valued at least $c$ by at least one agent in $N$.
Let $R$ be the set computed by the algorithm $\alloc(N,P,Q,c)$ in line (2).
If no single good in $P$ has value at least $c$ for any agent in $N$ and
$|R|\leq \frac{1-c}{c}n +1$, then the algorithm $\alloc(N,P,Q,c)$ finds a
$c$-strong allocation of goods in $P$ to agents in $N$.
\end{theorem}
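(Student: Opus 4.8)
The plan is to derive the theorem from Proposition \ref{prop_S} together with a counting argument. Proposition \ref{prop_S} already guarantees that every agent placed in $S$ in line (1) ends the run with a bundle worth at least $c$, so it remains only to show that every agent placed in $R$ in line (2) is also assigned such a bundle. Once all $n$ agents have received bundles, the loop has executed exactly $n$ times and produced consecutive prefixes $Q_1,\dots,Q_n$ whose union is a prefix of $P$; a $c$-strong allocation of all of $P$ is then obtained by adjoining the (possibly empty) leftover suffix of $P$ to $Q_n$, which keeps that bundle connected and does not decrease its value. So the real content is that $R$ gets emptied.

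I would prove this by contradiction. Suppose some agent $i$ placed in $R$ in line (2) is never assigned. Agents leave $R$ only upon being assigned (line 8), so $i$ remains in $R$ throughout. Let $j_t$ be the value of $j$ at termination; then $j_t-1\le n-1$ agents have received bundles $Q_1,\dots,Q_{j_t-1}$ (so $j_t\le n$), and writing $P'$ for the path of unallocated goods at the end, regularity of $N$ gives
\[
n \;=\; u_i(P) \;=\; \sum_{\ell=1}^{j_t-1} u_i(Q_\ell) \;+\; u_i(P').
\]
Since $i$ is unassigned while the loop stops with $j_t\le n$, the loop condition forces $u_i(P')<c$. Each $Q_\ell$ is, by line (5), a shortest prefix of the then-current path worth at least $c$ to some then-unassigned agent; by the no-single-good hypothesis $Q_\ell$ has at least two goods, so deleting its last good leaves a strictly shorter prefix, worth less than $c$ to every then-unassigned agent --- in particular to $i$, who is always unassigned --- and adding the last good back (worth less than $c$ to $i$) gives $u_i(Q_\ell)<2c$. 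Finally, and this is the refinement that matters, if $Q_\ell$ was assigned to an agent of $S$ then the algorithm took the \textbf{else} branch, so no agent then in $R$, in particular not $i$, valued $Q_\ell$ at $c$ or more; thus $u_i(Q_\ell)<c$ for every such $\ell$.

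Let $a$ and $b$ count the bundles among $Q_1,\dots,Q_{j_t-1}$ assigned to agents of $R$ and of $S$ respectively, so $a+b=j_t-1\le n-1$. Substituting the three estimates into the displayed identity, $n<2ca+cb+c\le 2ca+c(n-1-a)+c=ca+cn$, hence $n(1-c)<ca$, i.e. $a>\frac{1-c}{c}\,n$. But these $a$ bundles go to $a$ distinct members of $R$, none equal to $i$, so $a\le|R|-1$ and therefore $|R|>\frac{1-c}{c}\,n+1$, contradicting the hypothesis $|R|\le\frac{1-c}{c}n+1$. Hence $R$ is emptied, as required.

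The step I expect to be delicate is pinning down the constant. Using only $u_i(Q_\ell)<2c$ for all $j_t-1\le n-1$ bundles yields nothing better than $n<2c(n-1)+c$, which is not even contradictory once $c>1/2$ and carries no information about $|R|$; it is the sharper bound $u_i(Q_\ell)<c$ for bundles given to $S$-agents, forced by the strict ``prefer $R$ over $S$'' rule in lines (6)--(8), that produces the clean inequality $a>\frac{1-c}{c}n$ and thus the exact threshold $|R|\le\frac{1-c}{c}n+1$. The hypothesis $c\le 1$ is used only so that $\frac{1-c}{c}\ge 0$ and the bound on $|R|$ is genuinely restrictive; ``no single good worth $c$ or more'' supplies the $<2c$ estimate, and regularity of $N$ supplies the equality $u_i(P)=n$.
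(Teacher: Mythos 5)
Your proposal is correct and follows essentially the same route as the paper's proof: fix a hypothetical unassigned agent, bound the value (to that agent) of each allocated bundle by $c$ when it goes to an $S$-agent (using the $R$-over-$S$ priority) and by $2c$ when it goes to an $R$-agent (using the shortest-prefix rule plus the no-single-good hypothesis), combine with regularity and the stopping condition, and derive $|R|>\frac{1-c}{c}n+1$. The only differences are cosmetic bookkeeping (you bound $a+b\le n-1$ and $a\le|R|-1$ where the paper substitutes $s=n-r$ and uses $k\ge 1$), plus your explicit remark about absorbing the leftover suffix, which the paper leaves implicit.
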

\begin{proof}
Let us assume that when the algorithm terminates, some agents are left without
a bundle.
Let us denote $s=|S|$, $r=|R|$, where $S$ and $R$ are computed in the lines (1)
and (2) of the algorithm $\alloc(N,P,Q,c)$ and let $k$ be the number of agents
that have no bundle when the algorithm terminates. By Proposition \ref{prop_S},
these agents are members of $R$. Moreover, by our assumption, $k\geq 1$.

Let $\ell$ be an unassigned agent. Clearly, in each iteration that assigns
a bundle to an agent from $S$, the value of that bundle is smaller than $c$ for
the agent $\ell$ (it is so because agents in $R$ have a preference over agents
in $S$ when bundles are assigned). Moreover, in each iteration $j$ when an
agent from $R$ is assigned a bundle, we recall this bundle is referred to as
$Q_j$, the value of $Q_j$ for the agent $\ell$ is smaller than $2c$.
Indeed, otherwise the prefix of $Q_j$ formed
by removing the last node of $Q_j$ would have value at least $c$ for
$\ell$ (since, by assumption, that node is worth less than $c$ to $\ell$).
This contradicts the property that $Q_j$ is a shortest prefix of the path
$P$ in the iteration $j$.

It follows that the total value for $\ell$ of all bundles constructed and
allocated by the algorithm is less than $cs + 2c(r-k)$. Consequently, the
value for $\ell$ of all unallocated goods when the algorithm terminates,
say $v_\ell$, satisfies $v_\ell> n-cs-2c(r-k)$ (since $N$ is a regular
collection of agents, for every agent in $N$, the total value of goods on
$P$ to that agent is $n$). On the other hand, by the stopping condition,
$v_\ell <c$. Thus, $c > n-cs-2c(r-k)$. Since $s=n-r$, it follows that
$r > \frac{1-c}{c}n+2k-1\geq\frac{1-c}{c}n+1$, a contradiction.

Thus, when the algorithm terminates, all agents are assigned bundles and
each bundle the algorithm allocates to an agent has value at least $c$
for that agent. In other words, the allocation defined by the algorithm is
$c$-strong.
\end{proof}

\begin{theorem}\label{any_number_or_agents}
Let $C$ be a cycle of goods and $N=\{1,\ldots, n\}$, where $n\geq 2$, a set
of agents, each with a utility function on $C$. Let
\[
c=\max_{d=n,n+1,\ldots}\min\left(\frac{n}{d},\frac{n}{\left\lceil\frac{n^2}{d}\right\rceil+n-2}\right)
\]
Then, there is a $c$-sufficient allocation of goods on $C$ to agents in $N$.
\end{theorem}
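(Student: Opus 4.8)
The plan is to reduce to regular agents and then apply Theorem \ref{algorithm}. By Proposition \ref{prop2a} it suffices to produce a $c$-sufficient allocation when all $n$ agents are $n$-regular; then $\mms(i)=1$ for each $i$ and every agent values $V(C)$ at $n$, and moreover $c\le 1$ because a $d$ attaining the maximum in the definition of $c$ satisfies $d\ge n$, so $c\le n/d\le 1$. The case $n=2$ is Corollary \ref{cor_2_agents}, so assume $n\ge 3$. I would then split into two cases. If some good $x$ is worth at least $c$ to some agent $i$, assign $\{x\}$ to $i$; the graph $C-x$ is a tree, so Theorem \ref{bouv-alloc} gives an mms-allocation of $C-x$ to the remaining $n-1$ agents, and by Proposition \ref{prop_single} each agent $j\ne i$ receives a bundle worth at least $\mms^{(n-1)}(C-x,u_j)\ge\mms^{(n)}(C,u_j)=1\ge c$; together with $u_i(\{x\})\ge c$ this is a $c$-sufficient allocation.

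So assume no good is worth $c$ or more to any agent. Fix agent $1$ and cut $C$ greedily with respect to $u_1$ into cyclically ordered arcs $A_1,\dots,A_D$ with $u_1(A_\ell)<c$ for each $\ell$; in the clean case one can also arrange $u_1(A_\ell\cup\{g_\ell\})\ge c$ for each $\ell$, where $g_\ell$ is the first good of $A_{\ell+1}$ (possible because every good is worth less than $c$). Since the $A_\ell$ partition $V(C)$ and each is worth less than $c$, $Dc>n$, i.e.\ $D>n/c$. For each $\ell$ let $e_\ell$ be the edge of $C$ preceding $A_\ell$, and $Q_\ell=A_\ell\cup\{g_\ell\}$, a prefix of the path $C-e_\ell$ worth at least $c$ to agent $1$; let $R_\ell$ be the set $R$ computed by $\alloc(N,C-e_\ell,Q_\ell,c)$ in line (2). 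The key estimate comes from double counting the cut edges of the agents' mms-splits of $C$. If $Q_\ell$ meets $k_i$ of the bundles of agent $i$'s mms-split (consecutive ones), then exactly $k_i-1$ of the $n$ cut edges of that split lie in the interior of $Q_\ell$; and if $k_i=1$, that is $Q_\ell$ lies inside a single bundle $B$ of $i$, then deleting $Q_\ell$ and reattaching the at most two remnants of $B$ to neighbouring bundles yields an $(n-1)$-split of $C-Q_\ell$ that is $1$-strong, hence $c$-strong, for $i$, so $i\notin R_\ell$. Thus every $i\in R_\ell$ has $k_i\ge 2$, so $|R_\ell|\le\sum_i(k_i-1)$, which is the number of pairs $(i,f)$ with $f$ a cut edge of $i$'s mms-split in the interior of $Q_\ell$. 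The interiors of $Q_1,\dots,Q_D$ are pairwise disjoint (each is the set of internal edges of some $A_\ell$ together with one boundary edge), so summing over $\ell$ counts each pair at most once, giving $\sum_\ell|R_\ell|\le n^2$ (each of the $n$ agents has $n$ cut edges). Hence $|R_{\ell^*}|\le\lfloor n^2/D\rfloor$ for some $\ell^*$.

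It remains to check $\lfloor n^2/D\rfloor\le\tfrac{1-c}{c}n+1$ and invoke Theorem \ref{algorithm}. Since $D>n/c$ we have $n^2/D<nc$, so $\lfloor n^2/D\rfloor\le\lceil nc\rceil-1$, and as $\tfrac{1-c}{c}n+1=n/c-n+1$ it is enough that $\lceil nc\rceil\le n/c-n+2$. This follows from the optimality of $d$ by a short case analysis: if the maximum defining $c$ is attained by its first argument $n/d$, the inequality is $\lceil n^2/d\rceil\le d-n+2$, which is precisely the statement that the first argument does not exceed the second; if it is attained by the second argument, so $c=n/(\lceil n^2/d\rceil+n-2)$, the inequality is $\lceil n^2/(\lceil n^2/d\rceil+n-2)\rceil\le\lceil n^2/d\rceil$, which holds since the second argument not exceeding the first forces $d\le\lceil n^2/d\rceil+n-2$ and the ceiling is monotone. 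With $|R_{\ell^*}|\le\tfrac{1-c}{c}n+1$, together with $c\le 1$, the agents regular, $Q_{\ell^*}$ a prefix of $C-e_{\ell^*}$ worth at least $c$ to agent $1$, and no single good worth $c$ or more, Theorem \ref{algorithm} shows that $\alloc(N,C-e_{\ell^*},Q_{\ell^*},c)$ returns a $c$-strong allocation of the goods on $C-e_{\ell^*}$; its bundles are connected in $C$, and since the agents are regular a $c$-strong allocation is $c$-sufficient, as required.

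The conceptual core is the double counting in the second paragraph. The main obstacle, I expect, is that the greedy decomposition need not close up with every arc "good": parity-type obstructions can force one exceptional arc, leaving only $D-1$ usable prefixes, and then the crude bound $\lfloor n^2/(D-1)\rfloor$ can just fail to beat $\tfrac{1-c}{c}n+1$ (already for $n=12$, $c=2/3$). Handling this properly — by a sharper accounting of the cut edges near the exceptional arc, or by a cleverer choice of the $Q_\ell$ — is where the real work lies, and it is presumably the reason for the ceiling in the definition of $c$ and for restricting attention to $n\ge 3$. The remaining ingredients (the reduction to regular agents, the tree case for "large" goods, and the integer arithmetic tying $D$, $n$ and $d$ together) are routine.
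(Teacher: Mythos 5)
Your reduction to regular agents, the treatment of a single $c$-valuable good, the use of Theorem \ref{algorithm}, and the double-counting of the $n^2$ cut edges of the agents' mms-splits against the interiors of the prefixes are all sound, and the final arithmetic ($\lceil nc\rceil\le n/c-n+2$ via the two cases of which term attains the maximum) is correct. But the gap you flag at the end is genuine and, as your approach stands, fatal rather than merely technical. A greedy value-driven decomposition of the cycle with respect to $u_1$ does not close up: the last arc $A_D$ satisfies $u_1(A_D)<c$ but there is no reason why $u_1(A_D\cup\{g_D\})\ge c$, so $Q_D$ need not be a legal input prefix for $\alloc$, and you are left averaging $n^2$ incidences over only $D-1$ arcs. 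Your own instance $n=12$, $c=2/3$ shows the resulting bound $\lfloor n^2/(D-1)\rfloor=\lfloor 144/18\rfloor=8$ exceeds $\frac{1-c}{c}n+1=7$, so Theorem \ref{algorithm} cannot be invoked. There is no obvious local repair (merging $A_D$ into a neighbour destroys the lower bound $D>n/c$; a sharper accounting near the exceptional arc is not routine), so a complete proof needs a different decomposition.

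The paper resolves exactly this by dualizing your averaging. Instead of cutting the cycle where agent $1$'s value reaches $c$ and then averaging the cut-edge counts over the arcs, it first arranges the $n^2$ cut edges of all mms-splits (with multiplicity) in cyclic order and cuts the cycle into exactly $p=d$ arcs so that \emph{every} arc contains at most $\lceil n^2/p\rceil-1$ of them --- an equidistribution that always closes up, with no exceptional arc. The averaging is then applied to the values: the $p$ arcs have total value $n$ to a fixed agent, so some arc $Q$ is worth at least $n/p\ge c$ to her, and that arc is taken as the prefix. Each agent whose mms-split has no cut edge inside $Q$ lands in $S$ (by the same reattachment argument you give), so $|R|\le\lceil n^2/p\rceil-1\le\frac{1-c}{c}n+1$ directly from the definition of $c$, and Theorem \ref{algorithm} applies. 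This is why the two terms $n/d$ and $n/(\lceil n^2/d\rceil+n-2)$ enter the definition of $c$ symmetrically: one controls the pigeonhole on values, the other the pigeonhole on cut edges, and running the pigeonhole in the order value-last rather than value-first is precisely what eliminates the parity obstruction you ran into.
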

\begin{proof}
By Proposition \ref{prop2a}, it suffices to show the result under the
assumption that $N$ is a regular collection of agents.
%Under the assumptions we adopted, for every agent $i\in N$, $mms(i)=1$.
In particular, $c$-sufficient allocations and $c$-strong allocations coincide.

If some node $x$ of $C$ has value at least $c$ to some agent $j$ then
a $c$-sufficient allocation exists. Indeed, we assign $x$ to the agent $j$.
By Proposition \ref{prop_single}, the values of the maximin shares for the
remaining $n-1$ agents and the path $C-x$ do not drop. Applying
the algorithm of Bouveret {\it at al.} \cite{BouveretCEIP17}, we construct
an mms-allocation of goods on the path $C-x$ to those $n-1$ agents. This
allocation together with an assignment of $x$ to the agent $j$ is a
$c$-sufficient allocation of
goods on $C$ to agents in $N$ ($j$ receives a bundle worth at least at $c$
and all other agents receive bundles worth at least their maximin share).
So, we assume that no single-element bundle is $c$-sufficient for any agent
in $N$.

For every agent $i\in N$ we select any of her mms-splits of $C$. By regularity,
all bundles in these splits are non-empty. Thus, each selected split can be
obtained by removing $n$ different edges, say $e_1^i,\ldots,e_n^i$, from $C$.
We arrange all edges $e_j^i$, $i,j=1,\ldots, n$, into a sequence $E=e_0,e_1,
\ldots, e_{n^2-1}$. To this end, we start in any place in $C$ and inspect the
edges of $C$ moving clockwise. Each time we find an edge $e$ used to obtain
mms-splits for, say, $k$ agents, we place $k$ occurrences of $e$ in the
sequence.

For an integer $d=n,n+1,\ldots$, we define
\[
f(d)=\min\left(\frac{n}{d},\frac{n}{\left\lceil\frac{n^2}{d}\right\rceil+n-2}
\right).
\]
We then define $p$ to be that integer $d\geq n$, for which $f(d)$ achieves
its maximum (in case of ties we pick for $p$ the smallest of those values $d$).
Since $f(n)> 1/2$ and, for $d\geq n^2$, $f(d) \leq 1/n $, it
follows that $n\leq p< n^2$. It is also clear that $f(p)=c$ (as defined in the
statement of the theorem). Moreover, $c\leq\frac{n}{\left\lceil n^2/p\right\rceil+n-2}\leq 1$.
%and define $c=f(d)$.

Let us define $\overline{h}=\left\lceil\frac{n^2}{p}\right\rceil$,
$\underline{h}=\left\lfloor\frac{n^2}{p}\right\rfloor$ and $r=n^2-p\cdot
\underline{h}$. Clearly, we have $\underline{h} \geq 1$ and $0\leq r < p$.
We define a split of the cycle $C$ into $p$ parts (some of them possibly
empty) by removing $p$ edges $e_i$, where $i=0,\overline{h},2\overline{h},
\ldots, r\overline{h},r\overline{h}+\underline{h},r\overline{h}+2\underline{h},
\ldots, r\overline{h}+(p-r-1)\underline{h}$. It is easy to check that between
two consecutive removed edges there are $\overline{h}-1$ or $\underline{h}-1$
edges of the sequence $E$.

Let $Q$ be a part of this split with the largest value to agent $n$
(any other agent could be chosen for $n$, too). This value is at least
$\frac{n}{p} \geq f(p)=c$. Let $P$ be the path obtained from $C$ by removing
an edge so that $Q$ is a prefix of $P$. We will show that the call
$\alloc(N,P,Q,c)$ produces a $c$-strong allocation for $N$.

First, we note that $Q$ contains at most $\overline{h}-1$ edges of the
sequence $E$.
This means that there are at most $\overline{h}-1$ agents such that $Q$
intersects more than one part of their mms-split. For each of the remaining
agents their mms-split gives rise to an $(n-1)$-split of the path $P-Q$
that is $1$-strong for them. Since $c\leq 1$, these splits are $c$-strong
for the
corresponding agents and, consequently, all these agents are in $S$ --- the
set defined in line (1) of the algorithm $\alloc(N,P,Q,c)$.
It follows that $|S| \geq n - (\overline{h}-1)$. Let
$R$ be the set defined in line (2) of $\alloc(N,P,Q,c)$. Clearly,
$|R|=n-|S|\leq \overline{h}-1$. Hence, by
the definition of $p$,
\[
|R|\leq\overline{h}-1 = \left\lceil\frac{n^2}{p}\right\rceil-1 \leq \frac{n}{f(p)}-n+1 = \frac{n}{c}-n+1 = \frac{1-c}{c}n+1.
\]
By Theorem \ref{algorithm}, there is a $c$-strong allocation for $P$. This
allocation is also a $c$-strong allocation for $C$.
\end{proof}

\medskip
This result yields a corollary that displays a specific value $c$ for which
the existence of $c$-sufficient allocations of goods on cycles is guaranteed.
Let
\[
\varphi=\frac{\sqrt{5}+1}{2}\quad\mbox{and}\quad \psi=\frac{1}{\varphi}=\frac{\sqrt{5}-1}{2}\approx 0.61803..\ .
\]

\begin{corollary}\label{cor:approx}
For any number $n$ of agents there is a $\psi$-sufficient allocation for a
cycle.
\end{corollary}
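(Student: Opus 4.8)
The plan is to deduce the corollary directly from Theorem~\ref{any_number_or_agents}: that theorem produces, for $n$ agents on a cycle, a $c$-sufficient allocation for the specific constant
\[
c=\max_{d=n,n+1,\ldots}\min\left(\frac{n}{d},\frac{n}{\left\lceil\frac{n^2}{d}\right\rceil+n-2}\right),
\]
so it suffices to show $c\ge\psi$; then a $c$-sufficient allocation is automatically $\psi$-sufficient, since each bundle $P_i$ satisfies $u_i(P_i)\ge c\cdot\mms(i)\ge\psi\cdot\mms(i)$ (utilities, hence maximin shares, being non-negative). The case $n=1$ is trivial (give the single agent the whole cycle), so I would assume $n\ge2$.

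To bound $c$ from below it is enough to exhibit one admissible index $d$ for which the inner minimum is at least $\psi$. The right choice is $d=\lfloor n\varphi\rfloor$. First I would check admissibility: since $n\varphi-n=n\psi\ge2\psi>1$ for $n\ge2$, we get $d\ge n$. For the first term of the minimum, $d\le n\varphi$ gives $n/d\ge n/(n\varphi)=\psi$, using $\psi=1/\varphi$. The real work is the second term, where I need $\lceil n^2/d\rceil+n-2< n\varphi$, equivalently (since $n\varphi-n+2=n(\varphi-1)+2=n\psi+2$) that $\lceil n^2/d\rceil< n\psi+2$. Because $d>n\varphi-1$, we have $n^2/d< n^2/(n\varphi-1)$, and a short computation (using $\varphi^2=\varphi+1$) gives
\[
\frac{n^2}{n\varphi-1}-n\psi=\frac{1}{\varphi\!\left(\varphi-\frac1n\right)}=\frac{1}{\varphi+1-\varphi/n},
\]
which is $<1$ for every $n\ge2$ because $\varphi+1-\varphi/n>1\iff\varphi(1-1/n)>0$. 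Hence $n^2/d< n\psi+1$, so $\lceil n^2/d\rceil< n\psi+2$, as needed; I would also record that $d< n^2$ for $n\ge2$, so $\lceil n^2/d\rceil\ge1$ and the denominator $\lceil n^2/d\rceil+n-2\ge n-1>0$ stays positive.

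Combining the two estimates, $\min\!\big(n/d,\ n/(\lceil n^2/d\rceil+n-2)\big)\ge\min(\psi,\ n/(n\varphi))=\psi$, where in the second denominator I used $\lceil n^2/d\rceil+n-2< n\psi+n=n(\psi+1)=n\varphi$ together with $\psi+1=\varphi$. Therefore $c\ge\psi$, and the allocation supplied by Theorem~\ref{any_number_or_agents} is $\psi$-sufficient. The only delicate step is the estimate of the ceiling $\lceil n^2/d\rceil$, which is exactly what dictates the choice $d=\lfloor n\varphi\rfloor$ and forces us to juggle the golden-ratio identities $\psi=1/\varphi$, $\psi+1=\varphi$, $\varphi^2=\varphi+1$; everything else is routine bookkeeping.
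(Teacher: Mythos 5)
Your proof is correct and follows essentially the same route as the paper: both arguments instantiate Theorem \ref{any_number_or_agents} at $d=\lfloor n\varphi\rfloor$, get $n/d\geq\psi$ immediately, and then bound $\lceil n^2/\lfloor n\varphi\rfloor\rceil\leq n\psi+2$ using $\lfloor n\varphi\rfloor>n\varphi-1$ and the identity $\varphi\psi=1$. The only difference is cosmetic — the paper packages the key estimate via the factorization $n^2+n-1=(\varphi n-1)(\psi n+1)$, while you compute $\frac{n^2}{n\varphi-1}-n\psi$ directly — so the two proofs are interchangeable.
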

\begin{proof} Clearly, the corollary holds for $n=1$, so we assume that
$n\geq 2$. Since $\left\lfloor\varphi n\right\rfloor \geq n$, by Theorem
\ref{any_number_or_agents}, it suffices to show that
\[
\min\left(\frac{n}{d},\frac{n}{\left\lceil\frac{n^2}{d}\right\rceil+n-2}\right) \geq \psi,
\]
where $d=\left\lfloor\varphi n\right\rfloor$.

Clearly,
$
\frac{n}{\left\lfloor\varphi n\right\rfloor}\geq\frac{n}{\varphi n}=\psi
$. Thus, it remains to prove that
\[
\psi \leq \frac{n}{\left\lceil\frac{n^2}{\lfloor n\varphi\rfloor}\right\rceil
+n-2}.
\]
To this end, we observe that $n^2\leq n^2+n-1=(\varphi n-1)(\psi n+1)$, so
\[
\psi\cdot\left(\left\lceil\frac{n^2}{\left\lfloor\varphi n\right\rfloor}\right\rceil+n-2\right) \leq \psi\cdot\left(\left\lceil\frac{(\varphi n-1)(\psi n+1)}{\varphi n-1}\right\rceil+n-2\right)
\]
\[
\leq \psi\cdot(\psi n+2+n-2)=\psi(\psi+1)n=n.
\]
\end{proof}

We now turn our attention to the problem of $c$-sufficient allocations when
some agents have the same utility function, that is, are of the same type.
Specifically, we will consider allocations of goods on cycles to $n$ agents
of at most $t$ types.

\begin{lemma}\label{lem_t_types}
Let $C$ be a cycle of goods, $N$ a regular set of agents, $t\geq 2$ an
integer, and $N_i$, $1\leq i\leq t$, pairwise disjoint subsets of $N$ such
that $\bigcup_{i=1}^t N_i=N$, $|N_1|\geq |N_2|\geq \ldots\geq |N_t|$ and,
for every $i=1,\ldots, t$, all agents in $N_i$ are of the same type. If
$N_1,N_2\neq\emptyset$ and there are splits $\Pi_1$ and $\Pi_2$ such that
$\Pi_i$ is $\frac{t}{2t-2}$-strong for agents in $N_i$, for $i=1,2$, and
there is a bundle in $\Pi_1$ and a bundle in $\Pi_2$ whose intersection has
value at
least $\frac{t}{2t-2}$ to some agent in $N$, then there is a
$\frac{t}{2t-2}$-strong allocation of goods on $C$ to agents in $N$.
\end{lemma}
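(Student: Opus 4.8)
The plan is to exploit the structure of the cycle $C$ by finding an edge $e$ such that both a bundle of $\Pi_1$ and a bundle of $\Pi_2$ lie on one side of $e$; concretely, the hypothesis gives a bundle $A \in \Pi_1$ and a bundle $B \in \Pi_2$ with $u(A \cap B) \geq \frac{t}{2t-2}$ for some agent, and since bundles induce connected segments of $C$, either $A \cap B$ is a segment of $C$ that is a sub-segment of both $A$ and $B$. Set $P$ to be the path obtained from $C$ by removing an edge so that $A \cap B$ is a prefix $Q$ of $P$; since $A$ and $B$ are segments containing $Q$, one verifies that $\Pi_1$ restricted to the complement $P - Q$ (with the fragment of $A$ to the ``other side'' of $Q$ merged into its neighbour) yields an $(n-1)$-split of $P - Q$ that is $\frac{t}{2t-2}$-strong for every agent in $N_1$, and similarly for $N_2$ and $\Pi_2$. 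In particular, writing $c = \frac{t}{2t-2}$, all agents of $N_1$ and all agents of $N_2$ land in the set $S$ computed by $\alloc(N,P,Q,c)$ in line (1).

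The next step is to set up the application of Theorem \ref{algorithm}. We may assume no single good is worth $c$ or more to any agent (otherwise we assign that good to that agent and invoke Theorem \ref{bouv-alloc} on the remaining path $C-x$, using Proposition \ref{prop_single} to keep the maximin shares of the other agents intact, exactly as in the proof of Theorem \ref{any_number_or_agents}). Then, since $N_1 \cup N_2 \subseteq S$, the residual set $R = N - S$ satisfies $|R| \leq |N| - |N_1| - |N_2| = \sum_{i=3}^{t} |N_i|$. Using $|N_1| \geq |N_2| \geq \cdots \geq |N_t|$ we have $\sum_{i=3}^t |N_i| \leq \frac{t-2}{t} n$, and one checks that $\frac{t-2}{t} n \leq \frac{1-c}{c} n + 1$ because $\frac{1-c}{c} = \frac{(2t-2) - t}{t} = \frac{t-2}{t}$; thus the hypothesis $|R| \leq \frac{1-c}{c} n + 1$ of Theorem \ref{algorithm} holds (indeed with room to spare). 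Theorem \ref{algorithm} then guarantees that $\alloc(N,P,Q,c)$ outputs a $c$-strong allocation of the goods on $P$, which is also a $c$-strong allocation of the goods on $C$, and since $N$ is regular this is a $\frac{t}{2t-2}$-sufficient (equivalently $c$-strong) allocation.

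I expect the main obstacle to be the combinatorial bookkeeping around the shared sub-segment: one has to argue carefully that $A \cap B$, being the intersection of two connected segments of a cycle, is itself a connected segment (this can fail if the two segments ``wrap around'' and meet in two arcs — so one must use that $\Pi_1$ and $\Pi_2$ each omit at least one edge, i.e. the bundles are proper segments, and choose the deleted edge $e$ to sit in the complement of both $A$ and $B$ along the relevant arc), and then that deleting $e$ simultaneously turns $A \cap B$ into a prefix and leaves enough of $\Pi_1$ and $\Pi_2$ intact on $P - Q$ to witness membership in $S$. The arithmetic identity $\frac{1-c}{c} = \frac{t-2}{t}$ and the pigeonhole bound $\sum_{i \geq 3} |N_i| \leq \frac{t-2}{t} n$ are routine. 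The special case $t = 2$ (where $N_3, \ldots, N_t$ are empty, $R = \emptyset$, and the bound is trivially satisfied) should be noted but causes no difficulty.
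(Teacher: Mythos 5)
Your proposal is correct and follows essentially the same route as the paper's proof: cut the cycle so that $Q=A\cap B$ becomes a prefix of a path, observe that all agents of $N_1\cup N_2$ enter the set $S$ of $\alloc(N,P,Q,c_t)$, bound $|R|\leq n-(n_1+n_2)\leq\frac{t-2}{t}n=\frac{1-c_t}{c_t}n$, and invoke Theorem \ref{algorithm}, with the single-valuable-good case handled separately via Theorem \ref{bouv-alloc} and Proposition \ref{prop_single}. The connectivity worry you flag about $A\cap B$ possibly splitting into two arcs is real in principle but is also passed over in the paper (which simply asserts $Q$ is a proper subpath), and it is harmless in every application of the lemma, where the intersections considered are single arcs.
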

\begin{proof} Let us define $n=|N|$, $c_t=\frac{t}{2t-2}$, and $n_i=|N_i|$,
$1\leq i\leq t$. We will call agents in a set $N_i$, $1\leq i\leq t$, to be
of type $i$.
%\red{If $N_2=\emptyset$, then all agents are of the same type
%and an mms-allocation exists (cf. Proposition \ref{prop3}). Since the agents
%are regular, for every agent $i\in N$, $\mms{(n)}(i)=1\geq c_t$. Thus, that
%mms-allocation is $c_t$-strong. From now on we will assume that $N_2\neq
%\emptyset$. Consequently, we also have that $N_1\neq emptyset$.}

%Further, we note that by Proposition \ref{prop2a}, we may assume
%that the agents are proportional and that for every agent the total value of
%all nodes of $C$ is $n$. In such case, the maximin share for each agent
%is 1, and $c$-strong and $c$-sufficient allocations coincide. Thus, to prove
%the assertion it suffices to show the existence of a $c_t$-strong allocation.

If there is a single good, say $x$, in $C$ of value at least $c_t$ to
some agent $k\in N$, then agent $k$ receives $x$. It follows from Theorem
\ref{bouv-alloc} that there is an mms-allocation for the path $C-x$ and the
remaining $n-1$ agents. We distribute the goods of $C-x$ to these $n-1$
agents according to this allocation. By Proposition \ref{prop_single},
each agent $i\in N\setminus\{k\}$ receives a bundle worth to her at least
$\mms^{(n)}(i)$. Since each agent is regular, for every $i\in N$ we have
$\mms^{(n)}(i)=1\geq c_t$. Thus, each agent in $N$ is allocated a bundle
worth at least $c_t$ to her. In other words, the allocation we constructed is
$c_t$-strong.

Thus, let us assume that no single good in $C$ has value at least $c_t$
for any agent in $N$. Let $Q$ be the intersection of two bundles $A$ and $B$,
where $A$ is a bundle from $\Pi_1$ and $B$ is a bundle from $\Pi_2$, such
that $Q$ has value at least $c_t$ for some agent in $N$. It is clear that
$Q$ is a proper subpath of $C$. Let $P$ be the path obtained from $C$ by
removing an edge so that $Q$ is a
prefix of $P$. We will consider the call $\alloc(N,P,Q,c_t)$ and follow the
notation introduced in the description of the algorithm.

Since $Q\subseteq A,B$, all agents of types 1 and 2 are in $S$. It follows
that $|R|\leq n-(n_1+n_2)$. Since $n_1\geq n_2\geq \ldots \geq n_t$,
$n_1+n_2\geq 2n_j$, for $j=3,\ldots,t$. Thus,
\[
(t-2)(n_1+n_2)\geq 2(n_3+\ldots +n_t)=2(n-(n_1+n_2)).
\]
Consequently, we have $n_1+n_2\geq\frac{2}{t}n$. We use this inequality
to estimate $|R|$ getting
\[
|R|\leq n-\frac{2}{t}n=\frac{t-2}{t}n=\frac{1-c_t}{c_t}n.
\]
By Theorem \ref{algorithm}, there is a $c_t$-strong allocation of goods on $C$
to agents in $N$.
\end{proof}

We will use this lemma to obtain a general result about the existence of
$c$-sufficient allocations for any number of agents of $t\geq 4$ types.
Afterwards, we will obtain results for the two specific cases of $t=2$ and
$t=3$.

\begin{theorem}\label{any_number_types_theorem}
Let $C$ be a cycle of goods, $N$ a set of agents of at most $t$ types, where
$t\geq 4$. Then, a $\frac{t}{2t-2}$-sufficient allocation of goods on $C$ to
agents in $N$ exists.
\end{theorem}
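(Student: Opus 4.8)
The plan is to reduce the theorem to Lemma~\ref{lem_t_types}. Write $c_t=\tfrac{t}{2t-2}$; note $c_t\le 1$ always, and $c_t\le\tfrac23$ exactly because $t\ge4$ — this last inequality is the only place the hypothesis $t\ge 4$ will be used. By Proposition~\ref{prop2a} it suffices to produce a $c_t$-sufficient allocation whenever $N$ is a \emph{regular} set of $n$ agents of at most $t$ types, and for regular agents $c_t$-sufficient and $c_t$-strong allocations coincide. For such $N$ every agent has maximin share $1$ and total value $n$, and every bundle of an mms-split is nonempty. Partition $N$ into its type classes and rename them $N_1,\dots,N_t$ (padding with empty classes if there are fewer than $t$ types) so that $|N_1|\ge|N_2|\ge\cdots\ge|N_t|$. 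If at most one class is nonempty, Proposition~\ref{prop3} already gives an mms-allocation, which is $c_t$-sufficient; so from now on I assume $N_1,N_2\neq\emptyset$.

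Next I would fix an mms-split $\Pi_1$ of $C$ for type $1$ and an mms-split $\Pi_2$ of $C$ for type $2$. By regularity each $\Pi_i$ consists of $n$ nonempty bundles of value $1$ for the corresponding type, hence is $c_t$-strong for that type (as $c_t\le1$). It then remains only to exhibit a bundle $A\in\Pi_1$ and a bundle $B\in\Pi_2$ with $u(A\cap B)\ge c_t$ for some $u\in N$, for Lemma~\ref{lem_t_types} (whose use of $|N_1|+|N_2|\ge\tfrac2t n$ is built into its statement via the ordering of the $N_i$) then produces the desired $c_t$-strong allocation. The easy case is when some bundle of $\Pi_1$ is contained in a bundle of $\Pi_2$, or vice versa: the intersection of such a containment pair is the smaller bundle, which has value $1\ge c_t$ to an agent of the relevant type, so Lemma~\ref{lem_t_types} applies.

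The remaining case is that every bundle of $\Pi_1$ contains in its interior a cut edge of $\Pi_2$ and symmetrically. A short counting argument then forces the $n$ cut edges of $\Pi_1$ and the $n$ cut edges of $\Pi_2$ to be $2n$ distinct edges that strictly alternate around $C$, so the common refinement is a cyclic sequence of $2n$ nonempty cells $K_0,\dots,K_{2n-1}$ with $A_i=K_{2i}\cup K_{2i+1}\in\Pi_1$ and $B_i=K_{2i+1}\cup K_{2i+2}\in\Pi_2$ (indices mod $2n$). If $u_2(K_j)\ge c_t$ for some cell $K_j$, then $K_j=A\cap B$ for the $A\in\Pi_1$ and $B\in\Pi_2$ containing it and Lemma~\ref{lem_t_types} applies with a type-$2$ witness. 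Otherwise $u_2(K_j)<c_t$ for every cell, and using $u_2(B_{i-1})=u_2(B_i)=1$ to write $u_2(K_{2i})=1-u_2(K_{2i-1})$ and $u_2(K_{2i+1})=1-u_2(K_{2i+2})$ I get
\[
u_2(A_i)=u_2(K_{2i})+u_2(K_{2i+1})=2-u_2(K_{2i-1})-u_2(K_{2i+2})>2-2c_t\ge c_t .
\]
Hence $\Pi_1$ is $c_t$-strong for type $2$ as well, and Lemma~\ref{lem_t_types} applies with $\Pi_1$ playing the role of both splits, any one of its bundles (worth $1\ge c_t$ to a type-$1$ agent) serving as the common intersection. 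In all cases Lemma~\ref{lem_t_types} yields a $c_t$-strong allocation, which undoes to the required $c_t$-sufficient allocation.

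The step I expect to require real care is this last ``alternating'' case: one must notice that when the two mms-splits interleave the ``large common cell'' idea can genuinely fail, and switch to proving that the type-$1$ mms-split is \emph{simultaneously} $c_t$-strong for type $2$; the resulting lower bound $2-2c_t$ beats $c_t$ precisely when $t\ge4$, which explains the hypothesis and is consistent with the separate, tight $3/4$ guarantee for up to three types. The other pieces — the reduction to regular agents, the counting that pins down the alternating structure, and the bookkeeping when invoking the lemma — should be routine.
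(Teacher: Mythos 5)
Your proposal is correct and follows essentially the same route as the paper: reduce to regular agents via Proposition~\ref{prop2a}, take mms-splits for the two largest type classes, dispose of the case where some intersection of bundles is worth at least $c_t$ to somebody via Lemma~\ref{lem_t_types}, and otherwise use the interleaving structure and the inequality $2-2c_t\ge c_t$ (valid exactly for $t\ge 4$) to show the type-$1$ split is simultaneously $c_t$-strong for type $2$, then apply Lemma~\ref{lem_t_types} with that single split in both roles. The only cosmetic difference is that you phrase the interleaving case via the common refinement into $2n$ cells and only need the type-$2$ values of the cells to be small, whereas the paper assumes no intersection is acceptable to any agent; both versions are sound.
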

\begin{proof} By Proposition \ref{prop2a} we may assume that all agents
in $N$ are regular. Thus, we have that the maximin share for all agents is
$1$. %In such case, $c$-strong and $c$-sufficient allocations coincide.

Let $n_1,\ldots,n_t$ be the numbers of agents of types $1,\ldots,t$,
respectively, and let $n=n_1+\ldots+n_t$ be the number of all agents in
$N$. As before, we define $c_t=\frac{t}{2t-2}$ and assume that $n_1\geq n_2
\geq\ldots\geq n_t$. If all agents are of the same type, Proposition
\ref{prop3} ensures the existence of an mms-allocation. Since $c_t < 1$,
this mms-allocation is also a $\frac{t}{2t-2}$-sufficient allocation. Thus,
we assume that $n_2 >0$ (and so, obviously, $n_1>0$,
as well).

Let $A_1,\ldots,A_n$ and $B_1,\ldots,B_n$ be mms-splits of the cycle
for agents of type $1$ and $2$, respectively. In particular, since $c_t<1$,
the two splits are $c_t$-strong for agents of type 1 and 2, respectively.

If for some $i$ and $j$, where $1\leq i,j\leq n$, the path $A_i\cap B_j$
has value at least $c_t$ to some agent in $N$, then we are done by
Lemma \ref{lem_t_types}. Indeed, by the regularity of the set of agents,
$c_t$-strong and $c_t$-sufficient allocations coincide.

So, let us assume that no path $A_i\cap B_j$ has value $c_t$ or more to any
agent. Then, in particular, $A_i\not\subseteq B_j$ and $B_j\not\subseteq A_i$
for all $i$ and $j$, $1 \leq i,j\leq n$. Thus, each of the sets $A_1,\ldots,
A_n$ intersects exactly two consecutive sets of the mms-split $B_1,\ldots,
B_n$. We can assume without loss of generality that for every $i$, $1\leq
i\leq n$, the set $A_i$ intersects the sets $B_i$ and $B_{i+1}$ (the
arithmetic on indices is modulo $n$, adjusted to the range $[1..n]$).

We claim that the mms-split $A_1,\ldots,A_n$ for agents of type $1$ is
a $c_t$-sufficient split for agents of type $2$. To prove it denote by
$u_2$ the utility function for agents of type $2$. Since for every $i$,
$1\leq i\leq t$, $A_i\subseteq B_i \cup B_{i+1}$, it follows that
\[
2=u_2(B_i\cup B_{i+1})=u_2(A_{i-1}\cap B_i)+u_2(A_i)+u_2(A_{i+1}\cap B_{i+1}).
\]
By our assumption the sets $A_{i-1}\cap B_i$ and $A_{i+1}\cap B_{i+1}$ have
value less than $c_t$ to any agent. Since for $t\geq 4$ we have $c_t\leq
\frac{2}{3}$, it follows that $u_2(A_i)>2-2c_t\geq c_t$. We proved that
$A_1,\ldots,A_n$ is a $c_t$-strong split for any agent of type $2$. As we
noted, it is also a $c_t$-strong split for agents of type 1. Therefore,
we can take $A_1,\ldots, A_n$ for $\Pi_1$ and $\Pi_2$ in Lemma
\ref{lem_t_types}. Moreover, $A_1$ is clearly the intersection of a bundle
in $\Pi_1$ with a bundle in $\Pi_2$ simply because $A_1$ is a bundle in each
of these splits. Moreover, as $A_1,\ldots,A_n$ is an mms-split for agents
of type 1, $A_1$ has value at least 1 and, consequently, at least $c_t$ for
agents of type 1. Thus,
the assumptions of Lemma \ref{lem_t_types} are satisfied and a $c_t$-strong
allocation exists. This completes the proof as $c_t$-strong and
$c_t$-sufficient allocations coincide.
\end{proof}

Next, we consider the case of agents of two types and show that
$\frac{3}{4}$-sufficient allocations of goods on a cycle to $n$ agents of
two types are always possible. Later on we will present a result showing
that $\frac{3}{4}$-sufficient allocations of goods on a cycle also exist when
agents are of three types. The proof of this more general result is highly
technical, complex and long (enough so that we provide it in an appendix).
In contrast, the proof in the case of agents of two types is simple yet
still of substantial interest in its own right to be presented.

\begin{theorem}\label{two_types_theorem}
Let $C$ be a cycle of goods and $N$ a set of agents of no more than two types.
Then, a $\frac{3}{4}$-sufficient allocation of goods on $C$ to agents in $N$
exists.
\end{theorem}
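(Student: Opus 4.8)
The plan is to reduce to the regular case via Proposition~\ref{prop2a}, so that $\mms^{(n)}(i)=1$ for every agent and a $\frac34$-sufficient allocation just means every agent gets a bundle of value $\ge\frac34$; then to play the maximin share splits of the two types against each other, using Lemma~\ref{lem_t_types} with $t=3$ (note $\tfrac{t}{2t-2}=\tfrac34$) for the ``generic'' situations and a direct counting argument for the one remaining case.

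First I would dispose of the easy cases. Write $N=N_1\cup N_2$ with $|N_1|=n_1\ge n_2=|N_2|$ and $n=n_1+n_2$. If $n_2=0$ then Proposition~\ref{prop3} gives an mms-allocation, which is $\frac34$-strong because $\frac34<1$; so assume $n_1\ge n_2\ge1$. Fix an mms-split $A_1,\dots,A_n$ of $C$ for type~$1$ and $B_1,\dots,B_n$ for type~$2$. By regularity all these bundles are nonempty, $u_1(A_i)=1$ for every $i$ and $u_2(B_j)=1$ for every $j$, so the splits $\Pi_1=(A_1,\dots,A_n)$ and $\Pi_2=(B_1,\dots,B_n)$ are $1$-strong, hence $\frac34$-strong, for types $1$ and $2$. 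By Lemma~\ref{lem_t_types} it therefore suffices to exhibit a bundle of $\Pi_1$ and a bundle of $\Pi_2$ whose intersection is a proper subpath of value $\ge\frac34$ to some agent. If $A_i\subseteq B_j$ or $B_j\subseteq A_i$ for some $i,j$, that intersection is the smaller of the two, a subpath of value $1$ to the matching type, and we are done. Otherwise the two splits interleave: no $A_i$ lies inside a single $B_j$ and vice versa, so exactly as in the proof of Theorem~\ref{any_number_types_theorem} each $A_i$ meets precisely two consecutive $B$-bundles; after reindexing, $A_i$ meets $B_i$ and $B_{i+1}$, the common refinement is a cyclic sequence of $2n$ ``atoms'' $A_i\cap B_i$ and $A_i\cap B_{i+1}$, and consecutive atoms alternately form an $A$-bundle and a $B$-bundle. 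If some atom has value $\ge\frac34$ to some agent (this in particular covers the case where a single good is worth $\ge\frac34$ to someone, since its atom is worth at least as much), then, atoms being subpaths, Lemma~\ref{lem_t_types} again applies.

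The remaining case is the heart of the matter: every atom has value $<\frac34$ to every agent. Set $a_i=u_2(A_i\cap B_i)$. From $u_2(B_i)=u_2(A_{i-1}\cap B_i)+u_2(A_i\cap B_i)=1$ one gets $a_i=1-u_2(A_{i-1}\cap B_i)\in(\tfrac14,\tfrac34)$, and $u_2(A_i)=u_2(A_i\cap B_i)+u_2(A_i\cap B_{i+1})=a_i+(1-a_{i+1})$, so $u_2(A_i)<\frac34$ exactly when $a_{i+1}-a_i>\frac14$. Such a ``big jump'' cannot occur at two consecutive indices, since that would give $a_{i+2}-a_i>\frac12$, impossible as both lie in an interval of length $\frac12$; hence the big jumps are pairwise nonadjacent around the cycle, so at most $\lfloor n/2\rfloor$ of them exist. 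Since $n_1\ge n_2$ forces $n_1\ge n/2\ge\lfloor n/2\rfloor$, at most $n_1$ of the bundles $A_i$ fail to be worth $\ge\frac34$ to type~$2$, so at least $n_2$ of them are good for type~$2$. Hand $n_2$ of these good bundles to the agents of $N_2$ and the remaining $n_1$ bundles $A_i$ to the agents of $N_1$: every type-$1$ agent then receives value $1$ and every type-$2$ agent value $\ge\frac34$, a $\frac34$-strong, hence $\frac34$-sufficient, allocation.

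The step I expect to be the real obstacle is precisely this last one — guaranteeing that sufficiently many type-$1$ maximin bundles are simultaneously ``good enough'' for type~$2$; the non-adjacency-of-big-jumps observation is what drives it, and it is exactly there that the hypothesis $n_1\ge n_2$ is essential. Minor points to check are that every intersection fed to Lemma~\ref{lem_t_types} is a genuine connected proper subpath (it is in each case considered, because in the interleaving case each atom is a single arc and $A_i\cap B_j$ for a noninterleaving pair is one of $A_i,B_j$), and that the reduction to regular agents via Proposition~\ref{prop2a} preserves the bound of two types.
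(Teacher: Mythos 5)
Your proposal is correct and follows essentially the same route as the paper: reduce to regular agents via Proposition \ref{prop2a}, split on whether some $A_i\cap B_j$ is worth at least $\frac34$ to somebody, and in the main case show that no two consecutive $A_i$ can both be worth less than $\frac34$ to type $2$, so at least $\lceil n/2\rceil\geq n_2$ of them can be handed to the type-$2$ agents while type-$1$ agents take the rest of their own mms-split. The only cosmetic differences are that you dispatch the first branch through Lemma \ref{lem_t_types} with $t=3$ and an empty third group (which is legitimate; the paper instead allocates $A_i\cap B_j$ directly and finishes with the tree algorithm of Theorem \ref{bouv-alloc} on the remaining path), and that your ``big jump'' telescoping of $a_i=u_2(A_i\cap B_i)$ is an equivalent rephrasing of the paper's inequality $3=u_2(B_i\cup B_{i+1}\cup B_{i+2})<\frac{3}{2}+u_2(A_{i-1}\cap B_i)+u_2(A_{i+2}\cap B_{i+2})$.
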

\begin{proof} By Proposition \ref{prop2a} we may assume that the set $N$
of agents is regular. In particular, $\frac{3}{4}$-strong and
$\frac{3}{4}$-sufficient allocations coincide. Let $n$ be the number of
agents and let $n_1$ and $n_2$ be the numbers of agents of type 1 and type
2, respectively. Clearly, $n=n_1+n_2$. We will write $u_1$ and $u_2$ for
the utility functions for the agents of type 1 and type 2, respectively.
Without loss of generality, we will assume that $n_1\geq n_2$. If $n_2=0$,
then all agents are of the same type and an mms-allocation (or, equivalently,
a 1-sufficient allocation) exists by Proposition \ref{prop3}. This implies
the assertion. Therefore, we will assume that $n_2>0$. Consequently, $n_1>0$,
too.

Let $A_1,\ldots,A_n$ and $B_1,\ldots,B_n$ be mms-splits for agents of
type $1$ and type $2$, respectively. If some set $A_i\cap B_j$, where $1\leq
i,j\leq n$, is of value at least $\frac{3}{4}$ to some agent, then we allocate
this set $A_i\cap B_j$ to this agent. Let $P=C-(A_i\cap B_j)$. Clearly, $P$
is a path. Moreover, since every bundle in the split $A_1,\ldots, A_n$ other than $A_i$
is included in $P$, $\mms^{(n-1)}(P,u_1) \geq \mms^{(n)}(C,u_1)=1$. Similarly,
we have $\mms^{(n-1)}(P,u_2)\geq 1$. Thus, we can assign the goods in $P$ to
the remaining $n-1$ agents so that each agent receives a bundle that she values
at least at $1$. To this end, we may use the algorithm by Bouveret
\emph{et al.}\cite{BouveretCEIP17} (cf. Theorem \ref{bouv-alloc}). The
resulting allocation of goods on $C$ is $\frac{3}{4}$-strong and so,
$\frac{3}{4}$-sufficient.

Therefore, we assume from now on that no set $A_i\cap B_j$, $1\leq i,j\leq n$,
is of value $\frac{3}{4}$ or more for any of the agents. In particular,
$A_i\not\subseteq B_j$ and $B_j\not\subseteq A_i$ for all $i$ and $j$, $1\leq i,j\leq n$.
Thus, without loss of generality, we can assume that each
set $A_i$ has a nonempty intersection with the sets $B_i$ and $B_{i+1}$
(addition modulo $n$ adjusted for $[1..n]$) and with no other set $B_j$.

We will show that the sets  $A_1,\ldots,A_n$ can be allocated so that each
agent receives a set of value at least $\frac{3}{4}$ to this agent. Suppose
that fewer than $\frac{n}{2}$ of the sets $A_1,\ldots,
A_n$ have value at least $\frac{3}{4}$ to agents of type 2. Then, there are
two sets $A_i,A_{i+1}$ such that each of them is valued at less than
$\frac{3}{4}$ by agents of type 2. Thus, we have
\begin{align*}
3&=u_2(B_i\cup B_{i+1}\cup B_{i+2})\\
 &=u_2(A_{i-1}\cap B_i)+u_2(A_{i}\cup A_{i+1})+u_2(A_{i+2}\cap B_{i+2})\\
 &<\frac{3}{2}+u_2(A_{i-1}\cap B_i)+u_2(A_{i+2}\cap B_{i+2}).
\end{align*}
This inequality implies that at least one of the sets $A_{i-1}\cap B_i$ and
$A_{i+2}\cap B_{i+2}$ has value larger that $\frac{3}{4}$ for agents of type
2, a contradiction.

It follows that at least $\frac{n}{2}$ of the sets $A_1,\ldots,A_n$ are of
value at least $\frac{3}{4}$ to agents of type 2. Since there are at most
$\frac{n}{2}$ agents of this type (we recall that $n_1\geq n_2$), we have
sufficiently many such sets for them. The remaining sets are allocated to
agents of type 1. The resulting allocation is $\frac{3}{4}$-strong and so,
also $\frac{3}{4}$-sufficient.
\end{proof}

Theorem \ref{two_types_theorem} is tight. Let us consider a cycle $v_1,v_2,\ldots, v_{12}$, and six agents of two types with the utility functions shown in
Figure \ref{fig_2_types}. It is easy to check that in this case
$\frac{3}{4}$-sufficient allocation exists but $c$-sufficient allocation
for $c>\frac{3}{4}$ does not.

\begin{figure*}[ht!]
{\small
\centerline{
\begin{tabular}{|l|c|c|c|c|c|c|c|c|c|c|c|c|}
  \hline
 & $v_1$ & $v_2$ & $v_3$ & $v_4$ & $v_5$ & $v_6$ & $v_7$ & $v_8$ & $v_9$ & $v_{10}$ & $v_{11}$ & $v_{12}$\\
  \hline
agents 1, 2, 3 & 3 & 3 & 1 & 2 & 2 & 1 & 3 & 3 & 1 & 2 & 2 & 1 \\
  \hline
agents 4, 5, 6 & 3 & 1 & 2 & 2 & 1 & 3 & 3 & 1 & 2 & 2 & 1 & 3 \\
  \hline
\end{tabular}}
\caption{An example showing that the result in Theorem \ref{two_types_theorem}
is sharp.}\label{fig_2_types}
}
\end{figure*}

As we pointed out above, the $\frac{3}{4}$ fraction of maximin shares can
be guaranteed in a more general setting when agents are of three types.
Specifically, the following theorem holds. We provide its proof in the
appendix.

\begin{theorem}\label{three_types_theorem}
Let $C$ be a cycle of goods and $N$ a set of agents of at most three types.
Then, a $\frac{3}{4}$-sufficient allocation of goods on $C$ to agents in $N$
exists.
\end{theorem}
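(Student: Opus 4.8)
The plan is to follow the architecture of the proofs of Theorems~\ref{two_types_theorem} and~\ref{any_number_types_theorem}: reduce to a rigid combinatorial configuration of the agents' mms-splits and then close it either by one application of Lemma~\ref{lem_t_types} or by a direct matching argument. Put $c_3=\frac{3}{4}=\frac{t}{2t-2}$ for $t=3$. By Proposition~\ref{prop2a} we may assume the agent set $N$ is regular, so $\mms^{(n)}(i)=1$ for every agent $i$, every bundle of an mms-split has value exactly $1$ for the corresponding type, each agent values $C$ at $n$, and (as noted before Theorem~\ref{algorithm}) $c_3$-strong and $c_3$-sufficient allocations coincide. Let $N_1,N_2,N_3$ be the type classes with $n_r=|N_r|$ and $n_1\ge n_2\ge n_3$. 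If $n_2=0$ we are done by Proposition~\ref{prop3} and if $n_3=0$ by Theorem~\ref{two_types_theorem}, so assume $n_1\ge n_2\ge n_3\ge 1$; then $2n_2\le n_1+n_2<n$, so $n_2,n_3<n/2$. Finally, if some single good $x$ is worth at least $c_3$ to some agent $k$, assign $x$ to $k$ and use Theorem~\ref{bouv-alloc} on the path $C-x$ for the remaining $n-1$ agents, noting by Proposition~\ref{prop_single} that $\mms^{(n-1)}(C-x,u_i)\ge\mms^{(n)}(C,u_i)=1\ge c_3$ for each; so from now on no single good is worth $c_3$ or more to any agent.

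Fix mms-splits $A=(A_1,\dots,A_n)$, $B=(B_1,\dots,B_n)$, $D=(D_1,\dots,D_n)$ for types $1,2,3$; each is $1$-strong, hence $c_3$-strong, for its own type. If some $A_i\cap B_j$ is worth at least $c_3$ to some agent in $N$, then $\Pi_1=A$, $\Pi_2=B$ meet every hypothesis of Lemma~\ref{lem_t_types} with $t=3$ --- the two largest classes $N_1,N_2$ are nonempty, the two splits are $c_3$-strong for them, and the required large common part exists --- so a $c_3$-strong, hence $c_3$-sufficient, allocation exists. The cases where instead a bundle of $A$ and a bundle of $D$, or a bundle of $B$ and a bundle of $D$, have a common part worth $c_3$ or more to some agent are of the same flavour but not immediate: the ``peel it off and pass to a path'' trick damages the split of the third type, and Lemma~\ref{lem_t_types} does not apply off the shelf because the two classes one can protect need not be the two largest; one has to redo the analysis behind Lemma~\ref{lem_t_types}, using the algorithm $\alloc$ and Theorem~\ref{algorithm} directly, and check in each arising subcase that the size bound needed by Theorem~\ref{algorithm} actually holds (roughly, that the class one cannot protect is not too large). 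After disposing of all of these, the genuinely hard situation is that no bundle of one of $A,B,D$ shares with a bundle of another of them a region worth $c_3$ or more to any agent.

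In that situation the interleaving is rigid, exactly as in Theorem~\ref{two_types_theorem}: since $A_i\not\subseteq B_j$ and $B_j\not\subseteq A_i$ for all $i,j$, we may relabel $B$ so that each $A_i$ meets precisely $B_i$ and $B_{i+1}$ (indices modulo $n$), and from $3=u_2(B_i\cup B_{i+1}\cup B_{i+2})=u_2(A_{i-1}\cap B_i)+u_2(A_i)+u_2(A_{i+1})+u_2(A_{i+2}\cap B_{i+2})$ together with $u_2(A_{i-1}\cap B_i),u_2(A_{i+2}\cap B_{i+2})<c_3$ one sees that no two cyclically consecutive bundles $A_i,A_{i+1}$ are both worth less than $c_3$ to type $2$. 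Hence the indices $i$ with $u_2(A_i)<c_3$ form an independent set of the $n$-cycle, so $u_2(A_i)\ge c_3$ for at least $\lceil n/2\rceil$ values of $i$; relabeling $D$ and repeating the argument with type $3$ gives the same for type $3$, while every $A_i$ is worth $1\ge c_3$ to type $1$. It then remains to assign the $n$ bundles $A_1,\dots,A_n$ to the $n$ agents so that every agent of type $r$ gets a bundle worth at least $c_3$ to her. Since type $1$ accepts every bundle and only classes $N_2,N_3$ can reject bundles, Hall's condition for this bipartite matching amounts to $n_2\le\lvert\{i:u_2(A_i)\ge c_3\}\rvert$, $n_3\le\lvert\{i:u_3(A_i)\ge c_3\}\rvert$ and $n_2+n_3\le\lvert\{i:u_2(A_i)\ge c_3\text{ or }u_3(A_i)\ge c_3\}\rvert$; the first two hold because $n_2,n_3<n/2$ while each count is at least $\lceil n/2\rceil$, and the third is equivalent to $\lvert\{i:u_2(A_i)<c_3\text{ and }u_3(A_i)<c_3\}\rvert\le n_1$. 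One could just as well use $B$ (then needing the analogous bound with $n_2$ in place of $n_1$ and classes $1,3$ in place of $2,3$) or $D$ (with $n_3$ and classes $1,2$) as the base split.

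The last bound is where the real difficulty lies. Each of the ``bad'' index sets above is independent in the $n$-cycle, so their intersection has size at most $\lfloor n/2\rfloor$; but $\lfloor n/2\rfloor$ can exceed $n_1$ when the classes are nearly balanced, and symmetrically the $B$- and $D$-based bounds may fail at the same time. I expect that in this regime the proof must combine the three Hall analyses and derive a contradiction from all three failing simultaneously, via a global counting argument over the three mutually interleaved mms-splits (exploiting that each split cuts each type's total value $n$ into $n$ pieces whose sizes are constrained by the rigid structure), and that it must do this alongside the hands-on treatment of the $A$-versus-$D$ and $B$-versus-$D$ large-common-part cases described above. This web of subcases, each requiring its own verification of the size hypotheses of Lemma~\ref{lem_t_types} or Theorem~\ref{algorithm}, is what makes the full proof long, and is why it is relegated to the appendix.
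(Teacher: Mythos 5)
Your opening moves coincide with the paper's: regularize via Proposition \ref{prop2a}, dispose of the single-good case, and settle the case where some $A_i\cap B_j$ is acceptable by one application of Lemma \ref{lem_t_types}. But the proposal has two genuine gaps, and they are exactly the two places where you yourself flag uncertainty. First, the cases in which a bundle of the type-1 or type-2 split has a large intersection with a bundle of the type-3 split are not proved; you only say one ``has to redo the analysis behind Lemma \ref{lem_t_types}.'' The paper does not redo that analysis. It spends most of the appendix (its Cases 2--8) on precisely these configurations, and the tool it needs is a piece of machinery your proposal lacks: the notion of a \emph{jump} and of an $\cX\cY$-\emph{useful} sequence (Lemma \ref{lem:useful}), which lets one assemble a system of pairwise disjoint bundles drawn partly from one split and partly from another. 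This is unavoidable because the subcases where some $C$-set is \emph{contained} in an $A$-set or $B$-set (paper's Lemma \ref{n_1=n_2} and Cases 2, 7) destroy the proper-interleaving structure on which your every-other-bundle counting rests; there one must interleave whole blocks of $A$-bundles with blocks of $C$-bundles, and proving the resulting sequence is pairwise disjoint is where the jump formalism does its work. The separately handled case $n_1\geq\lfloor n/2\rfloor$ (paper's Case 5) is also needed to make the counting in the later cases close, and it does not appear in your outline.

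Second, your endgame does not close. In the fully rigid case your two-splits-at-a-time argument only yields that at least $\lceil n/2\rceil$ of the $A_i$ are acceptable to type 2 and at least $\lceil n/2\rceil$ to type 3, and you correctly observe that the resulting Hall condition can fail when the type classes are balanced; the proposed rescue (``combine the three Hall analyses and derive a contradiction from all three failing simultaneously'') is speculation, not an argument. The paper avoids this entirely by proving a much stronger pointwise fact using all three splits at once: writing
\[
C_{i}\cup C_{i+1}\cup C_{i+2}=(A_{i-1}\cap C_{i})\cup A_i\cup (A_{i+1}\cap B_{i+1})\cup (B_{i+2}\cap C_{i+2}),
\]
and using that none of the three intersection terms is acceptable to type 3, it gets $u_3(A_i)>3-3\cdot\frac{3}{4}=\frac{3}{4}$ for \emph{every} $i$. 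With every $A$-bundle acceptable to types 1 and 3 and every other one acceptable to type 2, the assignment is immediate since $n_2<\lfloor n/2\rfloor$ --- no Hall analysis is needed. So the difficulty you isolate is real, but the resolution is a sharper local estimate exploiting the three-way interleaving, not a global counting argument, and without it (and without the useful-sequence machinery) the proposal does not constitute a proof.
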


Theorem \ref{three_types_theorem} is tight too. In Figure \ref{fig_3/4_for_3_types} we present an example where a $\frac{3}{4}$-sufficient allocation of goods on a cycle exists but there is no $c$-sufficient allocation for any $c>\frac{3}{4}$.

\begin{figure*}[ht!]
{\small
\centerline{
\begin{tabular}{|l|c|c|c|c|c|c|c|c|c|c|c|c|c|c|c|c|c|c|}
  \hline
 & $v_1$ & $v_2$ & $v_3$ & $v_4$ & $v_5$ & $v_6$ & $v_7$ & $v_8$ & $v_9$ & $v_{10}$ & $v_{11}$ & $v_{12}$ & $v_{13}$ & $v_{14}$ & $v_{15}$ & $v_{16}$ & $v_{17}$ & $v_{18}$\\
  \hline
agents 1, 2 & 2 & 0 & 2 & 1 & 2 & 1 & 2 & 0 & 2 & 1 & 2 & 1 & 2 & 0 & 2 & 1 & 2 & 1  \\
  \hline
agents 3, 4 & 2 & 1 & 2 & 1 & 2 & 0 & 2 & 1 & 2 & 1 & 2 & 0 & 2 & 1 & 2 & 1 & 2 & 0 \\
  \hline
agents 5, 6 & 2 & 1 & 2 & 0 & 2 & 1 & 2 & 1 & 2 & 0 & 2 & 1 & 2 & 1 & 2 & 0 & 2 & 1 \\
  \hline
\end{tabular}}
\caption{An example showing that the result in Theorem \ref{three_types_theorem}
is sharp.}\label{fig_3/4_for_3_types}
}
\end{figure*}

An interesting property of the example in Figure \ref{fig_3/4_for_3_types} is that the set $\{0,1,2\}$ of values of the utility functions is very small. The problem of existence and construction of an mms-allocation with this set of values was studied in the original version of the mms-allocation problem (i.e. for complete graphs in our terminology). Amanatidis {\it et al.}  \cite{AmanatidisMNS15} proved that unlike in the case of a cycle (see Figure \ref{fig_3/4_for_3_types}) for a complete graph an mms-allocation with $\{0,1,2\}$ as the set of values of the utility functions always exists.

It is also easy to show that if $\{0,1\}$ is the set of values of the utility functions, then an mms-allocation for a cycle always exists and can be constructed in polynomial time. An analogous statement for complete graphs was observed earlier by Bouveret and Lema{\^{\i}}tre \cite{ BouveretL16}.

As we have seen earlier, even for three agents mms-allocations of goods on
a cycle may not exist. Theorem \ref{three_types_theorem} implies that when allocating
goods on a cycle to three agents, we can guarantee that each agent receives
a bundle worth at least $\frac{3}{4}$ of her maximin share. We will now
show that this guarantee can be strengthened to $\frac{5}{6}$. Re-examining
the example in Figure \ref{fig_no_mms_for_3} shows that this is the best
guarantee we can get. To see this, we recall that in this case the maximin
share value for agents 1 and 2 is 5 and for agent 3 is 6. In particular,
any split in which an agent 1 or 2 obtains a bundle consisting of two or
fewer items is at best a $\frac{4}{5}$-sufficient allocation (indeed, any
two consecutive items have the total value of no more than 4). Thus, in
any $c$-sufficient allocation with $c \geq \frac{5}{6}$, agents 1 and 2
receive bundles of at least three items. If agent 3 receives at least three
items, then all agents obtain bundles of exactly three items. There are only
three such allocations. It is easy to see that one of them is
$\frac{4}{5}$-sufficient and the other two are $\frac{5}{6}$-sufficient.
Since any two consecutive items have the total value at most 5 for agent
3, any allocation in which agent 3 receives no more than 2 items is at best
$\frac{5}{6}$-sufficient (some of them are actually $\frac{5}{6}$-sufficient;
for instance, the allocation $\{v_4,v_5, v_6\},$ $\{v_7,v_8,v_9,v_1\},
\{v_2,v_3\}$.

To show that a $\frac{5}{6}$-sufficient allocation of goods on a cycle always
exists for $3$ agents we start by introducing some additional terminology and
two lemmas.
Let $C$ be a cycle, $N=\{1,2,3\}$ a set of agents and $u_i$ the utility
function of an agent $i=1,2,3$. To simplify notation, we call an mms-split
for $(C,3,u_i)$ an \emph{$\mms(i)$-split} and we recall that we write
$\mms(i)$ as a shorthand for $\mms^{(3)}(C,u_i)$.

\begin{lemma}\label{prop8b}
Let $C$ be a cycle and $N=\{1,2,3\}$ a set of agents.
%with the utility functions $u_1,u_2$ and $u_3$ on $C$, respectively.
Let $c$, $0<c\leq 1$, be a real number. If for some agent $i\in N$
two different bundles
of an $\mms(i)$-split of $C$ are of value at least $c\cdot mms(j)$ to an agent
$j$, where $j\neq i$, then there is a $c$-sufficient allocation of goods on
$C$ to agents in $N$.
\end{lemma}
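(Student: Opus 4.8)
The plan is to use the hypothesized $\mms(i)$-split of $C$ directly as the allocation, assigning its three bundles to the three agents in a well-chosen order. Write $P_1,P_2,P_3$ for the bundles of that split, labelled so that $P_1$ and $P_2$ are the two bundles with $u_j(P_1)\ge c\cdot\mms(j)$ and $u_j(P_2)\ge c\cdot\mms(j)$, and let $k$ be the third agent, so that $\{i,j,k\}=\{1,2,3\}$. Since $P_1,P_2,P_3$ is a split of $C$, each $P_\ell$ is a $C$-bundle and the three of them together cover all goods; hence \emph{any} assignment of $P_1,P_2,P_3$ to the three agents is a legal $(C,3)$-allocation, and it remains only to match bundles to agents so that each receives value at least $c$ times her maximin share.

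First I would record two elementary facts. Because $c\le 1$ and each $P_\ell$ is a bundle of an $\mms(i)$-split, $u_i(P_\ell)\ge\mms(i)\ge c\cdot\mms(i)$ for every $\ell$; so whichever bundle is left over for agent $i$ is automatically $c$-sufficient for her. Second, since the minimum bundle value of any $3$-split cannot exceed the average, $\mms^{(3)}(C,u_k)\le u_k(C)/3$, and therefore $u_k(P_1)+u_k(P_2)+u_k(P_3)=u_k(C)\ge 3\,\mms(k)\ge 3c\cdot\mms(k)$.

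The remaining step is a short case analysis on how much agent $k$ values $P_3$. If $u_k(P_3)\ge c\cdot\mms(k)$, then assign $P_3$ to $k$, $P_1$ to $j$, and $P_2$ to $i$. Otherwise $u_k(P_3)<c\cdot\mms(k)$, and the displayed inequality gives $u_k(P_1)+u_k(P_2)>2c\cdot\mms(k)$, so at least one of $P_1,P_2$ — say $P_1$, after relabelling — satisfies $u_k(P_1)\ge c\cdot\mms(k)$; then assign $P_1$ to $k$, $P_2$ to $j$, and $P_3$ to $i$. In both cases agent $j$ receives one of $P_1,P_2$ and hence a bundle of value at least $c\cdot\mms(j)$, agent $k$ receives a bundle of value at least $c\cdot\mms(k)$, and agent $i$ receives the remaining bundle of her own mms-split, of value at least $\mms(i)\ge c\cdot\mms(i)$. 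Thus the allocation is $c$-sufficient.

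There is no real obstacle here: the content is just the combination of "two bundles are good for $j$" with the averaging bound $u_k(C)\ge 3\,\mms(k)$, which leaves enough room to feed agent $k$ from a bundle that $j$ does not need, while agent $i$ is satisfied by any leftover bundle of her own split. The only thing to watch is the bookkeeping that keeps the three assigned bundles distinct, and this is automatic because $P_1,P_2,P_3$ is a split.
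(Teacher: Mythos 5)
Your proof is correct and uses essentially the same argument as the paper: the third agent is satisfied via the averaging bound $u_k(C)\geq 3\,\mms(k)$, agent $j$ takes one of the two bundles that are good for her, and agent $i$ is satisfied by any leftover bundle of her own mms-split. The paper phrases this as a greedy protocol (the third agent picks first a bundle worth at least her full maximin share, then agent $j$ takes a remaining good bundle), whereas you run a two-case analysis on $u_k(P_3)$, but the content is identical.
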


\begin{proof}
We assume without loss of generality that $i=1$ and $j=2$. Let us observe
that for any agent and any $3$-split of the cycle there is a bundle of value
at least one third of the total value of all goods for this agent. The value
of this bundle is larger than or equal to the maximin share for this agent.
In particular, the value of one of the bundles of any $mms(1)$-split is greater
than or equal to $\mms(3)$. The following protocol finds a $c$-sufficient
allocation. Agent 3 picks a bundle of the $mms(1)$-split that has value
at least $\mms(3)$ to her. At least one of the remaining two bundles has value
at least $c\cdot mms(2)$ to agent $2$. That bundle is allocated to agent 2.
The remaining bundle of the split is assigned to agent $1$.
\end{proof}

\begin{lemma}\label{prop8a}
Let $C$ be a cycle and $N=\{1,2,3\}$ a set of agents.
%with the utility functions $u_1,u_2$ and $u_3$ on $C$, respectively.
Let $c$, $0<c\leq 1$, be a real number. If for some two different agents
$i,j\in N$, the intersection of a bundle of an $mms(i)$-split and a bundle
of an $mms(j)$-split of $C$ has value at least $c\cdot mms(i)$ to the agent
$i$, then there is a $c$-sufficient allocation of goods in $C$ to agents in
$N$.
\end{lemma}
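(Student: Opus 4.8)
\emph{Proof proposal.} The plan is to dispose of agent $i$ (call her agent $1$, relabel $j=2$, and call the third agent $3$; write $m_\ell=mms(\ell)$) by one of a few cheap allocations, reducing to a single residual configuration that is then handled by a moving‑knife argument. If some $m_\ell=0$ the claim is easy (give the relevant bundle — $Q$, or a bundle of $\Pi_2$ avoiding $Q$ — to the agents with positive maximin share and the rest to a zero‑share agent), so assume all $m_\ell>0$; then $Q=A\cap B$ is nonempty with $u_1(Q)\ge c\,m_1$, where $A\in\Pi_1$ (an $mms(1)$‑split) and $B\in\Pi_2$ (an $mms(2)$‑split). As in the proof of Theorem~\ref{any_number_or_agents}, I may also assume no single good is $c$‑sufficient for any agent: if $u_\ell(x)\ge c\,m_\ell$, give $x$ to $\ell$ and, since $C-x$ is a tree, hand the other two an mms‑allocation of $C-x$ by Theorem~\ref{bouv-alloc}, which by Proposition~\ref{prop_single} gives each of them at least her cycle maximin share. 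Finally, if $Q$ is disconnected then $A\cup B=C$, forcing each of the two bundles of $\Pi_1$ other than $A$ to lie inside $B$; replacing $A$ by one of those (its intersection with $B$ equals itself, is connected, and has value $\ge m_1$) lets me assume $Q$ is a subpath of $C$.

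Let $D',D''$ be the two bundles of $\Pi_2$ neighbouring $B$. Reattaching the at most two components of $B\setminus Q$ to $D'$ and $D''$ gives a $3$‑split $\{Q,\tilde B',\tilde B''\}$ of $C$ with $u_2(\tilde B'),u_2(\tilde B'')\ge m_2$. If $\max(u_3(\tilde B'),u_3(\tilde B''))\ge c\,m_3$, give the larger piece to agent $3$, $Q$ to agent $1$, the other piece to agent $2$ — this is $c$‑sufficient. Otherwise $u_3(\tilde B'),u_3(\tilde B'')<c\,m_3$, so, using $mms^{(3)}(C,u_3)\le u_3(C)/3$, $u_3(Q)>u_3(C)-2c\,m_3\ge(3-2c)m_3\ge c\,m_3$; then give $Q$ to agent $3$, a $\tilde B$‑bundle to agent $2$, and the $\tilde B$‑bundle she values more to agent $1$ — her share is at least $(u_1(C)-u_1(Q))/2$, so this too is $c$‑sufficient unless $u_1(\tilde B'),u_1(\tilde B'')<c\,m_1$.

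The residual case is $u_1(\tilde B'),u_1(\tilde B'')<c\,m_1$ and $u_3(\tilde B'),u_3(\tilde B'')<c\,m_3$. Since $\tilde B'\supseteq D'$ and $\tilde B''\supseteq D''$, this gives $u_1(D'),u_1(D'')<c\,m_1$ and $u_3(D'),u_3(D'')<c\,m_3$, hence $u_1(B),u_1(Q)>(3-2c)m_1$ and $u_3(B),u_3(Q)>(3-2c)m_3$. Assign $D'$ to agent $2$ ($\ge m_2$), leaving for agents $1$ and $3$ the path $P=C-D'$, which begins with $B$ and ends with $D''$; here $u_1(P)\ge u_1(B)>(3-2c)m_1$, $u_3(P)=u_3(C)-u_3(D')>(3-c)m_3$, and $u_1(D'')<c\,m_1$, $u_3(D'')<c\,m_3$. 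Slide a cut across $P$ and let $b$ be the last position at which the suffix $Y_b$ has $u_3(Y_b)\ge c\,m_3$ (it exists and is not the final position). If the prefix $X_b$ has $u_1(X_b)\ge c\,m_1$, assign $X_b$ to agent $1$ and $Y_b$ to agent $3$. Otherwise $u_1(X_b)<c\,m_1$, and since the good crossing the cut just after $b$ is not $c$‑sufficient for agent $3$ we get $u_3(Y_b)<2c\,m_3$, so $u_3(X_b)>(3-c)m_3-2c\,m_3=3(1-c)m_3$ and $u_1(Y_b)>(3-2c)m_1-c\,m_1=3(1-c)m_1$; when $c\le 3/4$ both exceed $c\,m_3$ and $c\,m_1$ respectively, so assigning $X_b$ to agent $3$ and $Y_b$ to agent $1$ is $c$‑sufficient.

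The step I expect to be the main obstacle is exactly this last one when $c$ is close to $1$, where the averaging bound $3(1-c)m_\ell\ge c\,m_\ell$ fails. I would then exploit the finer cyclic geometry left unused above: that $B\not\subseteq A$ (else both $\Pi_1$‑bundles other than $A$ lie in $D'\cup D''$, contradicting $u_1(D')+u_1(D'')<2c\,m_1\le 2m_1$), the exact location of $Q=A\cap B$ inside $P$, the strong bounds $u_1(Q),u_3(Q)>(3-2c)m_\ell$, and the extra freedom to give agent $2$ the bundle $D''$ instead of $D'$, in order to place the cut of $P$ so that both agents $1$ and $3$ are served. Carrying out this case analysis carefully is the technical core of the proof.
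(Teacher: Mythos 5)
There is a genuine gap, and it is exactly the step you flag as ``the technical core'': your residual case is in fact vacuous, and the observation that kills it is the one idea your argument is missing. Since $Q=A\cap B\subseteq A$, the set $C-Q$ contains the two bundles of the $\mms(1)$-split other than $A$, each worth at least $\mms(1)$ to agent~$1$; hence $u_1(\tilde B')+u_1(\tilde B'')=u_1(C-Q)\geq 2\cdot\mms(1)$, so at least one of $\tilde B',\tilde B''$ is worth at least $\mms(1)\geq c\cdot\mms(1)$ to agent~$1$. (Equivalently, in your own bound $(u_1(C)-u_1(Q))/2$ you only need $u_1(Q)\leq u_1(A)\leq u_1(C)-2\cdot\mms(1)$.) Therefore the configuration $u_1(\tilde B'),u_1(\tilde B'')<c\cdot\mms(1)$ cannot occur, your Case~B already closes, and the entire third paragraph --- the moving-knife argument, the restriction to $c\leq 3/4$, and the admitted open case for $c$ close to $1$ --- is unnecessary. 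As written, however, the proposal does not prove the lemma for all $c\leq 1$: you explicitly leave the $c>3/4$ regime unresolved, so the proof is incomplete.

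This is essentially the paper's proof. The paper forms the same $2$-split $B',B''$ of $C-Q$ (both parts worth at least $\mms(2)$ to agent~$2$) and splits into two cases on whether $u_3(Q)\geq c\cdot\mms(3)$: if yes, $Q$ goes to agent~$3$ and the inequality $u_1(C-Q)\geq 2\cdot\mms(1)$ hands one of $B',B''$ to agent~$1$; if no, $Q$ goes to agent~$1$ and $u_3(C-Q)>2\cdot\mms(3)$ hands one of $B',B''$ to agent~$3$. Your case split (on whether some $\tilde B$-piece is acceptable to agent~$3$) is an equivalent reorganization of the same dichotomy, and your preliminary reductions (zero maximin shares, single valuable goods, disconnected $Q$) are harmless --- indeed your treatment of a disconnected $A\cap B$ is a technicality the paper glosses over. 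Once you insert the missing two-bundle observation, your argument collapses to the paper's.
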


\begin{proof}
Without loss of generality, we assume that $i=1$ and $j=2$.
Let $A$ and $B$ be bundles of mms-splits for agents $1$ and $2$,
respectively, such that $Q=A\cap B$ has value at least $c\cdot mms(1)$
to the agent $1$. The set $B-Q$ consists of one or two intervals, the latter
in the case when $A\subset B$. Clearly, the elements in $B-Q$ can be
added to the remaining two bundles (different from $B$) of the
$\mms(2)$-split so that to form a $2$-split of $C-Q$, say $B',B''$, with
both parts of value at least $\mms(2)$ for the agent $2$.

We construct a $c$-sufficient allocation as follows. If
$u_3(Q)\geq c\cdot\mms(3)$, then we allocate $Q$ to agent 3. Since
$Q\subseteq A$, $C-Q$ contains the remaining two bundles (different from
$A$) of the $\mms(1)$-split. Hence, $u_1(C-Q)\geq 2\cdot\mms(1)$. Since
$C-Q=B'\cup B''$, at least one of the bundles $B'$ and $B''$ is worth
$\mms(1)$ or more to the agent 1. We allocate that bundle to that agent
and we allocate the remaining one to the agent 2.

If $u_3(Q) < c\cdot\mms(3)$, then the agent $1$ gets $Q$. Since
$u_3(C)\geq 3\cdot\mms(3)$ and $c\leq 1$, $u_3(C-Q)> 2\cdot \mms(3)$.
Thus, at least one bundle of $B'$ and $B''$ is worth $\mms(3)$ or more to agent $3$. That bundle goes to
the agent 3 and the remaining one to agent 2.
\end{proof}

%\medskip
%The following statement is a special case of Lemma \ref{prop8a} when $c=1$ and the two intersecting parts of mms-splits are contained one in the other.
%\begin{corollary}\label{cor2}
%If there are two different agents $i,j\in\{1,2,3\}$ such that a part of an mms$_i$-split of a cycle is contained in a part of an mms$_j$-split of this cycle, then there is an mms-allocation for the cycle and these 3 agents. $\hfill\Box$
%\end{corollary}

We are now ready to prove that if goods on a cycle are allocated to three
agents, then each agent can be guaranteed a bundle worth at least $\frac{5}{6}$
of her maximin share.

\begin{theorem}\label{th2}
Let $C$ be a cycle and $N=\{1,2,3\}$ a set of agents. Then, a
$\frac{5}{6}$-sufficient allocation of goods in $C$ to agents in $N$ exists.
\end{theorem}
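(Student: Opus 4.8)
By Proposition~\ref{prop2a} we may assume $N=\{1,2,3\}$ is a regular set of agents, so $\mms(i)=1$ for every $i$ and ``$\frac56$-strong'' and ``$\frac56$-sufficient'' coincide. Set $c=\frac56$. I will dispose first of the easy degenerate case in which some single good $x$ has value at least $c$ to some agent: that agent takes $x$, and since $C-x$ is a path, Theorems~\ref{bouv-alloc} and Proposition~\ref{prop_single} give an mms-allocation of $C-x$ to the other two agents, which is then $c$-sufficient overall. So from now on no single good is worth $c$ or more to anyone. Next I invoke Lemmas~\ref{prop8a} and~\ref{prop8b} as ``escape hatches'': if for some $i$ two distinct bundles of an $\mms(i)$-split are each worth $\ge c$ to some $j\ne i$, or if for some $i\ne j$ the intersection of an $\mms(i)$-bundle with an $\mms(j)$-bundle is worth $\ge c$ to $i$, we are done. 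The rest of the proof is a structural argument showing that if none of these escapes applies, a $c$-sufficient allocation can still be built directly.

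\textbf{Setting up the structure.} Fix $\mms(i)$-splits $A_1,A_2,A_3$ for agent~$1$, $B_1,B_2,B_3$ for agent~$2$, and $D_1,D_2,D_3$ for agent~$3$; by regularity all nine bundles are nonempty arcs of $C$, so each split removes three edges of $C$. As in the proof of Theorem~\ref{th1}, if two of the three splits share a cut edge then some bundle of one is contained in a bundle of the other, i.e.\ some $A_i\cap B_j$ (or $A_i\cap D_j$, or $B_i\cap D_j$) equals the smaller bundle and hence is worth $1\ge c$ to the agent owning it — this triggers Lemma~\ref{prop8a}. So I may assume the three splits have pairwise disjoint cut-edge sets, hence the cut edges interleave around $C$ and each $A_i$ meets exactly two consecutive $B_j$'s and two consecutive $D_j$'s. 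I will then track, for the mms-split $A_1,A_2,A_3$ of agent~$1$, how much agents $2$ and $3$ value each $A_i$. Writing $u_2(A_i)=1-\alpha_i-\beta_i$ where $\alpha_i=u_2(A_{i-1}\cap B_i)$, $\beta_i=u_2(A_{i+1}\cap B_{i+1})$ are the two ``overhang'' values (and $\sum_i(\alpha_i+\beta_i)$ accounts for the total $u_2$-mass of goods outside the matched bundles), and since each overhang is $<c$ by the failure of Lemma~\ref{prop8a}, I get lower bounds $u_2(A_i)>1-2c$ — but $1-2c<0$ for $c=\frac56$, so this crude bound is not enough on its own, and I must combine the $A$-split with the $D$-split. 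The core combinatorial claim will be that among the three bundles $A_1,A_2,A_3$, one can assign them (after possibly merging a $B_j\setminus A$ overhang into a neighbour, exactly as in the proof of Lemma~\ref{prop8a}) so that agent~$1$ gets a bundle worth $\ge 1$, and agents $2$ and $3$ each get an arc worth $\ge \frac56$; this is forced by a counting argument on how many of the nine intersection arcs can simultaneously be ``small'' (worth $<\frac56$) for their owners without contradicting the total-value-$3$ constraint for agents $2$ and $3$.

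\textbf{The main obstacle.} The delicate part is the final case analysis: after the escape hatches fail we know a lot of local inequalities ($u_i$ of every relevant overhang arc is $<\frac56$, no single good is $\ge\frac56$), and we must patch together an allocation. I expect to argue as follows: start from agent~$1$'s mms-split $A_1,A_2,A_3$; by the averaging bound one $A_i$ is worth $\ge 1$ to agent~$3$, say $A_1$; if one of $A_2,A_3$ is worth $\ge\frac56$ to agent~$2$, give it to $2$, give $A_1$ to $3$, give the third $A$-bundle to $1$ (worth $1$) and we are done. Otherwise both $A_2$ and $A_3$ are worth $<\frac56$ to agent~$2$; then from $3=u_2(C)=u_2(A_1)+u_2(A_2)+u_2(A_3)$ we force $u_2(A_1)>3-\frac53=\frac43$, and now I split $A_1$ (which agent~$1$ values at $1$, so agent~$1$ can spare part of it): since $A_1$ is an arc worth $>\frac43$ to agent~$2$ with every single good worth $<\frac56$ to her, agent~$2$ can take a prefix of $A_1$ worth in $[\frac56,\frac43)$ — wait, more carefully, worth $\ge\frac56$ and leaving a suffix still worth $\ge\frac13$ — and this is where the precise threshold $\frac56$ enters, because we need the residual $C-(\text{that prefix})$, a path, to still admit an mms-allocation (value $\ge1$) to agents $1$ and $3$, which holds because removing a sub-arc of $A_1$ from $C$ leaves at least two full $A$-bundles for agent~$1$ and, by a symmetric overhang bound or by Proposition~\ref{prop_single}-type reasoning, enough value for agent~$3$. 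Verifying that this residual path always supports the required $2$-agent mms-allocation — i.e.\ that the arithmetic $\frac56+\frac12\cdot\frac13\le\cdots$ closes — is the one step I expect to require genuine care rather than routine bookkeeping, and it is presumably where the constant $\frac56$ is pinned down as optimal (matching the Figure~\ref{fig_no_mms_for_3} example).
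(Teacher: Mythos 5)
Your reductions at the start are sound and largely match the paper's: regularity via Proposition \ref{prop2a}, the two escape hatches (Lemmas \ref{prop8a} and \ref{prop8b}), the observation that a shared cut edge forces a containment and hence triggers Lemma \ref{prop8a} with $c=1$, and the resulting interleaving of the three splits into nine chunks. But the heart of the proof is missing, and the one concrete mechanism you propose for it does not close. After the escape hatches fail, the configuration you must handle is that each split $\Pi_i$ has a \emph{unique} bundle worth at least $1$ to both other agents; in your notation $u_2(A_1)>\frac{4}{3}$ and $u_3(A_1)>\frac{4}{3}$, while $u_2(A_j),u_3(A_j)<\frac{5}{6}$ for $j=2,3$. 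Then agents $2$ and $3$ must \emph{both} receive substantial pieces of $A_1$, while agent $1$ must receive one of $A_2,A_3$ essentially in full; so agents $2$ and $3$ have to $2$-split a path $P$ containing $A_1$ and at most one other $A$-bundle, for which you are only guaranteed $u_2(P),u_3(P)>\frac{4}{3}$. Your minimal-prefix (divide-and-choose) argument on $P$ needs the second agent's value of $P$ to be at least $2\cdot\frac{5}{6}=\frac{5}{3}$ in order to leave her $\frac{5}{6}$ after the first agent's minimal prefix is removed; $\frac{4}{3}$ falls short by $\frac{1}{3}$, and you yourself flag exactly this step as unverified. The arithmetic you can actually extract this way tops out below $\frac{5}{6}$, so the plan as written does not prove the theorem.

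The paper closes this gap by bringing agent $3$'s own split $\Pi_3=C_1,C_2,C_3$ into the final argument, which your sketch gestures at (the ``counting argument on the nine intersection arcs'') but never carries out. Concretely, it shows the three pairwise intersections of the significant bundles cannot all consist of one chunk, so some pair of significant bundles is either disjoint or meets in two chunks; in each case, failure of Lemmas \ref{prop8a} and \ref{prop8b} at threshold $\frac{5}{6}$ forces two specific chunks to each have value greater than $\frac{1}{2}$ to agent $3$. In the disjoint case both chunks lie inside a single bundle $C_1$ of her own mms-split, contradicting $u_3(C_1)=1$ (proportionality); in the two-chunk case one gets $u_3(A_1\cap B_1)>1$ and an explicit mms-allocation is exhibited. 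This use of the third split to produce either a contradiction or a direct allocation is the missing idea; without it the constant $\frac{5}{6}$ is out of reach of your argument.
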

\begin{proof} Because of Proposition \ref{prop2a}, we restrict attention to the
case when the set $N$ of agents is regular. It follows that each agent values
all goods on $C$ at 3 and that all agents are proportional. In particular,
$\mms(i)=1$, for every agent $i\in N$.

Let $\Pi_1=A_1,A_2,A_3$, $\Pi_2=B_1,B_2,B_3$ and $\Pi_3=C_1,C_2,C_3$
be mms-splits for agents $1$, $2$ and $3$, respectively. If a bundle of one
of these splits is contained in a bundle of another split, then an
mms-allocation (and consequently a $\frac{5}{6}$-sufficient allocation)
exists, by Lemma \ref{prop8a} (applied to the two bundles and $c=1$).

From now on we assume that there are no such containments. Then, we
can relabel the parts of the splits $\Pi_1$, $\Pi_2$ and $\Pi_3$ so that
the sets
$A_{\lceil (k+2)/3\rceil}\cap B_{\lceil(k+1)/3\rceil}\cap
C_{\lceil k/3\rceil}$, $k=1,2,\ldots,9$ (indices are computed modulo
$3$ with respect to the set $\{1,2,3\}$) form a split of the cycle. Let us call these 9 sets {\it chunks}.

Since each agent $i\in N$ values $C$ at 3, any split of $C$ into three bundles
contains at least one bundle of value at least $1$ to $i$, that is, of value
at least $\mms(i)$ to $i$. In particular, such a bundle can be found in the
mms-splits of the two agents other than $i$. Moreover, if for some two agents
$i,j\in N$, $i\not=j$, more than one part in $\Pi_i$ has value at least $1$
($=\mms(j)$) for $j$, then we are done by Lemma \ref{prop8b}.

Let $i,j,k\in N$ be three different agents. We observe that if a bundle of
$\Pi_k$ of value at least $1$ ($=\mms(i)$) for the agent $i$ is different
from a part of $\Pi_k$ of value at least $1$ ($=\mms(j)$) for the agent $j$,
then we can easily distribute the parts of $\Pi_k$ among the agents to get
an mms-allocation.

So, we assume that for every agent $i$ there is a unique bundle in $\Pi_i$
whose value for each of the remaining two agents is at least $1$. Let us call
this bundle {\it $i$-significant}.

One can readily observe that, for $i,j\in N$, $i\not= j$, the intersection of
the $i$-significant bundle and the $j$-significant bundle consists of either
two chunks, one chunk or is empty. Moreover, it is not possible for each of
the three pairs of different significant bundles to intersect on one chunk.
Thus, there is a pair $i,j\in N$ of agents such that the $i$-significant bundle
and the $j$-significant bundle are disjoint or intersect on two chunks. We will
consider these two cases separately.  Clearly, we can assume without loss of
generality that $i=1$, $j=2$ and that $A_s$ and $B_t$, $s,t\in\{1,2,3\}$, are
the $1$-significant and $2$-significant bundles, respectively,

\smallskip
\noindent
{\bf Case 1} $A_s$ and $B_t$ are disjoint.

\smallskip
We will not lose generality, either, if we assume that $A_s=A_1$ and
$B_t=B_2$. Let us consider the mms-split $\Pi_3$. By the definition of
significant bundles, the value for the agent $3$ of each of the sets $A_1$
and $B_2$ is at least $1$.

If the value of $A_2$ or $A_3$ for the agent $3$ is at least $\frac{5}{6}$,
then a $\frac{5}{6}$-sufficient allocation exists by Lemma \ref{prop8b}.
Moreover, if the value for the agent $3$ of the set $A_1\cap C_3$ is at
least $\frac{5}{6}$, then a $\frac{5}{6}$-sufficient allocation exists by
Lemma \ref{prop8a}.

So, let us assume that the values for the agent $3$ of the sets $A_2$, $A_3$
and $A_1\cap C_3$ are all smaller than $\frac{5}{6}$. Then the value of the
chunk $A_1\cap B_1\cap C_1$ is larger than $3-3\cdot\frac{5}{6}=\frac{1}{2}$.

In a very similar way we show that if any of the sets $B_1,B_3,B_2\cap C_2$
has value at least $\frac{5}{6}$ for the agent $3$, then a
$\frac{5}{6}$-sufficient  allocation exists. Otherwise, the value for the
agent $3$ of the chunk $A_2\cap B_2\cap C_1$ is larger than $\frac{1}{2}$.

Then, however, the value for the agent $3$ of the part $C_1$, which contains
the chunks $A_1\cap B_1\cap C_1$ and $A_2\cap B_2\cap C_1$ is larger than
$1$, a contradiction because the agents are proportional.

\smallskip
\noindent
{\bf Case 2} $A_s$ and $B_t$ intersect on two chunks.

\smallskip
We assume without loss of generality that $A_s=A_1$ and $B_t=B_1$. By the
definition of significant bundles, the value for the agent $3$ of each of
the bundles $A_1$ and $B_1$ is at least $1$.

Reasoning in exactly the same way as in the first case, we conclude that
either a $\frac{5}{6}$-sufficient allocation exists or the value for the agent $3$ of the
chunk $A_1\cap B_1\cap C_1$ is larger than $\frac{1}{2}$.

Let us then assume the latter. We observe that as in Case 1, if any of the
bundles $B_2,B_3,B_1\cap C_1$ has value at least $\frac{5}{6}$ for the agent
$3$, then a $\frac{5}{6}$-sufficient allocation exists (by Lemma \ref{prop8b}
or \ref{prop8a}). Otherwise, the value for the agent $3$ of the chunk $A_1\cap
B_1\cap C_3$ is larger than $\frac{1}{2}$.

Then, however, the value for the agent $3$ of the set $A_1\cap B_1=(A_1\cap B_1
\cap C_1)\cup (A_1\cap B_1\cap C_3)$ is larger than $1$. In this case an
mms-allocation exists because we can assign $A_2$ to the agent $1$, $B_3\cup
(A_3\cap B_2\cap C_2)$ to the agent $2$ and $A_1\cap B_1$ to the agent $3$.
\end{proof}

\nop{
\begin{proof} Let us assume without loss of generality that for all agents the sum of values of all nodes of the cycle is equal to $3$. Moreover, by Proposition \ref{prop2a}, we can assume that the agents are proportional. Hence, the maximin shares for all agents are equal to $1$.

Let $N=\{ 1,2,3\}$ be the set of agents and let $\Pi_1=(A_0,A_1,A_2)$, $\Pi_2=(B_0,B_1,B_2)$ and $\Pi_3=(C_0,C_1,C_2)$ be mms-splits for agents $1$, $2$ and $3$, respectively.

If a part of one of these splits is contained in a part of some other split, then an mms-allocation (and consequently a $\frac{5}{6}$-sufficient allocation) exists, by Corollary \ref{cor2}.

We shall assume from now on that there are no such containments. Then, we can relabel the parts of the splits $\Pi_1$, $\Pi_2$ and $\Pi_3$, if necessary, so that the sets $A_{\lfloor k/3\rfloor}\cap B_{\lfloor(k-1)/3\rfloor}\cap C_{\lfloor(k-2)/3\rfloor}$, $k=1,2,\ldots,9$ (indices are computed modulo $3$) form a split of the cycle. Let us call these 9 sets {\it chunks}.

%We shall assume from now on that the meet of the partitions ${\cal P}_a$, ${\cal P}_b$ and ${\cal P}_c$ has $9$ parts; these parts are $A_{\lfloor k/3\rfloor}\cap B_{\lfloor(k-1)/3\rfloor}\cap C_{\lfloor(k-2)/3\rfloor}$, $k=1,2,\ldots,9$ (indices are computed modulo $3$). Let us call these 9 sets {\it chunks} of these 3 mms-partitions.

Obviously, for each agent $i\in N$, there is a part in the mms-split for each of the remaining two agents of value at least $1$ for the agent $i$ (because
$\mms(i)\leq 1$). Moreover, if for some two agents $i,j\in N$, $i\not=j$, there is more than one part in $\Pi_i$ of value at least $1$ for the agent $j$, then we are done because, by Lemma \ref{prop8b}, there is an mms-allocation.

Let $i,j,k\in N$ be three different agents. We observe that if a part of $\Pi_k$ of value at least $1$ for the agent $i$ is different from a part of $\Pi_k$ of value at least $1$ for the agent $j$, then we can easily distribute the parts of $\Pi_k$ among the agents to get an mms-allocation.

So, we can assume that for every agent $i$ there is a unique part in $\Pi_i$ whose value for each of the remaining two agents is at least $1$. Let us call this part {\it $i$-significant}.

One can readily observe that, for $i,j\in N$, $i\not= j$, the intersection of the $i$-significant and the $j$-significant part consists of either two chunks, one chunk or is empty. Moreover, it is not possible that each of the three pairs of different significant parts intersects on one chunk. Thus, there is a pair $i,j\in N$ of agents such that the $i$-significant part and the $j$-significant part are disjoint or intersect on two chunks. We shall consider these two cases separately.  Clearly, we can assume without loss of generality that $i=1$, $j=2$ and that $A_s$ and $B_t$, $s,t\in\{1,2,3\}$, are the $1$-significant and $2$-significant parts, respectively,

\smallskip
\noindent
{\bf Case 1} $A_s$ and $B_t$ are disjoint.

\smallskip
We will not lose generality, either, if we assume that $A_s=A_1$ and $B_t=B_2$. Consider the mms-split $\Pi_3$.
The value for the agent $3$ of each of the sets $A_1,B_2$ is at least $1$.

If the value of $A_2$ or $A_3$ for the agent $3$ is at least $\frac{5}{6}$, then a $\frac{5}{6}$-sufficient allocation exists by Lemma \ref{prop8b}. Moreover, if the value for the agent $3$ of the set $A_1\cap C_3$ is at least $\frac{5}{6}$, then a $\frac{5}{6}$-sufficient allocation exists by Lemma \ref{prop8a}.

So, assume that the values for the agent $3$ of the sets $A_2$, $A_3$ and $A_1\cap C_3$ are all smaller than $\frac{5}{6}$. Then the value of the chunk $A_1\cap B_1\cap C_1$ is larger than $1-3\cdot\frac{5}{6}=\frac{1}{2}$.

In a very similar way we show that if any of the sets $B_1,B_3,B_2\cap C_2$ has value at least $\frac{5}{6}$ for the agent $3$, then a $\frac{5}{6}$-sufficient  allocation exists. Otherwise, the value for the agent $3$ of the chunk $A_2\cap B_2\cap C_1$ is larger than $\frac{1}{2}$.

Then, however, the value for the agent $3$ of the part $C_1$, which contains the chunks $A_1\cap B_1\cap C_1$ and $A_2\cap B_2\cap C_1$ is larger than $1$, a contradiction because the agents are proportional.
\begin{figure*}[ht!]
{\small
\centerline{
\begin{tabular}{|l|c|c|c|c|c|c|c|c|c|c|c|c|}
  \hline
 & $v_1$ & $v_2$ & $v_3$ & $v_4$ & $v_5$ & $v_6$ & $v_7$ & $v_8$ & $v_9$ & $v_{10}$ & $v_{11}$ & $v_{12}$\\
  \hline
agents 1, 2, 3 & 3 & 3 & 1 & 2 & 2 & 1 & 3 & 3 & 1 & 2 & 2 & 1 \\
  \hline
agents 4, 5, 6 & 3 & 1 & 2 & 2 & 1 & 3 & 3 & 1 & 2 & 2 & 1 & 3 \\
  \hline
\end{tabular}}
\caption{An example showing that the result in Theorem \ref{two_types_theorem}
is sharp.}\label{fig_2_types}
}
\end{figure*}

\begin{figure*}[ht!]
{\small
\centerline{
\begin{tabular}{|l|c|c|c|c|c|c|c|c|c|c|c|c|c|c|c|c|c|c|}
  \hline
 & $v_1$ & $v_2$ & $v_3$ & $v_4$ & $v_5$ & $v_6$ & $v_7$ & $v_8$ & $v_9$ & $v_{10}$ & $v_{11}$ & $v_{12}$ & $v_{13}$ & $v_{14}$ & $v_{15}$ & $v_{16}$ & $v_{17}$ & $v_{18}$\\
  \hline
agents 1, 2 & 2 & 0 & 2 & 1 & 2 & 1 & 2 & 0 & 2 & 1 & 2 & 1 & 2 & 0 & 2 & 1 & 2 & 1  \\
  \hline
agents 3, 4 & 2 & 1 & 2 & 1 & 2 & 0 & 2 & 1 & 2 & 1 & 2 & 0 & 2 & 1 & 2 & 1 & 2 & 0 \\
  \hline
agents 5, 6 & 2 & 1 & 2 & 0 & 2 & 1 & 2 & 1 & 2 & 0 & 2 & 1 & 2 & 1 & 2 & 0 & 2 & 1 \\
  \hline
\end{tabular}}
\caption{An example of nonexistence of an mms-allocation when the set of values of the utility functions is $\{ 0,1,2\}$.}\label{fig_3/4_for_3_types}
}
\end{figure*}

\smallskip
\noindent
{\bf Case 2} $A_s$ and $B_t$ intersect on two chunks.

\smallskip
We assume without loss of generality that $A_s=A_1$ and $B_t=B_1$. Consider the mms-split $\Pi_3$. The value for the agent $3$ of each of the sets $A_1,B_1$ is at least $1$.

Reasoning in exactly the same way as in the case 1 we conclude that either a $\frac{5}{6}$-sufficient allocation exists or the value of the chunk $A_1\cap B_1\cap C_1$ is larger than $\frac{1}{2}$.

Let us assume the latter and observe that, similarly as in the case 1, if any of the sets $B_2,B_3,B_1\cap C_1$ has value at least $\frac{5}{6}$ for the agent $3$, then a $\frac{5}{6}$-sufficient allocation exists (by Lemma \ref{prop8b} or \ref{prop8a}). Otherwise, the value for the agent $3$ of the chunk $A_1\cap B_1\cap C_3$ is larger than $\frac{1}{2}$.

Then, however, the value for the agent $3$ of the set $A_1\cap B_1=(A_1\cap B_1\cap C_1)\cup (A_1\cap B_1\cap C_3)$ is larger than $1$. In this case an mms-allocation exists because we can assign $A_2$ to the agent $1$, $B_3\cup(A_3\cap B_2\cap C_2)$ to the agent $2$ and $A_1\cap B_1$ to the agent $3$. $\hfill\Box$
}

For more than three agents a $\frac{5}{6}$-sufficient allocation of goods on a cycle may not exist. The example in Figure \ref{fig_2_types} shows that this may
be the case even for agents of two types only. For this example, we have
already observed
that there is a $\frac{3}{4}$-sufficient allocation but no $c$-sufficient allocation for $c>\frac{3}{4}$.

We close this section by commenting on the problem of computing
$c$-sufficient allocations. First, we note that all proofs we presented in
this section, as well as the proof of Theorem \ref{three_types_theorem}, which
we
give in the appendix, yield algorithms for constructing $c$-sufficient
allocations.
These constructions
apply to the case of regular sets of agents and require that both maximin
shares and mms-splits be computed for every agent. Given those, the actual
construction of a $c$-sufficient allocation provided in each proof can
clearly be implemented to run in polynomial time. We recall that in the
case of goods on a cycle, maximin shares and mms-splits can be computed in
polynomial time. It follows then by Corollary \ref{regular} that in all
cases we discussed in this section when $c$-sufficient allocations exist,
they can be computed by algorithms running in polynomial times.

\section{Conclusions}

%In this paper
We investigated maximin share allocations in the
graph setting for the fair division problem of indivisible goods proposed by
Bouveret \emph{et al.} \cite{BouveretCEIP17}.
That paper settled the case of trees by showing
that maximin share allocations of goods on trees always exist and can be
computed in polynomial time. It also gave an example of goods on a cycle to
be distributed to four agents where no maximin share allocation exists.

Our work focused on cycles. We found several cases when maximin share
allocations on cycles exist and can be found in polynomial time. For some
other cases, when maximin share allocations are not guaranteed to exist,
we found polynomial-time algorithms deciding the existence of maximin share
allocations and, if they do exist, computing them, too. Interestingly, we do
not know the complexity of deciding the existence of maximin share allocations
on cycles. We proved that the problem is in NP but whether it is NP-hard is
open.

In general, understanding the complexity of deciding the existence of maximin
share allocations of goods on graphs is a major challenge. We improved an
earlier upper bound from $\Sigma_2^P$ down to $\Delta_2^P$ but, as in the case
of cycles, we do not have any hardness results. Establishing such results and
characterizing classes of graphs for which deciding the existence of maximin
share allocations is in P, is NP-complete or goes beyond NP (under
the assumption, of course, that the polynomial hierarchy does not collapse) are important open problems.

Perhaps our most interesting results concern the existence of allocations
guaranteeing each agent a given fraction of her maximin share. For instance,
we show that for three agents one can always find an allocation giving each
agent at least $5/6$ of her maximin share. For an arbitrary number of agents
of two or three types, we show allocations giving each agent $3/4$ of the maximin
share, and we also obtain some results for the case when the number of
types is larger than three. Moreover, in each case, these allocations can
be found by polynomial-time algorithms. We conjecture that in the general
case of any number of agents there exist allocations guaranteeing all agents
at least $3/4$ of their maximin share. Theorems \ref{two_types_theorem} and \ref{three_types_theorem} show
the conjecture holds if agents are of three or fewer types. However, the
methods we developed so far seem too weak to prove it in full generality.

\section*{Acknowledgments}
The work of the second author was partially supported by the NSF grant
IIS-1618783.

%%%%%%%%%%%%%%%%%%%%%%%%%%%%%%%%%%%%%%%%%%%%%%%%%%%%%%%%%%%%%%%%%%%%%%%%%%%%%%%%%%%%%%%
%% The file named.bst is a bibliography style file for BibTeX 0.99c

%\newpage
%\section*{References}
\bibliographystyle{elsart-num-sort}
\bibliography{mms-biblio}

%% Appendix
\section*{Appendix - Proof of Theorem \ref{three_types_theorem}}

In this section we prove Theorem \ref{three_types_theorem}. We adhere to
the notation we used throughout the paper. In particular, we consider
a set $N=\{1,2\ldots,n\}$ of agents of three types and a cycle $C$ of $m$
goods.

In the proof, as in several other places in the paper, we work under the
assumption that agents are regular. Thus, their mms-splits consist of
non-empty bundles. Therefore, in our auxiliary results, concerning properties
of splits, we restrict attention to splits into non-empty bundles, that is,
to splits that are partitions.

We select and fix an element $a$ in $C$ and call it an \emph{anchor}. Whenever
we say that $\cX=X_1,\ldots,X_n$ is a split we we mean that it is a split of $C$ and, assume that $a\in X_1$ and
that parts $X_i$ are enumerated according to their location on the cycle as
we traverse it clockwise. Further, for every $i=1,\ldots, n$, we write
$\cX_i$ for the set of all elements in the segment of $C$ extending from the
anchor $a$ clockwise to the last element of $X_i$
(inclusive). We also assume that arithmetic expressions appearing in
indices labeling bundles in splits are evaluated modulo $n$; in particular,
for every $i=1,2,\ldots,n$, $i+n=i$. Finally, we call any set of a
split $\cX$ an {\it $X$-set}.

%A pair of splits is regular or irregular. If splits $\cX=X_1,\ldots,X_n$
%and $\cY=Y_1,\ldots,Y_n$ are irregular, then there are $i,j,k,l$ such that
%$X_i\subseteq Y_j$ and $Y_k\subseteq Y_l$.
%
%Moreover, whenever we
%consider two splits $(X_1,\ldots,X_n)$ and $(Y_1,\ldots,Y_n)$, we also assume
%that the parts are labeled so that $X_1\cap Y_1\neq \emptyset$ and $X_1$
%does not \emph{end} after $Y_1$ does (as we traverse the cycle clockwise).
%Such a labeling can always be found.
%%$X_1\cap Y_2= \emptyset$ (\textcolor{blue}{if $n\geq 3$, such a labeling
%%exists; prove it?}).
%Finally, we assume that arithmetic expressions appearing in indices
%labeling parts of splits are evaluated modulo $n$.

Assuming these conventions, let $\cX=X_1,\ldots,X_n$ and $\cY=Y_1,\ldots,Y_n$
be two splits. A bundle $X_i$
%(resp. $Y_i$)
is a {\it jump to the split $\cY$}
%(resp.  to $\cX$)
if $\cX_i \subseteq\cY_i$. In such case, we will also say that $X_i$ is a jump
to $Y_{i+1}$.
%(resp. $\cY_i\subseteq \cX_i$).
%\mc{illustrate?}
We usually omit the reference to the target split of a jump, if it is clear
from the context.
The following property follows directly from the definition.

%Moreover, a part $X_i$ (resp. $Y_i$)
%is a {\it reverse jump} if $X_i\cap Y_{i-1}=\emptyset$ (resp. $Y_i\cap X_{i-1}
%=\emptyset$). Note that for every $i$, $X_i$ ($Y_i$, respectively) is a jump
%if and only if $Y_{i+1}$ ($X_{i+1}$, respectively) is a reverse jump.
%
%Assuming these conventions, let $X_1,\ldots,X_n$ and $Y_1,\ldots,Y_n$
%be two splits.
%%and let us assume that $X_1 \subseteq Y_1$.
%A part $X_i$ (resp. $Y_i$) is a {\it jump} if $X_i\cap Y_{i+1}=\emptyset$
%(resp. $Y_i\cap X_{i+1} =\emptyset$). Moreover, a part $X_i$ (resp. $Y_i$)
%is a {\it reverse jump} if $X_i\cap Y_{i-1}=\emptyset$ (resp. $Y_i\cap X_{i-1}
%=\emptyset$). Note that for every $i$, $X_i$ ($Y_i$, respectively) is a jump
%if and only if $Y_{i+1}$ ($X_{i+1}$, respectively) is a reverse jump.

\begin{lemma}\label{lem0}
Let $\cX=X_1,\ldots,X_n$ and $\cY=Y_1,\ldots,Y_n$ be two splits.
%such that $X_1\subseteq Y_1$.
Then for every $i$, $1\leq i\leq n$, at least one of the sets $X_i$, $Y_i$ is
a jump (to the other split).
%and at least one is a reverse jump.
\end{lemma}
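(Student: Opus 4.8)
The plan is to unwind the two definitions involved --- that of the arcs $\cX_i$ and that of a jump --- and observe that the statement reduces to the trivial fact that two clockwise arcs of $C$ emanating from a common point are nested. Fix an index $i$. By definition, $\cX_i$ is the set of goods on the clockwise arc of $C$ running from the anchor $a$ (inclusive) to the last element of $X_i$ (inclusive); since we have restricted attention to partitions, every bundle is nonempty and this ``last element'' is well defined, and since the bundles of a split are enumerated clockwise starting from the one containing $a$, the set $\cX_i$ is exactly the clockwise prefix $X_1\cup X_2\cup\cdots\cup X_i$. The analogous description holds for $\cY_i=Y_1\cup\cdots\cup Y_i$. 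The one thing worth checking carefully here is the bookkeeping with the conventions: that the ``last element'' of $X_i$ (and of $Y_i$) is read off along the clockwise traversal that starts at $a$, so that $\cX_i$ and $\cY_i$ are honest prefixes and not arcs that wrap past the anchor.

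The key step is then the elementary observation that both $\cX_i$ and $\cY_i$ are clockwise arcs of $C$ having the anchor $a$ as their initial (counterclockwise) endpoint, and that any two such arcs are comparable under inclusion: reading the goods of $C$ in clockwise order starting immediately after $a$, one of the two arcs terminates no later than the other, and the one that terminates earlier is contained in the other. Applying this to $\cX_i$ and $\cY_i$ yields that $\cX_i\subseteq\cY_i$ or $\cY_i\subseteq\cX_i$.

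To finish, I would simply translate each case back through the definition of a jump. If $\cX_i\subseteq\cY_i$, then by definition $X_i$ is a jump to the split $\cY$; if $\cY_i\subseteq\cX_i$, then $Y_i$ is a jump to the split $\cX$. In either case at least one of $X_i$, $Y_i$ is a jump to the other split, which is precisely the assertion of the lemma.

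There is essentially no obstacle here: once the prefix description of $\cX_i$ and $\cY_i$ is pinned down, the ``nested arcs'' observation is immediate and there is nothing to compute. The only place demanding a little care is making the prefix description precise under the paper's anchoring conventions --- and this is exactly the step the authors presumably have in mind when they say the lemma ``follows directly from the definition.''
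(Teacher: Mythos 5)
Your proof is correct and is precisely the direct unwinding that the paper has in mind when it says the lemma ``follows directly from the definition'' (the paper supplies no written proof): both $\cX_i$ and $\cY_i$ are clockwise arcs emanating from the anchor $a$, any two such arcs are nested, and each containment translates into one of the two bundles being a jump. One tiny imprecision: $\cX_i$ is the arc from $a$ to the last element of $X_i$, so it omits the portion of $X_1$ lying counterclockwise of $a$ and need not equal the full union $X_1\cup\cdots\cup X_i$; this does not affect your argument, since the nestedness of arcs based at $a$ is all you use.
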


\begin{lemma}\label{lem1}
Let $\cX=X_1,\ldots,X_n$ and $\cY=Y_1,\ldots,Y_n$ be two splits. If for some
$i$, $1\leq i < n$, $X_i$ is a jump and $X_{i+1}$ is not a jump, then
$Y_{i+1} \subseteq X_{i+1}$ and $Y_{i+1}$ is a jump.
\end{lemma}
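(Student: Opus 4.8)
The plan is to translate the statement into facts about the nested ``prefix'' sets $\cX_j$ and $\cY_j$. Because both splits use the same anchor $a$ and are enumerated clockwise, $\cX_1\subseteq\cdots\subseteq\cX_n$ and $\cY_1\subseteq\cdots\subseteq\cY_n$ are increasing chains whose consecutive members differ by exactly one part,
\[
\cX_{j}=\cX_{j-1}\cup X_{j},\qquad \cY_{j}=\cY_{j-1}\cup Y_{j}\qquad(2\le j\le n),
\]
both unions being disjoint; and, by definition, $X_j$ is a jump if and only if $\cX_j\subseteq\cY_j$, while $Y_j$ is a jump if and only if $\cY_j\subseteq\cX_j$.

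From here the argument is short. Since $i<n$, we have $i+1\le n$, so Lemma \ref{lem0} applies at index $i+1$: because $X_{i+1}$ is not a jump, $Y_{i+1}$ must be one, that is, $\cY_{i+1}\subseteq\cX_{i+1}$. This is already the second assertion, and note that it uses only the hypothesis on $X_{i+1}$. For the inclusion $Y_{i+1}\subseteq X_{i+1}$, I would combine $\cY_{i+1}\subseteq\cX_{i+1}$ with the hypothesis that $X_i$ is a jump, which gives $\cX_i\subseteq\cY_i$; using the disjoint decompositions above,
\[
Y_{i+1}=\cY_{i+1}\setminus\cY_i\subseteq\cX_{i+1}\setminus\cX_i=X_{i+1},
\]
since an element of $\cY_{i+1}$ lies in $\cX_{i+1}$, and an element lying outside $\cY_i$ lies, because $\cX_i\subseteq\cY_i$, outside $\cX_i$.

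I do not anticipate a genuine obstacle: the content is Lemma \ref{lem0} together with the observation that consecutive prefixes differ by exactly one part. The only point requiring care is the bookkeeping around the anchor — one should check that $\cX_{i+1}=\cX_i\cup X_{i+1}$ and $\cY_{i+1}=\cY_i\cup Y_{i+1}$ hold even when $i+1=n$, where $\cX_n$ need not be all of $C$ because $a$ may lie in the interior of $X_1$. Since we only ever compare prefixes of equal index and peel off one more part in the clockwise direction, this is routine, and the case $i+1=n$ needs no separate treatment.
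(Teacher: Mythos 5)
Your proof is correct and follows essentially the same route as the paper's: apply Lemma \ref{lem0} at index $i+1$ to get $\cY_{i+1}\subseteq\cX_{i+1}$, combine with $\cX_i\subseteq\cY_i$ from the hypothesis on $X_i$, and conclude via $Y_{i+1}=\cY_{i+1}\setminus\cY_i\subseteq\cX_{i+1}\setminus\cX_i=X_{i+1}$. The additional remarks about the anchor and the prefix chains are sound bookkeeping but add nothing beyond what the paper's argument already uses.
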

\begin{proof} Since $X_i$ is a jump, $\cX_i\subseteq \cY_i$. Further, since
$X_{i+1}$ is not a jump, $Y_{i+1}$ is a jump (Lemma \ref{lem0}). Thus,
$\cY_{i+1}\subseteq \cX_{i+1}$. Since $X_{i+1}=\cX_{i+1}\setminus \cX_i$
and $Y_{i+1}=\cY_{i+1}\setminus\cY_{i}$, $Y_{i+1}\subseteq X_{i+1}$
follows.
\end{proof}

Let $\cX=X_1,\ldots, X_n$ be a sequence of $n$ subsets of $C$. An
\emph{$\cX$-interval} is any sequence $X_{i}, X_{i+1}\ldots, X_{j}$ of up to
$n$ consecutive elements of $\cX$ (with $X_1$ being a successor of $X_n$).

Let $\cX=X_1,\ldots, X_n$ and $\cY=Y_1,\ldots, Y_n$ be two splits. A sequence
$\cZ=Z_1,\ldots, Z_n$ of subsets of $C$ is \emph{$\cX\cY$-useful} if
%\begin{enumerate}
%\item for every $i$, $1\leq i\leq n$, $Z_i=X_i$ or $Z_i=Y_i$
%\item
%\end{enumerate}
for every $i=1,2,\ldots,n$, at least one of the following conditions holds:
\begin{enumerate}
\item $Z_i=X_i$ and $Z_{i+1}=X_{i+1}$
\item $Z_i=Y_i$ and $Z_{i+1}=Y_{i+1}$
\item $Z_i=X_i$ and $X_i$ is a jump to $Y_{i+1}$
\item $Z_i=Y_i$ and $Y_i$ is a jump to $X_{i+1}$.
\end{enumerate}

\begin{lemma}\label{lem:useful}
Let $\cX=X_1,\ldots, X_n$ and $\cY=Y_1,\ldots, Y_n$ be splits. If a sequence
$\cZ=Z_1,\ldots, Z_n$ is $\cX\cY$-useful then the sets in $\cZ$ are pairwise
disjoint.
\end{lemma}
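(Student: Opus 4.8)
The plan is to deduce the pairwise disjointness of $Z_1,\dots,Z_n$ from the disjointness of the successive differences of one increasing chain of prefix sets. First I would record the facts about prefixes that the argument rests on, all immediate from the definitions (and already used in the proof of Lemma \ref{lem1}): $X_i=\cX_i\setminus\cX_{i-1}$ with the convention $\cX_0=\emptyset$, the prefixes $\cX_0\subseteq\cX_1\subseteq\dots\subseteq\cX_n=C$ form an increasing chain, and $X_i$ is a jump to $Y_{i+1}$ precisely when $\cX_i\subseteq\cY_i$ (and symmetrically for $\cY$). I would also observe that each of the four clauses in the definition of $\cX\cY$-usefulness forces $Z_i$ to equal $X_i$ or to equal $Y_i$, so that $Z_i\in\{X_i,Y_i\}$ for every $i$.

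Next I would attach to each index a prefix: put $W_0=\emptyset$ and, for $i\geq1$, let $W_i=\cX_i$ if $Z_i=X_i$ and $W_i=\cY_i$ if $Z_i=Y_i$ (the two choices coincide when $Z_i$ equals both). Then $W_n=C$ and $Z_i\subseteq W_i$. The heart of the proof is the claim that for every $i=1,\dots,n$,
\[
W_{i-1}\subseteq W_i\qquad\text{and}\qquad Z_i\cap W_{i-1}=\emptyset ,
\]
whence $Z_i\subseteq W_i\setminus W_{i-1}$. For $i=1$ this is trivial since $W_0=\emptyset$. For $i\geq2$ I would invoke $\cX\cY$-usefulness at index $i-1$ and check its four cases. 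In each case $W_{i-1}\in\{\cX_{i-1},\cY_{i-1}\}$ is fixed by the case, $W_i\in\{\cX_i,\cY_i\}$, and the inclusion $W_{i-1}\subseteq W_i$ is obtained by chaining along $\cX_{i-1}\subseteq\cX_i$ or $\cY_{i-1}\subseteq\cY_i$, using the jump hypothesis $\cX_{i-1}\subseteq\cY_{i-1}$ (clause (3)) or $\cY_{i-1}\subseteq\cX_{i-1}$ (clause (4)) to bridge an $\cX$-prefix to a $\cY$-prefix when necessary; the disjointness $Z_i\cap W_{i-1}=\emptyset$ then follows from $X_i=\cX_i\setminus\cX_{i-1}$ (or its $\cY$-analogue) together with that same bridging inclusion. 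Clauses (3) and (4) split into two sub-cases each, according to whether $Z_i$ is an $X$-set or a $Y$-set, but the same manipulation settles both.

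The claim then gives an increasing chain $\emptyset=W_0\subseteq W_1\subseteq\dots\subseteq W_n=C$ whose consecutive differences $W_i\setminus W_{i-1}$ are pairwise disjoint, and since $Z_i$ sits inside the $i$-th difference, the sets $Z_1,\dots,Z_n$ are pairwise disjoint, as required. The most delicate part is the bookkeeping in the four-case (with sub-cases) verification of the claim --- getting each jump/prefix translation right and handling the degenerate situation $X_i=Y_i$ so that $W_i$ is well defined --- but this is routine rather than a genuine obstacle; notably the argument never appeals to usefulness at the wrap-around index $n$.
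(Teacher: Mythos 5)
There is a genuine gap, and it sits exactly where you claim the argument is cleanest: at index $1$ and the wrap-around. Your proof rests on the identities $X_i=\cX_i\setminus\cX_{i-1}$ (with $\cX_0=\emptyset$) and $\cX_n=C$, but under the paper's definition $\cX_i$ is the segment running \emph{from the anchor $a$} clockwise to the last element of $X_i$, and $a$ is merely some element of $X_1$, not in general its first element. Hence $X_1=(C\setminus\cX_n)\cup\cX_1$: the first bundle consists of a piece at the start of the clockwise traversal \emph{and} a piece at its very end, so in general $\cX_n\subsetneq C$ and $X_1\not\subseteq\cX_1$ (likewise for $\cY$, and one cannot normalize the anchor to be the first element of both $X_1$ and $Y_1$ simultaneously). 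Consequently $W_n\neq C$ and $Z_1\not\subseteq W_1$, so your chain $\emptyset=W_0\subseteq W_1\subseteq\cdots\subseteq W_n$ does not capture the tail piece of $Z_1$. The annulus argument does correctly give $Z_i\cap Z_j=\emptyset$ for $2\leq i<j\leq n$ --- and there it is essentially the first half of the paper's proof, repackaged as one monotone chain instead of a ``last index where the split switches'' extraction of a jump --- but the disjointness of $Z_1$ from the other $Z_j$ is not established.

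The missing case is not vacuous. If, say, $Z_1=Y_1$ while $Z_n=X_n$, the tail piece $C\setminus\cY_n$ of $Z_1$ lies inside $W_n=\cX_n$, and hence may meet some annulus $W_j\setminus W_{j-1}$, unless $\cX_n\subseteq\cY_n$, i.e., unless $X_n$ is a jump to $Y_1$ --- and that is precisely what usefulness at the wrap-around index $n$ supplies (with $Z_n=X_n$ and $Z_1\neq X_1$, clause (3) is forced there, apart from degenerate subcases such as $X_n=Y_n$ where the containment is immediate). So your closing remark that the argument ``never appeals to usefulness at the wrap-around index $n$'' is the tell: that appeal is unavoidable. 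The paper devotes the entire second half of its proof to this point, writing $Y_1=A\cup B$ with $A=C\setminus\cY_n$ and $B=\cY_1$, and locating an index $p$ with $j\leq p\leq n$ at which an $X$-jump occurs in order to push $X_j$ inside $\cY_p\subseteq\cY_n$ and hence away from $A$. Your proof can be repaired by adding the analogous step --- when $Z_1$ and $Z_n$ come from different splits, use the usefulness condition at index $n$ to show the tail piece of $Z_1$ lies in $C\setminus W_n$, which is disjoint from every annulus --- but as written it is incomplete.
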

\begin{proof} Let us consider sets $Z_i$ and $Z_j$, where $1\leq i<j\leq n$.
From the definition of a $\cX\cY$-useful sequence, it follows that for every
$k$, $1\leq k\leq n$, $Z_k=X_k$ or $Z_k=Y_k$. Thus, $Z_i=X_i$ and $Z_j=X_j$,
$Z_i=Y_i$ and $Z_j=Y_j$, $Z_i=X_i$ and $Z_j=Y_j$, or $Z_i=Y_i$ and $Z_j=X_j$.
In the first two cases, the sets $Z_i$ and $Z_j$ are disjoint, because they
are two different members of a split. The other two cases are symmetric.
Without loss of generality, we will restrict attention to the last one:
$Z_i=Y_i$ and $Z_j=X_j$.Clearly, we may assume that $Z_i\neq X_i$ and
$Z_j\neq Y_j$ (otherwise, $Z_i$ and $Z_j$ would be members of the same
split, the case we already considered).

Let $k$ be the largest integer such that $i\leq k <j$, $Z_k=Y_k$, and $Z_{k+1}
\neq Y_{k+1}$. Since $Z_i=Y_i$, $k$ is well defined. It
follows that $Y_k$ is a jump. Thus, $\cY_k\subseteq \cX_k$. Since $k+1\leq j
\leq n$, $X_j \cap \cX_k = \emptyset$. Consequently, $X_j\cap \cY_k=\emptyset$.
Since $i\leq k$, if $i\geq 2$ then $Z_i\subseteq \cY_k$ (indeed, it follows
from the equality $Z_i= Y_i$). Thus, $Z_i$ and $Z_j$ are disjoint.

Let us assume then that $i=1$. Since the anchor $a$ belongs to $Y_1$, $Y_1$ is
the union of two segments, $A = C\setminus \cY_n$ and $B=\cY_1$. Since $X_j
\cap \cY_k = \emptyset$, $X_j \cap B=\emptyset$. Let $p$ be the smallest
integer such that $j\leq p\leq n$, $Z_p=X_p$ and $Z_{p+1}\neq X_{p+1}$. Since
$Z_j=X_j$ and $Z_1\neq X_1$ (we proved $Z_i\neq X_i$ above and here $i=1$),
$p$ is well defined. It follows that $X_p$ is a jump.
Therefore, $\cX_p\subseteq\cY_p$. Since $\cY_p\subseteq \cY_n$, $A\cap \cY_p=
\emptyset$. Since $X_j\subseteq \cX_p\subseteq \cY_p$, $X_j\cap A=\emptyset$.
Thus, $Z_j$($=X_j$) and $Z_1$($=Y_1$) are disjoint.
\end{proof}

%An $XY$-useful sequence of sets ${\cal Z}$ is {\it $(X,Y,q)$-good} if the following conditions are satisfied.
%\begin{enumerate}[(i)]
%\item there are at most two $X$-intervals and at most two $Y$-intervals in ${\cal Z}$,
%\item the total number of sets in the $Y$-intervals is $q$,
%\item at most three of the $X$- and $Y$-intervals have an odd number of sets.
%\end{enumerate}

Splits $\cX$ and $\cY$ are \emph{proper relative to each other} if
for every $X\in \cX$ and $Y \in\cY$, $X\not\subseteq Y$ and $Y\not
\subseteq X$. The following lemma provides some basic properties of
splits that are proper relative to each other (they are easy to see and
we omit the proof).

\begin{lemma}\label{lem:proper}
Let splits $\cX=X_1,\ldots,X_n$ and $\cY=Y_1,\ldots, Y_n$, with both sequences
enumerated according to our convention clockwise starting with sets
containing the anchor $a$, be proper relative to each other. Then
for every $i=1,\ldots,n$,
\begin{enumerate}
\item $X_i\subseteq Y_i\cup Y_{i+1}$ and $Y_i\subseteq X_{i-1}\cup X_i$, and
\item $X_{i-1}\cap Y_i \neq \emptyset$, $X_i\cap Y_i \neq \emptyset$, and
$X_i\cap Y_{i+1}\neq \emptyset$.
\end{enumerate}
or for every $i=1,\ldots,n$,
\begin{enumerate}
\item $X_i\subseteq Y_{i-1}\cup Y_i$ and $Y_i\subseteq X_i\cup X_{i+1}$, and
\item $X_{i}\cap Y_{i-1} \neq \emptyset$, $X_i\cap Y_i \neq \emptyset$, and
$X_{i+1}\cap Y_i\neq \emptyset$.
\end{enumerate}
\end{lemma}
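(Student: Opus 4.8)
The plan is to pass to the combinatorial picture in which a split into non-empty bundles is recorded by the set of $n$ edges of $C$ it deletes, the bundles being the arcs of $C$ that these deleted edges cut out. Write $f^X_1,\ldots,f^X_n$ for the edges deleted by $\cX$, indexed clockwise so that $f^X_i$ is the edge separating $X_i$ from $X_{i+1}$, and likewise $f^Y_1,\ldots,f^Y_n$ for $\cY$. (Lemma \ref{lem0} is the translation of the observations below into the ``jump'' language, but the edge/arc picture is the most transparent one for this statement.)

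The first step is to record the structure forced by properness. I claim that every bundle $X_i$ contains in its open interior at least one edge deleted by $\cY$: otherwise the arc $X_i$ would be a sub-arc of a single $\cY$-bundle $Y_j$, giving $X_i\subseteq Y_j$ and contradicting properness. This argument also handles the case where a boundary edge of $X_i$ happens to be deleted by $\cY$, since the corresponding $\cY$-bundle would then still contain $X_i$. As the interiors of $X_1,\ldots,X_n$ are pairwise disjoint and $\cY$ deletes only $n$ edges, each $X_i$ contains exactly one deleted $\cY$-edge in its interior, and symmetrically each $Y_j$ contains exactly one deleted $\cX$-edge. In particular no edge is deleted by both splits, and, reading the $2n$ deleted edges clockwise around $C$, they strictly alternate between $\cX$-edges and $\cY$-edges.

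The second step is to use the anchor to fix the parity of this alternation. The bundle $X_1$ contains the anchor $a$ and exactly one deleted $\cY$-edge $e$ in its interior; since $a\in X_1\cap Y_1$, the edge $e$ must be one of the two $\cY$-edges incident to $Y_1$, that is, $e=f^Y_1$ or $e=f^Y_n$. If $e=f^Y_1$, then propagating the strict alternation clockwise shows that for every $i$ the unique deleted $\cY$-edge interior to $X_i$ is $f^Y_i$; cutting $X_i$ at $f^Y_i$ yields a left arc contained in $Y_i$ and a right arc contained in $Y_{i+1}$, so $X_i\subseteq Y_i\cup Y_{i+1}$, and dually $Y_i\subseteq X_{i-1}\cup X_i$. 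The two endpoints of $f^Y_i$ lie in $X_i$ and witness $X_i\cap Y_i\neq\emptyset$ and $X_i\cap Y_{i+1}\neq\emptyset$, and the two endpoints of the deleted $\cX$-edge interior to $Y_i$ witness $X_{i-1}\cap Y_i\neq\emptyset$; this is exactly the first alternative of the lemma. If instead $e=f^Y_n$, the same argument (or reflecting the orientation of $C$) gives $f^Y_{i-1}$ interior to $X_i$ for all $i$, which yields the second alternative.

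The genuinely routine parts are the two counting observations of the second step and the clockwise propagation of the alternation. The one place I expect any real care to be needed — the main obstacle — is the phase analysis of the third step: verifying that the single local datum ``which $\cY$-edge sits inside $X_1$'' pins down the global dichotomy, and that the two alternatives are mutually exclusive once $n\ge 3$ (together they would force $X_i\subseteq(Y_i\cup Y_{i+1})\cap(Y_{i-1}\cup Y_i)=Y_i$, contradicting properness), so that exactly one of them holds.
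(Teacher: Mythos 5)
Your proof is correct. The paper itself gives no argument for this lemma (it is declared ``easy to see'' and the proof is omitted), and your edge-deletion picture --- each of the $n$ bundles of one split must, by properness, contain exactly one of the $n$ cut edges of the other split in its interior, forcing the $2n$ cut edges to alternate around the cycle, with the anchor in $X_1\cap Y_1$ then fixing which of the two phases occurs --- is exactly the routine verification the authors had in mind, carried out completely and with the right attention to the one delicate point (the phase dichotomy).
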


\nop{
A pair of splits $(\cX,\cY)$, where $\cX=X_1,\ldots,X_n$ and $\cY=Y_1,\ldots,
Y_n$ (with both sequences enumerated according to our convention, clockwise
starting with sets containing the anchor $a$), is \emph{proper}
if for every
$i,j$, $1\leq i,j\leq n$, $X_i\not \subseteq Y_j$ and $Y_j\not\subseteq X_i$.
Because of the way elements in $\cX$ and $\cY$ are enumerated, it follows that
if a pair of splits $(\cX,\cY)$ is proper, then for every $i=1,\ldots,n$,
\begin{equation}
\label{eq:reg-1}
X_i\subseteq Y_i\cup Y_{i+1}\ \mbox{and}\ Y_i\subseteq X_{i-1}\cup X_i.
\end{equation}
or for every $i=1,\ldots,n$,
\begin{equation}
\label{eq:reg-2}
X_i\subseteq Y_{i-1}\cup Y_i\ \mbox{and}\ Y_i\subseteq X_i\cup X_{i+1}.
\end{equation}
Moreover, in each case, each $X$-set has a non-empty intersection with exactly
two $Y$-sets (which are necessarily consecutive) and each $Y$-set has a
non-empty intersection with exactly two $Y$-sets (which are necessarily
consecutive). We say that a proper pair $(\cX,\cY)$ of splits, where
$\cX=X_1,\ldots,X_n$ and $\cY=Y_1,\ldots, Y_n$, is \emph{regular} if
(\ref{eq:reg-1}) holds for every $i=1,\ldots,n$. It is clear that if a pair
$(\cX,\cY)$ of splits is proper then either $(\cX,\cY)$ or $(\cY,\cX)$ is
regular.}

From now on we consider splits that are mms-splits for a set $N$ of regular
agents. In particular, all agents are proportional and for every agent
the total utility of all goods in $C$ is $n$. We also use the following
terminology. A subset $X$ of $C$ that induces in $C$ a connected subgraph
is \emph{acceptable} to an agent $x$ if the total value of the goods in $X$
to $x$ is at least $\frac{3}{4}$.

\begin{lemma}\label{every_second}
Let $x$ and $y$ be two regular agents from $N$. Let $\cX$ and $\cY$ be
mms-splits of a cycle for agents $x$ and $y$, respectively, and let $\cX$
and $\cY$ be proper relative to each other. If no set $X\cap Y$, where $X\in
\cX$ and $Y\in \cY$, is acceptable to $y$, then at least one of each pair of
consecutive sets of $\cX$ is acceptable to $y$.
\end{lemma}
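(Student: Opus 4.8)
The plan is to adapt the counting argument from the proof of Theorem~\ref{two_types_theorem}, carried out locally around a single consecutive pair of $\cX$-sets. First I would dispose of the degenerate cases $n\leq 2$. Since $y$ is regular, the total value of $C$ to $y$ equals $n$ and $\mms(y)=1$; hence if $n=1$ the unique $\cX$-set is all of $C$, worth $1\geq\frac34$ to $y$, and if $n=2$ the two $\cX$-sets have $u_y$-values summing to $2$, so one of them is acceptable to $y$. In both cases the conclusion is immediate, so from now on assume $n\geq3$.

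Next I would normalize the two splits using Lemma~\ref{lem:proper}. As $\cX$ and $\cY$ are proper relative to each other, that lemma gives one of two mirror-image configurations; these are interchanged by reversing the orientation of $C$, which affects neither the mms-splits, nor which subsets of $C$ are acceptable to $y$, nor the family of intersections $X\cap Y$. So we may assume the first configuration: for every $j$, $X_j\subseteq Y_j\cup Y_{j+1}$, $Y_j\subseteq X_{j-1}\cup X_j$, and the intersections $X_{j-1}\cap Y_j$, $X_j\cap Y_j$, $X_j\cap Y_{j+1}$ are nonempty. In particular each $X_j$ is the disjoint union $(X_j\cap Y_j)\cup(X_j\cap Y_{j+1})$ and each $Y_j$ is the disjoint union $(X_{j-1}\cap Y_j)\cup(X_j\cap Y_j)$. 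Since $y$ is regular, every bundle of $\cY$ has value exactly $1$ to $y$, so for every $i$ we have $u_y(Y_i\cup Y_{i+1}\cup Y_{i+2})=3$ (here $n\geq3$ guarantees these three bundles are distinct).

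Now suppose, for contradiction, that for some $i$ neither $X_i$ nor $X_{i+1}$ is acceptable to $y$, i.e.\ $u_y(X_i)<\frac34$ and $u_y(X_{i+1})<\frac34$. Expanding $3=u_y(Y_i)+u_y(Y_{i+1})+u_y(Y_{i+2})$ via the two partition identities above, the $Y$-pieces touching $Y_{i+1}$ recombine precisely into $X_i$ and $X_{i+1}$, leaving
\[
3=u_y(X_{i-1}\cap Y_i)+u_y(X_i)+u_y(X_{i+1})+u_y(X_{i+2}\cap Y_{i+2}).
\]
Hence $u_y(X_{i-1}\cap Y_i)+u_y(X_{i+2}\cap Y_{i+2})>\frac32$, so one of the sets $X_{i-1}\cap Y_i$, $X_{i+2}\cap Y_{i+2}$ (each a nonempty subpath of $C$, hence a legitimate bundle) has value exceeding $\frac34$ to $y$, contradicting the hypothesis that no intersection of an $\cX$-set with a $\cY$-set is acceptable to $y$. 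Therefore every consecutive pair of $\cX$-sets contains a member acceptable to $y$.

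Since this is a routine adaptation of an argument already used in the paper, I do not anticipate a real obstacle. The only delicate point is the cyclic index bookkeeping at small $n$: for $n=3$ the two sets $X_{i-1}\cap Y_i$ and $X_{i+2}\cap Y_{i+2}$ are just the two pieces of the single set $X_{i-1}=X_{i+2}$, and one must check that the displayed identity and the equality $u_y(Y_i\cup Y_{i+1}\cup Y_{i+2})=3$ still hold there — which is why I would isolate $n\leq2$ up front and then reason uniformly for $n\geq3$ through the partition identities rather than through a picture.
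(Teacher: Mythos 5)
Your proposal is correct and follows essentially the same argument as the paper: both rest on the identity $3=u_y(X_{i-1}\cap Y_i)+u_y(X_i)+u_y(X_{i+1})+u_y(X_{i+2}\cap Y_{i+2})$ obtained from Lemma~\ref{lem:proper} and the regularity of $y$, and then bound the two boundary intersections by $\frac34$ using the hypothesis. Your extra care with $n\leq 2$, the orientation normalization, and the $n=3$ index wrap-around is harmless bookkeeping that the paper's version leaves implicit.
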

\begin{proof} To prove the lemma let us consider any two consecutive sets
in $\cX$, say $X$ and $X'$, with $X'$ directly following $X$ in $\cX$.
Since $\cX$ and $\cY$ are proper relative to each other, Lemma \ref{lem:proper}
implies that there are three
consecutive sets $Y,Y',Y''\in \cY$ such that $X\cup X'\subseteq Y\cup Y'
\cup Y''$. Let $X_{-1}$ and $X'_{+1}$ be the predecessor of $X$ in $\cX$
and the successor of $X'$ in $\cX$, respectively. By our assumption,
$u_y(X_{-1}\cap Y)<\frac{3}{4}$ and $u_y(X'_{+1}\cap Y'')<\frac{3}{4}$,
where $u_y$ is the utility function for agent $y$. It follows that
\begin{align*}
3 &= u_y(Y\cup Y'\cup Y'')\\
  &= u_y(X_{-1}\cap Y)+u_y(X)+ u_y(X')+u_y(X'_{+1}\cap Y'')\\
  &< 2\cdot\frac{3}{4} + u_y(X)+u_y(X').
\end{align*}
Thus, $u_y(X)\geq \frac{3}{4}$ or $u_y(X')\geq\frac{3}{4}$, that is, at least
one of $X$ and $X'$ is acceptable to $y$.
\end{proof}

\nop{
\begin{lemma}\label{every_second}
Let $x$ and $y$ be two regular agents from $N$. Let $\cX=X_1,\ldots,X_n$
and $\cY= Y_1,\ldots,Y_n$ be mms-splits of a cycle for agents $x$ and $y$,
respectively, and let $(\cX,\cY)$ be regular. Then, at least one of each
two consecutive sets of $\cX$ is aceptable to agent $y$, or
for some $i$, $1\leq i\leq n$, at least one of the sets $X_{i-1}\cap Y_{i}$,
$X_{i+2}\cap Y_{i+2}$ is acceptable to $y$.
\end{lemma}
\begin{proof} To prove the lemma let us suppose that for some $i$, $1\leq
i\leq n$, neither $X_i$ nor $X_{i+1}$ is acceptable to $y$. By (\ref{eq:reg-1}),
$X_i\cup X_{i+1}\subseteq Y_i\cup Y_{i+1} \cup Y_{i+2}$. Thus,
\begin{align*}
3&= u_y(Y_i\cup Y_{i+1} \cup Y_{i+2})\\
 &= u_y(X_{i-1}\cap Y_i)+u_y(X_i)+ u_y(X_{i+1})+u_y(X_{i+2}\cap Y_{i+2})\\
 &< 2\cdot\frac{3}{4} + u_y(X_{i-1}\cap Y_i)+u_y(X_{i+2}\cap Y_{i+2}),
\end{align*}
where $u_y$ is the utility function for agent $y$. It follows that
$u_y(X_{i-1}\cap Y_i)\geq \frac{3}{4}$ or $u_y(X_{i+2}\cap Y_{i+2})
\geq\frac{3}{4}$.
\end{proof}
}

In the next lemma and in the proof of Theorem \ref{three_types_theorem},
we use the following notation. We write $n_1,n_2,n_3$ for the numbers of agents
of types $i_1,i_2,i_3$, respectively. In particular, $n=n_1+n_2+n_3$. Further,
we assume without loss of generality that $n_1\geq n_2\geq n_3\geq 1$.
%and that the total value of all nodes on $C$ is $n$ for each agent.

\begin{lemma}\label{n_1=n_2}
Let $\cA= A_1,\ldots,A_n$ (resp. $\cB=B_1,\ldots,B_n$ and $\cC=C_1,\ldots,C_n$)
be mms-splits of the cycle $C$ for agents of type $i_1$ (resp. $i_2$ and $i_3$).
If $n_1=n_2$, $n_3\geq 1$, no set $A_i\cap B_j$
is acceptable to any agent, and some $C$-set is contained in some $A$-set or
$B$-set, then there is a $\frac{3}{4}$-sufficient allocation for a cycle and
any number of agents of three types.
\end{lemma}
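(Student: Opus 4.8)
The plan is to reduce the problem on the cycle to a problem on a path and then invoke the tree result of Bouveret \emph{et al.} (Theorem~\ref{bouv-alloc}). Throughout, agents are regular, so $\mms=1$ for each type and ``acceptable'' means ``of value at least $3/4$''; write $u_1,u_2,u_3$ for the utility functions of the three types, and recall $n_1=n_2\geq n_3\geq 1$, so $n=2n_2+n_3$ and $\lceil n/2\rceil=n_2+\lceil n_3/2\rceil\geq n_2+1$.

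First I would dispose of the standard easy configurations. If some single good $x$ is acceptable to some agent, assign $x$ to that agent; by Proposition~\ref{prop_single} every remaining agent's maximin share on $C-x$ is still $1$, so Theorem~\ref{bouv-alloc} finishes. Hence assume every acceptable bundle has at least two goods. Since no $A_i\cap B_j$ is acceptable to anyone, $\cA$ and $\cB$ are proper relative to each other, so by Lemma~\ref{lem:proper} we may assume $A_i\subseteq B_i\cup B_{i+1}$ for all $i$; and if $\cC$ and $\cB$ are \emph{not} proper relative to each other, some $\cC$-set lies in a single $\cB$-set, in which case deleting that $\cC$-set from $C$ damages only one $\cB$-set and the type-$2$ argument below is trivial, so assume $\cC$ and $\cB$ are proper relative to each other. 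Finally, using $n_1=n_2$ to interchange the roles of types $1$ and $2$ if necessary, assume the hypothesised containment has the form $C_k\subseteq A_l$. Then $C_k$ meets only the $A$-set $A_l$, it meets at most the two consecutive $B$-sets $B_l,B_{l+1}$, and since $C_k\subseteq A_l$ while the $A\cap B$-sets $A_l\cap B_l$, $A_l\cap B_{l+1}$ are not acceptable, $u_r(C_k\cap B_l)<3/4$ and $u_r(C_k\cap B_{l+1})<3/4$ for every type $r$; because $\cC,\cB$ are proper, $C_k$ meets \emph{both} $B_l$ and $B_{l+1}$.

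The main step is then: assign the bundle $C_k$ (note $u_3(C_k)\geq 1$) to one of the type-$3$ agents and allocate the path $P=C-C_k$ to the remaining $n-1$ agents via Theorem~\ref{bouv-alloc}; it suffices to prove $\mms^{(n-1)}(P,u_r)\geq 3/4$ for $r=1,2,3$. For type $3$ this is immediate: $\cC$ with the single bundle $C_k$ deleted is an $(n-1)$-split of $P$ all of whose bundles have value $\geq 1$. For type $1$, deleting $C_k\subseteq A_l$ splits only $A_l$ into at most two arcs; adjoining each to its $\cA$-neighbour $A_{l-1}$ or $A_{l+1}$ yields an $(n-1)$-split of $P$ whose bundles each contain a full $A$-set, hence have value $\geq 1$. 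The delicate case is type $2$. Deleting $C_k$ leaves the $B$-sets $B_j$, $j\notin\{l,l+1\}$, intact (value $\geq 1$) and trims $B_l,B_{l+1}$ to connected remainders $B_l\setminus C_k$, $B_{l+1}\setminus C_k$ lying at the two ends of $P$; if at least one remainder is itself acceptable to type $2$, keep it and fold the other into its neighbour, obtaining the required $(n-1)$-split with all bundles of value $\geq 3/4$. What remains is the genuinely hard sub-case: both remainders have value $<3/4$ to type $2$, which forces $u_2(C_k\cap B_l),u_2(C_k\cap B_{l+1})>1/4$ and hence $1/2<u_2(C_k)<3/2$.

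In that sub-case the naive ``restrict $\cB$'' split is one bundle short, so I would instead certify $\mms^{(n-1)}(P,u_2)\geq 3/4$ by building a suitable $(n-1)$-split of $P$ from $\cA$ and $\cB$ together, exploiting the slack that $n_1=n_2$ provides: by Lemma~\ref{every_second} at least $\lceil n/2\rceil\geq n_2+1$ of the $A$-sets are acceptable to type $2$, and passing to $P$ loses at most one of them (the possibly-acceptable $A_l$), so on $P$ at least $n_2$ of the $\cA$-derived bundles are still acceptable to type $2$ while all of them have value $\geq 1$ to type $1$. One then stitches these $\cA$-bundles to the surviving $\cB$-bundles across the jumps between $\cA$ and $\cB$ (Lemmas~\ref{lem0}, \ref{lem1} and the disjointness of $\cA\cB$-useful sequences from Lemma~\ref{lem:useful}) so that the bundles falling near the damaged region $B_l\setminus C_k,\,B_{l+1}\setminus C_k$ are taken on the type-$1$ side, leaving exactly $n-1$ bundles of which at least $n_2$ are acceptable to type $2$ and the rest have value $\geq 1$ to type $1$; assigning these according to type gives the desired split. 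I expect this last stitching step --- controlling the $\cA$/$\cB$ jump pattern in the arc around $C_k$ and making the bundle counts come out to $n-1$ with the $n_1=n_2$ accounting --- to be the main obstacle, while everything preceding the reduction to $P$ is routine.
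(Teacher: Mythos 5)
There is a genuine gap, and it sits exactly where you flag it. Your strategy hinges on the claim that, after giving $C_k\subseteq A_l$ to a type-$3$ agent, the path $P=C-C_k$ satisfies $\mms^{(n-1)}(P,u_2)\geq \frac{3}{4}$, so that Theorem \ref{bouv-alloc} finishes. In the sub-case you isolate (both remainders $B_l\setminus C_k$ and $B_{l+1}\setminus C_k$ worth less than $\frac{3}{4}$ to type $2$) this claim is not only unproven but appears to be false under the lemma's hypotheses. Take $n=3$, $n_1=n_2=n_3=1$, with $u_2(C_k\cap B_l)=u_2(C_k\cap B_{l+1})=0.7$ (both below $\frac34$, as required since they sit inside non-acceptable sets $A_l\cap B_l$, $A_l\cap B_{l+1}$) and $B_{l+2}$ consisting of two goods of $u_2$-value $0.6$ and $0.4$. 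Then $P$ has total $u_2$-value $1.6$, its feasible cut points have prefix sums $0.3,\,0.9,\,1.3$, and none lies in the window $[0.75,0.85]$ needed for a $2$-split of $P$ into parts each worth $\geq\frac34$ to type $2$. Nothing in the hypotheses (regularity, no acceptable $A_i\cap B_j$, no acceptable single good) rules this out, so the reduction to Theorem \ref{bouv-alloc} cannot be salvaged by a counting argument; the lemma is still true here, but not via this route. Your fallback ``stitching'' is also internally inconsistent: a split certifying $\mms^{(n-1)}(P,u_2)\geq\frac34$ must have \emph{every} bundle worth $\geq\frac34$ to type $2$, whereas the split you describe guarantees only $n_2$ such bundles (the rest being guaranteed only for type $1$); and if it is instead meant as a direct allocation, the $n_3-1$ remaining type-$3$ agents receive bundles with no guarantee at all.

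The paper's proof avoids the path reduction entirely. After anchoring so that $C_1\subseteq A_1$ (hence $C_1$ is a jump to $\cA$), it performs a case analysis on the jump pattern between $\cA$ and $\cC$ (two consecutive $A$-jumps, two consecutive $C$-jumps, or strict alternation), and in each case assembles an $\cA\cC$-useful (or, in the alternating case, $\cB\cC$-useful) sequence via Lemmas \ref{lem0}, \ref{lem1} and \ref{lem:useful}. The resulting pairwise disjoint family consists of whole $C$-sets (each worth $\geq 1$ to type $3$) interleaved with $A$-intervals of \emph{even} total length $2n_1$; Lemma \ref{every_second} then yields $n_1=n_2$ of those $A$-sets acceptable to type $2$, with the other $n_1$ going to type $1$. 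The hypothesis $n_1=n_2$ is consumed precisely by this half-and-half distribution, and the parity bookkeeping on the endpoints $k$ and $\ell$ of the $C$-runs is what makes the $A$-intervals even. If you want to repair your write-up, you would need either to prove a genuinely new lower bound on $\mms^{(n-1)}(P,u_2)$ (which the example above blocks) or to switch to a direct construction of the allocation in which all three agent types are served simultaneously, which is essentially the useful-sequence machinery the paper builds for this purpose.
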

\begin{proof} %By Lemma \ref{every_second}, at least one of any two
%consecutive $A$-sets is acceptable for agents $i_2$ and at least one of any
%two consecutive $B$-sets is acceptable for agents $i_1$.
%
%\smallskip
Since no set $A_i\cap B_j$ is acceptable to any agent, we have that for
every $i$ and $j$, where $1\leq i,j\leq n$, $A_i\not\subseteq B_j$
and $B_i\not\subseteq A_j$. Thus, the splits $\cA$ and $\cB$ are
proper relative to each other.

Our assumptions do not distinguish between the splits $\cA$ and
$\cB$. Therefore, without loss of generality, we may assume that a $C$-set,
say $C_k$, is contained in an $A$-set, say $A_i$. Let $B_j$ and $B_{j+1}$ be
two $B$-sets that intersect $A_i$. It follows that $C_k\subseteq B_j\cup
B_{j+1}$. If $C_k\subseteq B_{j+1}$, then $C_k \subseteq A_i\cap B_{j+1}$.
But then $A_i\cap B_{j+1}$ would be acceptable to agents of type $i_3$ (and
there is at least one such agent). This is a contradiction. Thus, $C_k\cap
B_j\neq \emptyset$. Consequently, $C_k\cap B_j\cap A_i\neq \emptyset$.
Without loss of generality, we may assume that the anchor $a$ belongs to
$C_k\cap B_j\cap A_i$. Assuming the sets in the splits $\cA$, $\cB$ and $\cC$
are enumerated with respect to that anchor, we have $k=j=i=1$, $C_1 \subseteq
A_1$ and $B_1\subseteq A_n\cup A_1$. The latter implies (we recall that $\cA$
and $\cB$ are proper relative to each other and so, Lemma \ref{lem:proper}
applies) that for every $i=1,\ldots,n$,
\begin{equation*}
A_i\subseteq B_i\cup B_{i+1}\ \mbox{and}\ B_i\subseteq A_{i-1}\cup A_i.
\end{equation*}

In what follows we will be applying Lemmas \ref{lem0} - \ref{every_second}.
We can do so, as we follow here the same convention for enumerating sets
in splits with respect to the same fixed anchor.

\smallskip
\noindent
{\bf Case 1} Some two consecutive $A$-sets are jumps to $\cC$.

%\smallskip
%Let us assume that these two consecutive $A$-sets are jumps. The other case
%is symmetric.

\smallskip
\noindent
If all $A$-sets are jumps to $\cC$, then the sequence
\begin{equation*}%\label{sequence0}
\cZ=C_1,A_2,\ldots,A_{2n_1+1},C_{2n_1+2},\ldots,C_n
\end{equation*}
is $\cA\cC$-useful. Indeed, $C_1$ is a jump as it is a subset of $A_1$
(thus, $\cC_1\subseteq \cA_1$, in the notation we introduced to
define jumps),
and $A_{2n_1+1}$ is a jump by assumption. By Lemma
\ref{lem:useful}, all sets in $\cZ$ are pairwise disjoint.

By assumption, no set $A_i\cap B_j$ is acceptable to any agent
of type $i_2$. Thus, by Lemma \ref{every_second}, some $n_1$($=n_2$) of
different sets in the interval $A_2,\ldots,A_{2n_1+1}$ in $\cZ$ are
acceptable to agents of type $i_2$. We assign
them to those agents. We assign the remaining $n_1$ sets in
$A_2,\ldots,A_{2n_1+1}$
to agents of type $i_1$. Finally, we assign the sets $C_j$, $j=1, 2n_1+2,
\ldots, n$, to agents of type $i_3$. In this way all agents are allocated
sets that are acceptable to them. If there are any unallocated
goods ($\cZ$ does not have to cover all goods in $C$), they can be distributed
to agents so that all agents receive bundles inducing in $C$ a path.
This, yields an assignment that is a $\frac{3}{4}$-sufficient allocation.

So, assume that some of $A$-sets are not jumps. Since some two consecutive
$A$-sets are jumps, there is an integer $i$ such that $A_{i-1}$ and $A_i$
are jumps and $A_{i+1}$ is not.
By Lemma \ref{lem1}, $C_{i+1}$ is a jump and $C_{i+1}\subseteq A_{i+1}$.
The latter property implies that we may relabel sets $A_i$ and $C_i$ so that
$A_{n-1}$ and $A_n$ are jumps and $A_1$ is not. Indeed, under this relabeling
we have $C_1\subseteq A_1$, which we assumed to hold when we started
the proof (of course, we also need to relabel sets $B_i$
correspondingly).

We now consider the case when $C_{n_3}$ or $C_{n_3-1}$ is a jump.
In the first case, we set
\begin{equation*}%\label{sequence1}
\cZ=C_{1},\ldots,C_{n_3},A_{n_3+1},\ldots,A_{n}
\end{equation*}
and, in the second case,
\begin{equation*}%\label{sequence1.5}
\cZ=C_{n},C_{1},\ldots,C_{n_3-1},A_{n_3},\ldots,A_{n-1}.
\end{equation*}
Since $A_n$ and $A_{n-1}$ are jumps, in either case $\cZ$ is $\cA\cC$-useful.
Thus, by Lemma \ref{lem:useful}, its elements are pairwise disjoint.
Reasoning as above, we distribute the sets $A_{n_3+1},\ldots,A_{n}$
(respectively, $A_{n_3},\ldots,A_{n-1}$) among $2n_1$ agents of types
$i_1$ and $i_2$ (since $n=2n_1+n_3$, in each case there are $2n_1$
sets in that sequence), and then allocate the remaining $n_3$ sets, all
of them $C$-sets, to agents of type $i_3$. Clearly, the resulting
allocation (after possibly attaching goods not ``covered'' by $\cZ$ to
appropriate sets in $\cZ$) is $\frac{3}{4}$-sufficient.

Next, let us assume that neither $C_{n_3}$ nor $C_{n_3-1}$ is a jump.
Then let $k$ be the largest integer such that $1\leq k < n_3-1$ and $C_k$
is a jump (such $k$ exists because $C_1$ is a jump). Moreover, let $\ell$
be the smallest integer $j$ such that $n_3 < j \leq n$ and $C_{j}$ is a jump,
if such $j$ exists, or $\ell=n$, otherwise. Observe that the sequence
\begin{equation*}%\label{sequence2}
\cZ=C_{1},\ldots,C_{k},A_{k+1},\ldots,A_{\ell-(n_3-k)},C_{\ell-(n_3-k)+1},\ldots,C_{\ell},A_{\ell+1}\ldots,A_{n}
\end{equation*}
is $\cA\cC$-useful. Indeed, for $k<j<\ell$ the sets $C_j$ are not jumps.
By Lemma \ref{lem0}, the sets $A_j$, where $k<j<\ell$ are jumps. Since
$k<\ell -(n_3-k) < \ell$, $A_{\ell -(n_3-k)}$ is a jump. Moreover, $C_k$
is a jump based by our choice of $k$, $C_\ell$ is a jump by our choice of
$\ell$, if $\ell$ is not assigned to $n$ by default, and $A_n$ is a jump by
assumption. By Lemma \ref{lem:useful}, all sets in $\cZ$ are
pairwise disjoint.

Clearly, if $\ell$ and $n$$(=2n_1+n_3$) have the same parity, then the
numbers of sets in the intervals $A_{k+1},\ldots,A_{\ell-(n_3-k)}$ and
$A_{\ell+1},\ldots, A_n$ are even (the latter is present and needs to be considered
only if $\ell<n$). Moreover, the
total number of sets in these two intervals is $n-n_3=2n_1=2n_2$.
By Lemma \ref{every_second},
we can select among these $A$-sets in $\cZ$ $n_1$($=n_2$) sets that are
acceptable to agents of type $i_2$. We allocate these sets to those agents.
We then allocate the remaining $n_1$ of these $A$-sets to the agents of
type $i_1$. Finally, agents of type $i_3$ receive the $n_3$ $C$-sets of
$\cZ$. Extending $\cZ$ to an allocation of $C$, yields an allocation
that is $\frac{3}{4}$-sufficient.

So, assume now that $\ell$ and $n$ have different parity. In this case, we
define $\ell'$ to be the smallest integer $j$ such that $n_3 < j \leq n-1$
and $C_{j}$ is a jump, if such $j$ exists, or $\ell'=n-1$ otherwise.
Obviously, $\ell'=\ell$ or $\ell'=n-1$, so in both cases $\ell'$ and $n$ have different parity. Since
$A_{n-1}$ is a jump, the same reasoning as above yields that the sequence
\begin{align*}
\cZ&=\\
&C_n,C_{1},\ldots,C_{k},A_{k+1},\ldots,A_{\ell'-(n_3-k)+1},C_{\ell'-(n_3-k)+2},\ldots,C_{\ell'},A_{\ell'+1}\ldots,A_{n-1}
\end{align*}
is $\cA\cC$-useful. Thus, its elements are pairwise disjoint. Moreover, the
$A$-intervals $A_{k+1},\ldots,A_{\ell'-(n_3-k)+1}$ and $A_{\ell'+1},
\ldots, A_{n-1}$ (if $\ell' < n-1$) consist of even numbers of sets
and the total number of sets in these two intervals is $2n_1=2n_2$.
Therefore, we can define $\frac{3}{4}$-sufficient allocation in a similar way
as when $\ell$ and $n$ have the same parity.

\smallskip
\noindent
{\bf Case 2} Some two consecutive $C$-sets are jumps to $\cA$.

\smallskip
\noindent
This case is symmetric to the previous one if we enumerate sets
counter\-clockwise rather than clockwise. Indeed, if a $C$-set $X$ is a jump
to an $A$-set $Y$ under the clockwise enumeration, then $Y$ is a jump to $X$
under the counterclockwise enumeration.

\smallskip
\noindent
{\bf Case 3} No two consecutive $A$-sets are jumps to $\cC$, and no two
consecutive $C$-sets are jumps to $\cA$.

\smallskip
\noindent
Since $C_1\subseteq A_1$, $C_1$
is a jump. It follows that $C_n\cap A_1\neq\emptyset$ as otherwise, $C_n$
would be a jump. In particular, $C_n\not\subseteq A_n$. Applying Lemma
\ref{lem1} repeatedly, yields that for all odd $i$, $1\leq i\leq n$, $C_i
\subseteq A_i$ and for every even $i$, $1\leq i \leq n$, $A_i\subseteq C_i$.
Further, since $C_n\not\subseteq A_n$, it follows that $n$ is even.

We will now show that the splits $\cB$ and $\cC$ are proper
relative to each other. To this end, we need to show that for every $i$ and
$j$, $1\leq i,j\leq n$, $C_i\not\subseteq B_j$ and $B_j\not\subseteq C_i$.
First, let us assume that $C_i\subseteq B_j$, for some $i$ and $j$, $1\leq i,
j\leq n$. If $A_i \subseteq C_i$, then $A_i \subseteq B_j$, a contradiction
as the splits $\cA$ and $\cB$ are proper relative to each other. If $C_i
\subseteq A_i$, then $C_i \subseteq A_i \cap B_j$. Therefore, $A_i\cap B_j$
is acceptable to agents of type $i_3$, a contradiction with the assumptions
of the lemma. Thus, for every $i,j$ such that $1\leq i,j \leq n$, $C_i\not
\subseteq B_j$.

Next, let us assume that $B_j\subseteq C_i$, for some $i$ and $j$,
$1\leq i,j\leq n$. If $i$ is odd, then $C_i\subseteq A_i$. It follows
that $B_j\subseteq A_i$, a contradiction with the fact that $\cA$ and $\cB$ are proper relative to each other.
Thus, $i$ is even,
%\sout{$A_i\subseteq C_i$ and $A_k\cap C_i=\emptyset$ for $k\neq i-1,i,i+1$.
%Further, we recall that we have $B_j\subseteq A_{j-1} \cup A_j$. Thus, $B_j
%\subseteq (C_i\cap A_{j-1}) \cup (C_i\cap A_j)$. If $C_i\cap A_{j-1}=
%\emptyset$, $B_j\subseteq \sout{A_{j-1}}\blue{A_j}$,
%a contradiction. Thus, $C_i\cap A_{j-1} \neq\emptyset$. Similarly, $C_i\cap
%A_{j} \neq\emptyset$, too. Since $A_k \cap C_i=\emptyset$ for $k\neq i-1,i,
%i+1$, it follows that $j=i$ or $j=i+1$.}
so $B_j\subseteq C_i\subseteq C_{i-1}
\cup C_i\cup C_{i+1}\subseteq A_{i-1}\cup A_i\cup A_{i+1}$. We recall that we have
$B_j\subseteq A_{j-1}\cup A_j$. Hence, $j=i$ or $j=i+1$.

Let us assume that $j=i$. Since $i$ is even, $C_{i-1}\subseteq A_{i-1}\subseteq B_{i-1}\cup B_i$.
Moreover, $B_i\subseteq C_i$, so $B_i\cap C_{i-1}=\emptyset$. Thus,
$C_{i-1}\subseteq B_{i-1}$, a contradiction with what we already proved
above. It follows that we must have $j=i+1$. In this case we similarly get, $C_{i+1}\subseteq A_{i+1}\subseteq B_{i+1}\cup B_{i+2}$ and $B_{i+1} \subseteq
C_i$ (so, $B_{i+1}\cap C_{i+1}=\emptyset$). Thus, $C_{i+1}\subseteq B_{i+2}$,
a contradiction as in the case $j=i$. This completes the proof that for every
$i,j$ such that $1\leq i,j \leq n$, $B_j\not\subseteq C_i$, and show that the
splits $\cB$ and $\cC$ are proper relative to each other.

Let us observe that $B_1\subseteq A_n\cup A_1$, $A_n\subseteq C_n$, $B_1\cap C_1\not=\emptyset$
and $C_1\not\subseteq B_1$. It follows that $B_1\subseteq
C_n \cup C_1$. Since the splits $\cB$ and $\cC$ are proper relative to each
other, Lemma \ref{lem:proper} implies that for every $i$, $1\leq i
\leq n$, $B_i\subseteq C_{i-1}\cup C_i$.

\nop{We will now show that the pair $(\cB,\cC)$ of splits is regular. Indeed,
otherwise $C_i\subseteq B_j$, for some $i$ and $j$, $1\leq i,j\leq n$. If $A_i
\subseteq C_i$, then $A_i \subseteq B_j$, a contradiction with the regularity
of $(\cA,\cB)$. If $C_i\subseteq A_i$, then $C_i \subseteq A_i
\cap B_j$. Therefore, $A_i\cap B_j$ is acceptable to agents of type $i_3$,
a contradiction with the assumptions of the lemma.

Because $(\cB,\cC)$ is regular, we may assume without loss of generality
that for every $i$, $1\leq i\leq n$, $B_i\subseteq C_i\cup C_{i+1}$.}

Let us suppose that some set $B_i\cap C_j$, where $1\leq i,j\leq n$, is
acceptable for agents of type $i_2$. Then, there is $i$, $1\leq i\leq n$,
such that $B_i\cap C_{i-1}$ or $B_i\cap C_i$ has this property.

Let us assume the former. We define a $\frac{3}{4}$-sufficient allocation as
follows. We consider the sequence $C_{i},C_{i+1},\ldots,C_{i+n_3}$ of $n_3+1$
$C$-sets. One of the sets $C_{i},C_{i+1}$ contains an $A$-set and that set
is assigned to an agent of type $i_1$. The remaining $n_3$ sets of that sequence
go to agents of type $i_3$. The sequence $B_{i-1}, B_{i-2},
\ldots, B_{i-(2n_1-2)}$ contains at least $n_1-1$ sets acceptable for agents
of type $i_1$ (Lemma \ref{every_second}). We select some $n_1-1$ of such
sets and assign them to agents of type $i_1$. We allocate the remaining
$n_1-1$ sets
from this sequence and the set $B_i\cap C_{i-1}$ to agents of type $i_2$. In
this way, all agents are allocated sets that are acceptable to them. Moreover,
these sets are pairwise disjoint. Indeed, the sets $C_{i+n_3}$ and
$B_{i-(2n_1-2)}$ do not overlap because $i-(2n_1-2) = i+n-2n_1+2 =i+n_3+2$
(we recall that the arithmetic of indices is modulo $n$), and $B_{i+n_3+2}
\subseteq C_{i+n_3+1}\cup C_{i+n_3+2}$. Thus, the assignment is a
$\frac{3}{4}$-sufficient allocation.

The latter case, when $B_i\cap C_{i}$ is acceptable to an agent of type
$i_2$ can be handled similarly by considering sequences $C_{i-1},C_{i-2},
\ldots, C_{i-(n_3+1)}$ and $B_{i+1},B_{i+2},$ $\ldots, B_{i+2n_1-2}$.

Assume now that no set $B_i\cap C_j$ is acceptable for agents of type $i_2$.
By Lemma \ref{every_second} one in every two consecutive sets of the mms-split
$C_1,\ldots,C_n$ is acceptable for agents of type $i_2$. Clearly, $A_n$ is a
jump to the split $\cC$ because $C_1\subseteq A_1$. Moreover, $C_{n_3}\subseteq A_{n_3}$ or
$C_{n_3+1}\subseteq A_{n_3+1}$. Therefore, $C_{n_3+j}$ is a jump for some
$j\in\{0,1\}$. It follows that the sequence
\[
\cZ=C_1,\ldots,C_{n_3+j},A_{n_3+j+1},\ldots,A_n
\]
is $\cA\cC$-useful. By the comment above, at least $j$ sets in $C_1,\ldots,
C_{n_3+j}$ are acceptable to agents of type $i_2$ (since $n_3\geq 1)$. We
assign $j$ of these
sets to agents of type $i_2$. The remaining $n_3$ sets of $C_1,\ldots,C_{n_3+j}$
are assigned to agents of type $i_3$. Next, the sets $A_{n_3+j+1},\ldots,A_n$
are distributed among agents of types $i_1$ and $i_2$. Agents of type $i_2$
take $n_2-j$ sets of this interval (there are this many sets acceptable to
agents of type $i_2$ because $\left\lfloor\frac{n-n_3-j}{2}\right\rfloor=
\left\lfloor\frac{2n_2-j}{2}\right\rfloor\geq n_2-j$). The agents of type
$i_1$ receive the remaining $n_1$ sets of the $A$-interval. By construction,
this allocation is $\frac{3}{4}$-sufficient.
\end{proof}

\noindent
%\begin{theorem}\label{tthree_types_theorem}
\textbf{Theorem 4.9.}
{\it Let $C$ be a cycle of goods and $N$ a set of agents of at most three types.
Then, a $\frac{3}{4}$-sufficient allocation of goods on $C$ to agents in $N$
exists.}

\begin{proof}
We adhere to the notation from Lemma \ref{n_1=n_2}. Further, as there,
we assume that $n_3\geq 1$ (otherwise, we are in the case of agents of two
types settled by Theorem \ref{two_types_theorem}). We now consider
several cases. In each case we assume that none of the cases considered
earlier applies.

\smallskip
\noindent
{\bf Case 1.} Some set $A_i\cap B_j$ is acceptable to some agent.

\noindent
\smallskip
In this case, the existence of a $\frac{3}{4}$-sufficient allocation follows
directly from Lemma \ref{lem_t_types}.
%Let $P'$ be a minimal subpath of the path $A_i\cap B_j$ which is acceptable
%for some agent. Let $P=C-P'$. Since $P'\subseteq A_i\cap B_j\subseteq A_i$,
%the path $P$ contains all parts of the mms-split $A_1,A_2,\ldots,A_n$ of $C$
%except $A_i$. Therefore there is an mms-split of $P$ for agent $i_1$.
%Similarly, agent $i_2$ has an mms-split of $P$. Let $S=\{ i_1,i_2\}$, so
%$R=\{ i_3\}$.
%
%Since $n_1\geq n_2\geq n_3$,
%\[
%\frac{n}{3}=\frac{n_1+n_2+n_3}{3}\geq n_3=|R|,
%\]
%so we are done by Theorem \ref{algorithm}.

From now on, we assume Case 1 does not apply, that is, no set
$A_i\cap B_j$
is acceptable to any agent. It follows that for every $1\leq i,j\leq n$,
$A_i\not\subseteq B_j$ and $B_j\not\subseteq A_i$. Thus, the splits $\cA$
and $\cB$ are proper relative to each other. This means, in particular, that
from now on we may apply Lemma \ref{every_second} to splits $\cA$ and $\cB$,
as well as to $\cB$ and $\cA$.

\smallskip
\noindent
{\bf Case 2.} Some $C$-set is contained in some $A$-set.

\smallskip
\noindent
Given our comments above, Lemma \ref{n_1=n_2} applies and we can assume that
$n_1>n_2\geq n_3$. We can also assume without loss of generality that
$C_{1}\subseteq A_{1}$. Then, the sets $A_n$ and $C_{1}$ are jumps.

If the set $C_{n_3}$ is a jump too, then the %cyclic
sequence
\[
\cZ=C_{1},\ldots,C_{n_3},A_{n_3+1},\ldots,A_n
\]
is $\cA\cC$-useful. By Lemma \ref{every_second} (which also applies now),
at least $\lfloor\frac{n_1+n_2}{2}\rfloor\geq n_2$ of the sets
that form the sequence $A_{n_3+1},\ldots,A_n$ are acceptable for agents of
type $i_2$. We assign $n_2$ of them to these agents. We assign the remaining
$n_1$ sets of this interval to agents of type $i_1$. Agents of type $i_3$
receive the sets $C_{1},\ldots,C_{n_3}$. In this way, we construct a
$\frac{3}{4}$-sufficient allocation.

If $C_{n_3}$ is not a jump, then let $k$ be the largest integer such that
\green{$1\leq k < n_3$} and $C_{k}$ is a jump (such $k$ exists because $C_1$ is a
jump). Moreover, let $\ell$ be the smallest integer $j$ such that $n_3 < j
\leq n$ and $C_{j}$ is a jump, if such $j$ exists, or $\ell=n$ otherwise.
We define
\begin{equation*}%\label{sequence2}
\cZ=C_{1},\ldots,C_{k},A_{k+1},\ldots,A_{\ell-(n_3-k)},C_{\ell-(n_3-k)+1},\ldots,C_{\ell},A_{\ell+1}\ldots,A_{n}
\end{equation*}
and observe that it is $\cA\cC$-useful. Indeed, $C_k$ is a jump by the construction. Further, $k<\ell-(n_3-k)<\ell$ and for $k<j<\ell$ the sets $C_j$ are not
jumps. Thus, $A_{\ell-(n_3-k)}$ is a jump by Lemma \ref{lem0}. Finally, if
$\ell<n$, $C_\ell$ is a jump by the choice of $\ell$ and $A_n$ is a jump by
assumption.

By Lemma \ref{every_second}, there are at least
$\left\lfloor\frac{\ell-n_3}{2}\right\rfloor+
\left\lfloor\frac{n-\ell}{2}\right\rfloor
\geq \frac{\ell-n_3-1}{2}+\frac{n-\ell-1}{2}
=\frac{n_1+n_2}{2}-1>n_2-1$
sets in the intervals $A_{k+1},\ldots,A_{\ell-(n_3-k)}$ and
$A_{\ell+1}\ldots,A_{n}$ (the latter present only if $\ell<n$) that
are acceptable to agents of type $i_2$. We assign $n_2$ of them to these
agents. The remaining sets of these intervals (there are $n_1$ of them)
are assigned to agents of type $i_1$. Finally, agents of type $i_3$ receive
the sets in the intervals $C_{1},\ldots,C_{k}$ and $C_{\ell-(n_3-k)+1},
\ldots,C_{\ell}$. This yields a $\frac{3}{4}$-sufficient allocation.

\smallskip
\noindent
{\bf Case 3.} Some $A$-set is contained in some $C$-set.

\smallskip
\noindent
Without loss of generality we may assume that $A_1\subseteq C_1$. Since no $C$-set is
contained in an $A$-set (that possibility is excluded by Case 2), a simple
inductive argument shows that for every $i=2,\ldots, n$, $A_i$ ends
\emph{strictly} before $C_i$ does. In particular, $A_n$ ends strictly before
$C_n$ does. This implies that $A_1\cap C_n\neq \emptyset$, a contradiction
with $A_1\subseteq C_1$ (and the fact that different sets in a split are
disjoint).

Given Cases 2 and 3, from now on we may assume that the splits
$\cA$ and $\cC$ are proper relative to each other.

\smallskip
\noindent
{\bf Case 4.} Some set $A_i\cap C_j$ is acceptable to agents of type $i_3$.

\smallskip
\noindent
As we noted, we may assume that the splits $\cA$ and $\cC$ proper are relative to
each other. Thus, we can relabel the sets in the split $\cC$ if necessary
so that for all $i$, $A_i\subseteq C_i\cup C_{i+1}$ and, consequently, also
$C_i\subseteq A_{i-1}\cup A_i$.

\smallskip
\noindent
We can assume without loss of generality that $A_1\cap C_1$ or $A_{n}\cap
C_1$ is acceptable to agents of type $i_3$. Let us assume the former. We
allocate $A_1\cap C_1$ and the sets $C_2,C_3,\ldots,C_{n_3}$ to agents of
type $i_3$. Clearly, all these sets are included in $A_1\cup\ldots\cup
A_{n_3}$. The sequence $A_{n_3+1},\ldots, A_n$ of the remaining $A$-sets
contains $n_1+n_2$ sets. By Lemma \ref{every_second}, at least $\left\lfloor
\frac{n_1+n_2}{2} \right\rfloor\geq n_2$ of them are acceptable for agents
of type $i_2$. We allocate any $n_2$ of those sets to agents of type $i_2$
and the remaining $n_1$ of them to the agents of type $i_1$. The allocation
we defined in this way extends to a $\frac{3}{4}$-sufficient allocation for
$C$.

The other case, when $A_{n}\cap C_1$ is acceptable to agents of
type $i_3$, can be handled similarly by considering sequences $C_n,C_{n-1},
\ldots,C_{n-n_3+2}$ and $A_1,A_2,\ldots,A_{n-n_3}$.

\nop{
%Could replace the magenta text above
If $A_{n}\cap C_1$ is acceptable to agents of type $i_3$, we reason
similarly. We allocate $A_{n}\cap C_1$, $C_n, C_{n-1},\ldots, C_{n-(n_3-2)}$
to agents of type $i_3$. Since these sets are contained in $A_n\cup A_{n-1}
\cup A_{n-n_3+1}$, the sets $A_1,\ldots, A_{n_1+n_2}$ are still unallocated
(indeed, observe that $n-n_3+1=n_1+n_2+1$). Thus, they can be allocated among
the agents of type $i_1$ and $i_2$ so that each agent receives a set that is
acceptable to her (the same argument as above applies). Clearly, this
allocation can be modified to yield a $\frac{3}{4}$-sufficient allocation
for $C$.}

Excluding Cases 1-4 means in particular that $\cA$ and $\cC$ are proper relative
to each other and that no set $A_i\cap C_j$ is acceptable to agents of type $i_3$.
Therefore, from now on we may apply Lemma \ref{every_second} also to the
splits $\cA$ and $\cC$.

\smallskip
\noindent
{\bf Case 5.} $n_1\geq\left\lfloor\frac{n}{2}\right\rfloor$.

\smallskip
\noindent
Let $A_i$ be any $A$-set acceptable to an agent of type $i_3$. Such
sets exist by Lemma \ref{every_second}. Without loss of generality we may
assume that $A_n$ is acceptable to an agent of type $i_3$. By Lemma
\ref{every_second}, the sequence $A_1, A_2,\ldots, A_{2n_2}$ contains $n_2$
sets that are acceptable
to agents of type $i_2$. We allocate these sets to them. Next, we consider
the sequence $A_{2n_2+1},A_{2n_2+2},\ldots,A_{n-1}$ of $n-1-2n_2$ sets
that follow $A_{2n_2}$. Since $n_1\geq \lfloor \frac{n}{2}\rfloor=\lceil
\frac{n-1}{2}\rceil$, we have $\lfloor\frac{n-1}{2}\rfloor \geq n_2+n_3-1$. Thus,
$\lfloor\frac{n-1-2n_2}{2}\rfloor\geq n_3-1$. By Lemma \ref{every_second},
it follows that some $n_3-1$ sets in the sequence $A_{2n_2+1},A_{2n_2+2},
\ldots,A_{n-1}$ are
acceptable to agents of type $i_3$. We allocate these sets and the set $A_n$
to agents of type $i_3$. We allocate the remaining $n_1$ sets in $A_1,\ldots,
A_n$ to agents of type $i_1$. This defines a $\frac{3}{4}$-sufficient
allocation for $C$.

\smallskip
\noindent
{\bf Case 6.} Some set $A_i\cap C_j$ is acceptable to agents of type $i_1$ and $n_1>n_2$.

\smallskip
\noindent
By the same argument as in Case 4, we may assume that $C_i\subseteq A_{i-1}\cup A_i$ for all $i$ and that
$A_1\cap C_1$ or
%$A_1\cap C_n$
$A_n\cap C_1$
is acceptable to agents of type $i_1$. If $A_1\cap C_1$ is acceptable to
agents of type $i_1$, we proceed as follows. We allocate the sets $C_2,C_3,
\ldots,C_{n_3+1}$ to agents of type $i_3$. Next, we note that every other
set in the sequence $A_{n_3+2}, A_{n_3+3},\ldots,$ $A_n$
%A_n,A_{n-1},\ldots,A_{n_3+2}$
is acceptable to agents of type $i_2$. Since the sequence contains $n-n_3-1
= n_1+n_2-1\geq 2n_2$ sets (recall that we assume here that $n_1>n_2$), some
$n_2$ of the sets in the sequence are acceptable to agents of type $i_2$. We
allocate these $n_2$ sets to agents of type $i_2$. We allocate the remaining
$n_1-1$ sets in that sequence and the set $A_1\cap C_1$ to agents of type
$i_1$. In this way, all agents get sets that are acceptable to them. Further, we
note that $C_2\cup\ldots \cup C_{n_3+1} \subseteq A_1\cup\ldots\cup A_{n_3+1}$.
It follows that all sets used in the allocation are pairwise disjoint and so
give rise to a $\frac{3}{4}$-sufficient allocation for $C$.

The case when $A_1\cap C_n$ is acceptable to an agent of type $i_1$
follows from the one we just considered by the same argument as that used in
Case 4.

\nop{
The case when $A_n\cap C_1$ is acceptable to an agent of type $i_1$ is
similar. We allocate $n_3$ sets $C_{n-n_3+1}, C_{n-n_3+2}, \ldots, C_n$
to agents of type $i_3$. For the same reason as before, the sequence $A_1,
\ldots, A_{n_1+n_2-1}$ contains at least $n_2$ sets that are acceptable to
agents of type $i_2$. We allocate some $n_2$ of them to these agents. We
allocate the remaining $n_1-1$ of these sets and the set $A_n\cap C_1$ to
agents of type $i_1$. In this way, all agents get sets that are acceptable
to them. Since $A_1\cup\ldots\cup A_{n_1+n_2-1} \subseteq C_1\cup\ldots\cup
C_{n_1+n+2}$ and $n-n_3+1 = n_1+n_2+1$, the sets we allocated are pairwise
disjoint. Thus, they yield a $\frac{3}{4}$-sufficient allocation for $C$.}

\smallskip
\noindent
{\bf Case 7.} Some $C$-set is contained in some $B$-set.

\smallskip
\noindent
By Lemma \ref{n_1=n_2}, we can assume that $n_1>n_2$ (because Case 1 is
excluded, $\cA$ and $\cB$ are proper relative to each other and no set
$A_i\cap B_j$ is acceptable to any agent). We now recall that splits $\cC$
and $\cA$ are proper relative to each other. Moreover, no set $A_i\cap C_j$
is acceptable for agents of type $i_1$ (Case 6 is excluded). Thus, Lemma
\ref{every_second} applies to splits $\cC$ and $\cA$ and implies that
every pair of consecutive sets of the mms-split $C_1,\ldots,C_n$ contains
at least one set acceptable to agents of type $i_1$.

We can assume without loss of generality that $C_{1}\subseteq B_{1}$. It
follows that the sets $B_n$ and $C_1$ are jumps.
If the set $C_{2n_3}$ is a jump, then the %cyclic
sequence
\begin{equation*}%\label{sequence3}
\cZ_1=C_{1},\ldots,C_{2n_3},B_{2n_3+1},\ldots,B_n
\end{equation*}
is $\cB\cC$-useful. Similarly, if the set $C_{2n_3-1}$ is a jump, then the
%cyclic
sequence
\begin{equation*}%\label{sequence4}
\cZ_2=C_{1},\ldots,C_{2n_3-1},B_{2n_3},\ldots,B_{n}
\end{equation*}
is $\cB\cC$-useful.

Assume now that the sets $C_{2n_3-1}$ and $C_{2n_3}$ are not jumps. Let $k$,
$1\leq k<2n_3-1$, be the largest index such that $C_k$ is a jump (since $C_1$
is a jump, $k$ is well defined), and let $\ell$ be the smallest integer $j$
such that $2n_3 < j \leq n$ and $C_{j}$ is a jump, if such $j$ exists, or
$\ell=n$ otherwise.

We observe that the sequence
\begin{equation*}%\label{sequence5}
\cZ_3=C_{1},\ldots,C_{k},B_{k+1},\ldots,B_{\ell-(2n_3-k)},C_{\ell-(2n_3-k)+1},\ldots,C_{\ell},B_{\ell+1}\ldots,B_{n}
\end{equation*}
is $\cB\cC$-useful. In particular the set $B_{\ell-(2n_3-k)}$ is a jump.
Indeed, $k<\ell-(2n_3-k)<\ell$ and for $k<j<\ell$ the sets $C_j$ are not
jumps. By Lemma \ref{lem0}, the sets $B_j$ are jumps.

It can be shown similarly that the sequence
\begin{equation*}%\label{sequence6}
\cZ_4=C_{1},\ldots,C_{k},B_{k+1},\ldots,B_{\ell-(2n_3-k)+1},C_{\ell-(2n_3-k)+2},\ldots,C_{\ell},B_{\ell+1}\ldots,B_{n}
\end{equation*}
is $\cB\cC$-useful.

We now observe that the sequence $\cZ_1$ is a special case of the sequence
$\cZ_3$ for $k=2n_3$, and the sequence $\cZ_2$ is a special case of
the sequence $\cZ_4$ for $k=2n_3-1$. Thus, to complete this case, it is
enough to describe allocations based on sequences $\cZ_3$ and $\cZ_4$.

For $k$ even, we construct an allocation based on the sequence
$\cZ_3$ which, we recall is $\cB\cC$-useful. Since no set $A_i\cap C_j$
is acceptable for agents of type $i_1$, by Lemma \ref{every_second} half of
the sets $C_{1},\ldots,C_{k}$ and half of the sets $C_{\ell-(2n_3-k)+1},
\ldots,C_{\ell}$  (the number of sets in each of these intervals is even),
$n_3$ sets in total, are acceptable for agents of type $i_1$. We allocate
these sets to agents of type $i_1$. Agents of type $i_3$ receive the
remaining $n_3$ sets from these intervals. Since at least every other set
of the intervals $B_{k+1},\ldots,B_{\ell-(2n_3-k)}$ and $B_{\ell+1}\ldots,B_n$
in $\cZ_5$ is acceptable for agents of type $i_1$ (by Lemma \ref{every_second}),
these two intervals contain
at least $\left\lfloor\frac{\ell-2n_3}{2}\right\rfloor+
\left\lfloor\frac{n-\ell}{2}\right\rfloor=
\left\lfloor\frac{\ell}{2}\right\rfloor+
\left\lfloor\frac{n-\ell}{2}\right\rfloor-n_3
\geq \left\lfloor\frac{n}{2}\right\rfloor-1-n_3 \geq n_1-n_3$ sets (because Case 5
is excluded, $\left\lfloor\frac{n}{2}\right\rfloor \geq n_1+1$) that are acceptable to agents of type $i_1$. We allocate $n_1-n_3$
of these sets to the agents of type $i_1$ that have not been allocated any set
in the previous stage. The remaining sets, there are $n_2$ of them, are
allocated to agents of type $i_2$. Clearly, this allocation gives rise to a
$\frac{3}{4}$-sufficient allocation for $C$.

For $k$ odd, we construct an allocation based on the $\cB\cC$-useful
sequence $\cZ_4$. By the same argument as above,
at least $\lfloor \frac{k}{2}\rfloor$ of the sets $C_{1},\ldots,C_{k}$ and at
least $\lfloor\frac{2n_3-k-1}{2}\rfloor$ of the sets $C_{\ell-(2n_3-k)+2},
\ldots,C_{\ell}$ are acceptable for agents of type $i_1$. Since $k$ is odd,
the total number of sets in the two sequences that are acceptable to agents
of type $i_1$ is at least $\frac{k-1}{2}+\frac{2n_3-k-1}{2}=n_3-1$. We select
some $n_3-1$ of these sets and allocate them to $n_3-1$ agents of type $i_1$.
The number of sets remaining in the two sequences is $n_3$. We allocate them
to agents of type $i_3$. Moreover, the intervals $B_{k+1},\ldots,
B_{\ell-(2n_3-k)+1}$ and $B_{\ell+1}\ldots,B_{n}$ contain at least
$\left\lfloor\frac{\ell-2n_3+1}{2}\right\rfloor+
\left\lfloor\frac{n-\ell}{2}\right\rfloor=
\left\lfloor\frac{\ell+1}{2}\right\rfloor+
\left\lfloor\frac{n-\ell}{2}\right\rfloor-n_3
\geq \left\lfloor\frac{n}{2}\right\rfloor-n_3 \geq n_1-n_3+1$ sets
that are acceptable to agents of type $i_1$ (as before, we use the fact that
$\left\lfloor \frac{n}{2}\right\rfloor \geq n_1+1$). We allocate some
$n_1-n_3+1$ of these sets to agents of type $i_1$ that have not been allocated
a set before. Finally, we allocate
the remaining sets in these two sequences, there are $n_2$ of them, to agents
of type $i_2$. This allocation gives rise to a $\frac{3}{4}$-sufficient
allocation for $C$.

From now on, we will assume that no $C$-set is included in any $B$-set.
Reasoning as in Case 3, we can prove that also no $B$-set is included in any
$C$-set. Thus, in the remaining part of the proof we may assume that the splits
$\cB$ and $\cC$ are proper relative to each other.

\smallskip
\noindent
{\bf Case 8.} Some set $B_i\cap C_j$ is acceptable to agents of type
$i_3$.

\smallskip
\noindent
Since the splits $\cB$ and $\cC$ are proper relative to each other, we may
relabel the sets if necessary so that for all $i$, $B_i\subseteq C_i\cup
C_{i+1}$, and that $B_1\cap C_1$ or $B_n\cap C_1$ is acceptable for agents
of type $i_3$. We will consider the first case, that is, that $B_1\cap C_1$
is acceptable to agents of type $i_3$. As in several places before, the other
case can be handled in a similar way.

Suppose first that $n_1>n_2$. Since the conditions formulated in Cases 1 and
6 are not satisfied, by Lemma \ref{every_second} every other set in mms-splits
$\cB$ and $\cC$ is acceptable to agents of type $i_1$ (recall that $\cB$ and
$\cA$ are proper relative to each other and so are $\cC$ and $\cA$).
Thus, there are at least $n_3-1$ sets in the interval $C_2,C_{3},\ldots,
C_{2n_3-1}$ that are acceptable to agents of type $i_1$. We allocate these
sets to those agents. We allocate the remaining $n_3-1$ the sets in this
sequence and the set $B_1\cap C_1$ to agents of type $i_3$. Similarly, we
note that the sequence $B_{2n_3},B_{2n_3+1},\ldots, B_n$ contains at least
$\left\lfloor\frac{n-2n_3+1}{2} \right\rfloor\geq \left\lfloor\frac{n}{2}
\right\rfloor-n_3\geq n_1-n_3+1$ sets that are acceptable to agents of type
$i_1$. We allocate some $n_1-n_3+1$ of
such sets to those agents of type $i_1$ that have not yet been allocated a
set. The remaining sets in the sequence, there are $n_2$ of them, are allocated
to agents of type $i_2$. All sets in the assignment are pairwise disjoint
as $B_{2n_3}\subseteq C_{2n_3}\cup C_{2n_3+1}$. Thus, the assignment we
defined gives rise to a $\frac{3}{4}$-sufficient allocation for $C$.

Thus, assume that $n_1=n_2$. In this case agents of type $i_3$ receive
$B_1\cap C_1$ and the sets $C_2,C_{3},\ldots,C_{n_3}$. Agents of types
$i_1$ and $i_2$ distribute among themselves the sets $B_{n_3+1}, B_{n_3+2},
\ldots, B_n$. This is possible as the number of those sets is even, in fact,
equal to $2n_1$ ($=2n_2$), and every other set of this sequence is acceptable
for agents of type $i_1$. Since $B_{n_3+1}\subseteq C_{n_3+1}\cup C_{n_3+2}$,
these sets are disjoint with the sets assigned to agents of type $i_3$ and,
clearly, they are pairwise disjoint themselves. It follows that this allocation gives rise to a $\frac{3}{4}$-sufficient allocation for $C$.

\smallskip
\noindent
{\bf Case 9.} None of the conditions formulated in Cases 1-8 holds.

\smallskip
\noindent
We may assume that any pair of splits $\cA$, $\cB$, $\cC$ are proper
relative to each other. Moreover, relabeling the sets if necessary,
we may also assume that $A_i\subseteq C_i\cup C_{i+1}$ and $B_i\subseteq
C_i\cup C_{i+1}$. Thus, $A_i$ and $B_i$ overlap and, consequently, $A_i
\subseteq B_i\cup B_{i+1}$ or $A_i\subseteq B_{i-1}\cup B_{i}$. Both cases
can be dealt with similarly, so we assume that $A_i\subseteq B_i\cup B_{i+1}$.

We note that for every $i$
\[
C_{i}\cup C_{i+1}\cup C_{i+2}=(A_{i-1}\cap C_{i})\cup A_i\cup (A_{i+1}\cap B_{i+1})\cup (B_{i+2}\cap C_{i+2}).
\]
The value of $C_{i}\cup C_{i+1}\cup C_{i+2}$ for agents of type $i_3$ is $3$.
The conditions formulated in Cases 1, 4 and 8 do not hold, so the sets
$A_{i-1}\cap C_{i}$, $A_{i+1}\cap B_{i+1}$ and $B_{i+2}\cap C_{i+2}$ are not
acceptable for agents of type $3$. Thus, the value of $A_i$ for agents of
type $i_3$ is larger than $3-3\cdot\frac{3}{4}=\frac{3}{4}$. As $i$ was
arbitrary, it follows that every $A$-set is acceptable for agents of type
$i_3$.

We have $n_2\leq n_1<\left\lfloor\frac{n}{2}\right\rfloor$, where the latter
inequality holds because of Case 5 being excluded. Since at least one in every
two consecutive sets $A_i, A_{i+1}$ is acceptable to agents of type $i_2$, we
assign $n_2$ of such sets to agents of type $i_2$. All remaining $A$-sets are
acceptable for agents of types $i_1$ and $i_3$. Thus, they can be allocated
among them arbitrarily yielding a $\frac{3}{4}$-satisfying allocation.
\end{proof}

\end{document}